\crefname{algorithm}{Alg.}{Algs.}
\crefname{section}{Sec.}{Secs.}
\crefname{definition}{Def.}{Defs.}
\crefname{table}{Tab.}{Tabs.}
\crefname{example}{Ex.}{Exs.}
\crefname{proposition}{Prop.}{Props.}
\crefname{theorem}{Thm.}{Thms.}
\crefname{corollary}{Cor.}{Cors.}
\crefname{appendix}{Appx.}{Appxs.}
\newcommand{\interp}[1]{[\![#1]\!]}
\newcommand{\pinterp}[1]{[\![#1]\!]_{\textsf{Poly}}}
\newcommand{\SPoly}{\textsf{SPoly}}
\newcommand{\CPoly}{\textsf{CPoly}}
\newcommand{\Circ}{\textsf{Circ}}
\newcommand{\ZCirc}{\Circ_{\mathbb{Z}}}
\newcommand{\win}{\textsf{in}}
\newcommand{\wout}{\textsf{out}}
\newcommand{\Bnd}{\textsf{Bnd}}
\newcommand{\pBnd}{\textsf{Bnd}_{\textsf{Poly}}}
\newcommand{\lcm}{\operatorname{lcm}}
\newcommand{\clcm}{\textsf{circLcm}}
\newcommand{\denom}{\operatorname{denom}}
\newcommand{\FindPhase}{\textsf{FindPhase}}
\newcommand{\zpred}[0]{\textsf{Pred}}
\DeclarePairedDelimiter\ceil{\lceil}{\rceil}
\title{Cutoff Theorems for the Equivalence of Parameterized Quantum Circuits (Extended)}
\author{Neil J. Ross}{Dalhousie University, Canada}{neil.jr.ross@dal.ca}{0000-0003-0941-4333}{}
\author{Scott Wesley}{Dalhousie University, Canada}{scott.wesley@dal.ca}{0000-0002-6708-2122}{}
\authorrunning{N. J. Ross and S. Wesley}
\keywords{Quantum Circuits, Parameterized Equivalence Checking}
\begin{document}

\maketitle

\begin{abstract}
    Many promising quantum algorithms in economics, medical science, and material science rely on circuits that are parameterized by a large number of angles.
    To ensure that these algorithms are efficient, these parameterized circuits must be heavily optimized.
    However, most quantum circuit optimizers are not verified, so this procedure is known to be error-prone.
    For this reason, there is growing interest in the design of equivalence checking algorithms for parameterized quantum circuits.
    In this paper, we define a generalized class of parameterized circuits with arbitrary rotations and show that this problem is decidable for cyclotomic gate sets.
    We propose a cutoff-based procedure which reduces the problem of verifying the equivalence of parameterized quantum circuits to the problem of verifying the equivalence of finitely many parameter-free quantum circuits.
    Because the number of parameter-free circuits grows exponentially with the number of parameters, we also propose a probabilistic variant of the algorithm for cases when the number of parameters is intractably large.
    We show that our techniques extend to equivalence modulo global phase, and describe an efficient angle sampling procedure for cyclotomic gate sets.
\end{abstract}

\section{Introduction}

In quantum mechanics, unitary operators describe how the probability distributions of quantum systems evolve over time.
In quantum computing, primitive operators (known as \emph{quantum gates}) are composed in sequence and parallel, to create \emph{quantum circuits} which prepare quantum systems with desirable probability distributions.
By sampling from these distributions, answers can be obtained to many high-value problems, such as those from economics~\cite{Herman2023}, medical science~\cite{Santagati2024,DingHuang2024}, and material science~\cite{MaGovoni2020}.
In these algorithms, an initial guess is made for the correct probability distribution, and then each sample is used to further refine the distribution.
To make this search tractable, the probability distributions are sampled from a family of parameterized quantum circuits, known as \emph{ansatz circuits}.

In practice, the structure of the ansatz circuit is static, so that the parameters only vary the operators which appear within the circuits.
The parameterized operators within ansatz circuits can be understood geometrically as rotations by arbitrary angles.
As a result, the gate sets used to construct ansatz circuits are necessarily infinite.
In contrast, the gate sets implemented by real quantum computers are finite, due to limitations related to error-correction~\cite{EastinKnill2009}.
This means that for each parameter refinement, the ansatz circuit must be recompiled and optimized again.
However, the compilation and optimization of quantum circuits are known to be highly error-prone~\cite{ZhaoMiao2023,HietalaRand2023}, so it is desirable to verify both the equivalence of the optimized circuit to the original circuit, and more generally, the correctness of each optimization.
In both cases, it is necessary to reason equationally about parameterized relations between quantum circuits.

The problem of parameterized equivalence-checking has been well-studied in the context of distributed system.
Given a set of parameters $P$ and two programs parameterized by $P$, say $C_1$ and $C_2$, the parameterized-equivalence checking problem asks whether $C_1( \theta ) = C_2( \theta ), \forall \theta \in P$.
When $P$ is finite, this problem can be solved by simply testing the elements of $P$.
When $P$ is infinite, one approach to this problem is to find a cutoff $n$ for which checking the equivalence of $C_1$ and $C_2$ for $n$ distinct elements of $P$ implies the equivalence of $C_1$ and $C_2$ for all elements of $P$~\cite{EmersonNamjoshi1995}.
Formally, one tries to find an $n \in \mathbb{N}$ such that for all $D \subseteq P$, if $|D| \ge n$, then $\forall \theta \in D \cdot C_1( \theta ) = C_2( \theta )$ implies $\forall \theta \in P \cdot C_1( \theta ) = C_2( \theta )$.
Typically, the choice of $n$ (and sometimes even $D$) will depend on both $C_1$ and $C_2$, and therefore this technique requires domain-specific insights~(see,~e.g.,~\cite{IpDill1993,KaiserKroening2010,KhalimovJacobs2013,AbdullaHH13,NamjoshiTrefler2016,Wesley2022}).
When $n$ becomes intractably large, probabilistic techniques have also been employed~\cite{DemilloLipton1978}.

Cutoff-based techniques have yet to see wide application in the domain of parameterized quantum circuit equivalence-checking.
In 2020, Miller-Bakewell developed a framework which adapts cutoff-based techniques to quantum circuits~\cite{MillerBakewell2020}, though these techniques have yet to be applied in practice.
The key insight of this work was to note that parameterized quantum circuits are analytic for realistic gate sets, and (up to a change of variable) can often be expressed as matrices over complex Laurent polynomials.
The positive and negative degrees of these Laurent polynomials can be over-approximated in an inductive manner, and correspond to a cutoff for parameterized verification.
The main challenge in applying the Miller-Bakewell framework is to identify an appropriate change-of-variables such that all parameterized matrices become matrices over complex Laurent polynomials.
Once this change-of-variable has been identified, further steps may be taken, such as deriving a closed-form equation for the cutoff.
In Miller-Bakewell's paper, the framework was applied to ZX-, ZW-, and ZH-diagrams, though closed-form bounds were not derived.

In this paper, we propose a cutoff-based technique for quantum circuits with arbitrary rotations with linear arguments.
This technique can be understood as an instantiation of the Miller-Bakewell framework, insofar as each parameterized circuit is realized as a matrix over complex Laurent polynomials.
However, the circuits considered in this paper correspond to ZXW-diagrams (i.e., with matrix exponentiation)~\cite{ShaikhWang2023}, which are not addressed in Miller-Bakewell's original work.
We derive closed-form equations for these cutoffs, which depend only on the coefficients of the parameters in the circuits.
Furthermore, we provide an alternative proof for the correctness of the Miller-Bakewell framework, which depends on the distribution of zeros of Laurent polynomials as opposed to polynomial interpolation.
This change in perspective motivates a probabilistic variant of the Miller-Bakewell framework, which is applicable for circuits with intractably large cutoffs.

In \cref{Sect:Circuits}, we provide the syntax and semantics for our circuit language.
In \cref{Sect:Motivation}, we illustrate our technique on a simple real-world example.
In \cref{Sect:Equiv}, we prove a cutoff theorem, and propose a probabilistic variant.
In \cref{Sect:Implementation}, we identify and solve several challenges faced when implementing this technique.

\section{Background}
\label{Sect:Background}

We write $\mathbb{N}$ for the set of natural numbers (including zero), $\mathbb{Z}$ for the set of integers, $\mathbb{Q}$ for the set of rational numbers, $\mathbb{R}$ for the set of real numbers, and $\mathbb{C}$ for the set of complex numbers.
If $z \in \mathbb{C}$, then $\overline{z}$ denotes the complex conjugate of $z$.
If $n \in \mathbb{N}$, then $[n]$ denotes the set $\{ j \in \mathbb{N} : 1 \le j \le n \}$ so that $[0] = \varnothing$.
If $a \in \mathbb{R}$, then $a^{+} = \max( 0, a )$ and $a^{-} = \min( 0, a )$.
 
\subsection{Linear Algebra}

We assume familiarity with the basics of linear algebra.
Otherwise, we refer the reader to an introductory text, such as~\cite{Axler2014}.
Let $M$ be a complex matrix.
We let $M_{j,k}$ denote the entry of $M$ in the $j$-th row and the $k$-th column.
We recall the following definitions.
The \emph{conjugate of $M$} is the matrix $\overline{M}$ such that $\overline{M}_{j,k} = \overline{M_{j,k}}$. The \emph{transpose of $M$} is the matrix $M^T$ such that $(M^T)_{j,k} = M_{k,j}$. The \emph{adjoint of $M$} is the matrix $\overline{M}^T$, and is denoted $M^{\dagger}$.
A matrix $H$ is called \emph{Hermitian} if $H = H^{\dagger}$.
A matrix $U$ is called \emph{unitary} if $U$ is invertible and $U^{-1} = U^{\dagger}$.

\subsection{Algebraic Numbers and Computation}
\label{Sect:Background:Algebraic}

We assume the reader is familiar with field theory, as found in standard abstract algebra textbooks, such as~\cite{DummitFoote2003}.
Let $\mathbb{F}$ be a subfield of $\mathbb{K}$.
An element $\alpha \in \mathbb{K}$ is \emph{algebraic over $\mathbb{F}$} if there exists a polynomial $p \in \mathbb{F}[x]$ such that $p( \alpha ) = 0$.
We write $\mathbb{F}( \alpha )$ to denote the smallest subfield of $\mathbb{K}$ containing both $\mathbb{F}$ and $\alpha$.
If $\deg( p ) = n$, then it can be shown that the elements of $\mathbb{F}( \alpha )$ form  a finite-dimensional $\mathbb{F}$-vector space with basis vectors $\{ 1, \alpha, \alpha^2, \ldots, \alpha^{n-1} \}$.
Furthermore, this vector space forms an $\mathbb{F}$-algebra under the multiplication of $\mathbb{F}( \alpha )$.
In the case where $\mathbb{F} = \mathbb{Q}$ and $\mathbb{K} = \mathbb{C}$, we say that $\alpha$ is an \emph{algebraic number}.
The field of all algebraic numbers is denoted $\mathbb{Q}^{\mathrm{Alg}}$.
Algebraic numbers are ideal from a computational perspective, since elements from $n$-dimensional $\mathbb{Q}$-vector spaces can be represented exactly using only $2n$ integers (i.e., the numerators and denominators).
This is in contrast to floating-point arithemtic, which is inherently inexact.

A special class of algebraic numbers are the \emph{cyclotomic numbers}.
These are solutions to polynomial equations of the form $x^n - 1 = 0$.
In other words, each cyclotomic number is a root of unity.
We let $\zeta_n$ denote the \emph{primitive $n$-th root of unity}, which can be defined analytically as $\zeta_n = e^{i2\pi/n}$.
For example, $\zeta_2 = -1$ and $\zeta_4 = i$.
The smallest subfield of $\mathbb{C}$ containing $\mathbb{Q}$ and all cyclotomic numbers is referred to as the \emph{universal cyclotomic field}.
Many algorithms exist to work efficiently with elements of the universal cyclotomic field, such as~\cite{Bosma1990} and~\cite{Breuer1997}.
It is well-known that many quantum gate sets can be defined exactly using only finite-dimensional sub-fields of the universal cyclotomic field, such as the Clifford+$T$ gate set~\cite{GilesSelinger2013} and its generalizations~\cite{AmyGlaudell2024}.
For this reason, recent work in the verification of quantum programs has advocated for the use of cyclotomic numbers as an exact representation~\cite{Avanzini2024}.

\begin{figure}[t]
    \begin{subfigure}[t]{0.48\textwidth}
        \centering
        \includegraphics[scale=0.75]{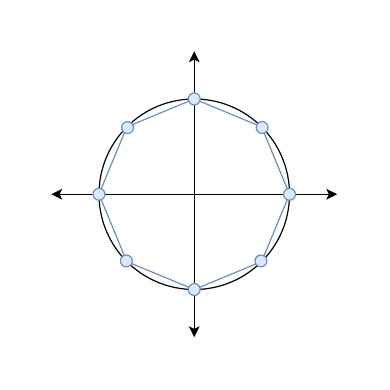}
        \vspace{-2em}
        \caption{The roots of unity in $\mathbb{Q}(\zeta_8)$.}
        \label{Fig:Zeta8}
    \end{subfigure}
    \begin{subfigure}[t]{0.48\textwidth}
        \centering
        \includegraphics[scale=0.75]{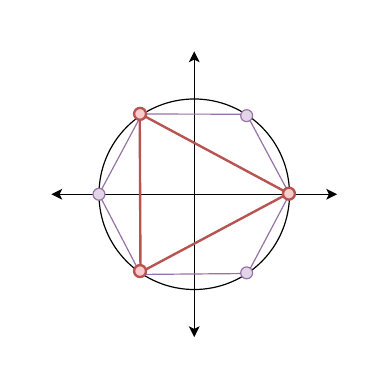}
        \vspace{-2em}
        \caption{$\mathbb{Q}(\zeta_6) = \mathbb{Q}(\zeta_3)$ since $\zeta_6 = -(\zeta_3)^2$.}
        \label{Fig:Zeta3Zeta6}
    \end{subfigure}
    \caption{Geometry of the cyclotomic numbers. The basis vectors of $\mathbb{Q}[\zeta_n]$ form the vertices of a regular $n$-gon on the complex unit circle, with one vertex at $(1, 0)$.}
    \vspace{-1em}
\end{figure}

In this paper, we also utilize analytic properties of cyclotomic numbers.
It follows from Euler's formula that $e^{i\theta} = \cos( \theta ) + i\sin( \theta )$.
We can then think of each cyclotomic number as a point of the complex unit circle (see \cref{Fig:Zeta8}).
It follows geometrically that $\mathbb{Q}(\zeta_n) = \mathbb{Q}(\zeta_{2n})$ whenever $n$ is odd (see \cref{Fig:Zeta3Zeta6}).
Moreover, it can be shown by simple algebraic manipulations that the following equations hold.
{\small\begin{align*}
    \cos( \theta ) &= \frac{e^{i\theta} + e^{-i\theta}}{2}
    &
    \sin( \theta ) &= \frac{e^{i\theta} - e^{-i\theta}}{2i}
\end{align*}\par}\noindent%
If $\theta$ is a rational multiple of $\pi$, say $(q / r) 2\pi$, this means that both $\cos( \theta )$ and $\sin( \theta )$ are elements of $\mathbb{Q}(i,\zeta_r)$.
However, identifying roots of unity can be challenging, since not all elements of norm $1$ in the universal cyclotomic field are roots of unity.
A well-known example is $(3+4i)/5$, which has norm $1$ but is not a root of unity.

\subsection{Multivariate Laurent Polynomials}
\label{Sect:Background:Polynomials}

Let $R$ be a ring.
Then $R[ x_1, \ldots, x_k ]$ denotes the \emph{ring of multivariate polynomials} with coefficients in $R$ and indeterminates $x_1$ through $x_k$.
An arbitrary element $f \in R[ x_1, \ldots, x_k ]$ is of the form $f( x_1, \ldots, x_k ) = \sum_{t \in T} ( a_t \prod_{j=1}^{k} {x_j}^{t_j} )$ for some finite $T \subseteq \mathbb{N}^k \setminus \{ 0 \}^k$ with a non-zero sequence $\{ a_t \}_{t \in T}$ over $R$.
We write $\deg_{x_j}( f )$ for the degree of $f$ in variable $x_j$ and $\deg( f )$ for the total degree of $f$, where $\deg_{x_j}( f ) = \max \{ t_j : t \in T \}$ and $\deg( f ) = \max\{ \sum_{j=1}^{k} t_j : t \in T \}$.
When $R$ is an integral domain, the following hold for all $f, g \in R[ x_1, \ldots, x_k ]$ and $j \in [k]$.
{\small\begin{align*}
    \deg_{x_j}( fg ) &= \deg_{x_j}( f ) + \deg_{x_j}( g )
    &
    \deg( fg ) &= \deg( f ) + \deg( g )
    \\
    \deg_{x_j}( f + g ) &\le \max\{ \deg_{x_j}( f ), \deg_{x_j}( g ) \}
    &
    \deg( f + g ) &\le \max\{ \deg( f ), \deg( g ) \}
\end{align*}\par}\noindent%
It is well known that when $k = 1$ and $R$ is an integral domain, either $f = 0$ or $f$ has at most $\deg( f )$ zeros.
A consequence is that for any $S \subseteq R$, if $f \ne 0$ and $|S| > \deg_{x_1}( f )$, then there exists an $s \in S$ such $f( s ) \ne 0$.
Moreover, if $s$ is sampled uniformly from $S$, then $\Pr( f( x ) = 0 ) \le \deg( f ) / |S|$.
The latter two remarks generalize to multivariate polynomials.
Further generalization to Laurent polynomials are possible, by clearing the denominators.

\begin{theorem}[{Combinatorial Nullstellensatz~\cite{Alon1999}}]
    \label{Thm:Nullstellensatz}
    Let $\mathbb{F}$ be a field and $f$ a polynomial in $\mathbb{F}[ x_1, x_2, \ldots, x_k ]$ with total degree $d_1 + d_2 + \cdots + d_k$ such that the coefficient of $\prod_{j=1}^{k} x_j^{d_j}$ is nonzero in $f$.
    If $S_1, S_2, \ldots, S_k$ are subsets of $\mathbb{F}$ with $|S_j| > d_j$ for each $j$, then there exists $x \in S_1 \times S_2 \times \cdots \times S_k$ such that $f( x ) \ne 0$.
\end{theorem}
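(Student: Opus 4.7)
My plan is to follow the standard two-step approach due to Alon. First I would establish an auxiliary lemma that serves as the ``vanishing'' version of the theorem: if $g \in \mathbb{F}[x_1,\ldots,x_k]$ satisfies $\deg_{x_j}(g) \le t_j$ for every $j$, and $g$ vanishes on every point of $S_1 \times \cdots \times S_k$ with $|S_j| \ge t_j + 1$, then $g$ is the zero polynomial. This lemma goes by induction on $k$: the base case $k=1$ is the univariate fact recalled in \cref{Sect:Background:Polynomials} that a nonzero polynomial of degree at most $t_1$ cannot have $t_1 + 1$ roots. For the inductive step, I would view $g$ as a polynomial in $x_k$ with coefficients $g_i(x_1,\ldots,x_{k-1})$, fix an arbitrary $(a_1,\ldots,a_{k-1}) \in S_1 \times \cdots \times S_{k-1}$, and apply the base case to deduce that each $g_i(a_1,\ldots,a_{k-1})$ vanishes; the inductive hypothesis on $k-1$ variables then forces each $g_i = 0$.

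Second, I would prove the theorem by contradiction: assume $f(x) = 0$ on all of $S_1 \times \cdots \times S_k$. For each $j$, introduce $h_j(x_j) = \prod_{s \in S_j}(x_j - s)$, a monic polynomial of degree $|S_j|$ in $x_j$ alone, with $h_j$ identically zero on $S_j$. Repeatedly rewrite any occurrence of $x_j^{m}$ with $m \ge |S_j|$ by using $x_j^{|S_j|} = x_j^{|S_j|} - h_j(x_j)$ (modulo $h_j$), which replaces it with a polynomial in $x_j$ of degree strictly less than $|S_j|$. Iterating this produces a polynomial $\tilde{f}$ that agrees with $f$ on $S_1 \times \cdots \times S_k$ and satisfies $\deg_{x_j}(\tilde{f}) < |S_j|$ for every $j$. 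Applying the auxiliary lemma with $t_j = |S_j| - 1$ would then force $\tilde{f} = 0$.

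The crux is to derive a contradiction by showing $\tilde{f} \neq 0$. For this I would track the coefficient of the distinguished monomial $\prod_{j=1}^{k} x_j^{d_j}$. The key observation is that each elementary rewrite $x_j^{|S_j|} \mapsto x_j^{|S_j|} - h_j(x_j)$ replaces a monomial $x^\alpha$ with $\alpha_j \ge |S_j|$ by a sum of monomials whose total degree is strictly smaller, since we trade a factor $x_j^{|S_j|}$ for a polynomial in $x_j$ of degree at most $|S_j| - 1$. Hence the reduction weakly decreases total degree and never produces new monomials of the original top total degree $d = d_1 + \cdots + d_k$. Since $d_j < |S_j|$ for every $j$, the monomial $\prod_j x_j^{d_j}$ is already in reduced form, it is untouched by any rewrite, and no rewrite of another top-degree monomial can contribute to it. Therefore its coefficient in $\tilde{f}$ equals its coefficient in $f$, which is nonzero by hypothesis.

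I expect the main obstacle to be precisely this bookkeeping in the last paragraph: one must argue carefully that the rewriting process terminates and that the coefficient of $\prod_j x_j^{d_j}$ is genuinely preserved, rather than being accidentally cancelled by contributions coming from the reduction of other monomials of total degree $d$. The total-degree-strictly-decreases property of each rewrite is what makes this work and is the only nontrivial estimate in the argument; the rest is a clean induction.
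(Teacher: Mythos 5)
The paper does not give a proof of this theorem; it is stated as a citation of Alon~\cite{Alon1999}, so there is no in-paper argument to compare against. Your proposal is a correct proof that follows the standard two-step route of Alon's original paper: the vanishing lemma (his Lemma~2.1) established by induction on the number of variables using the univariate root bound, followed by reduction modulo the polynomials $h_j(x_j)=\prod_{s\in S_j}(x_j-s)$; the bookkeeping you flag as the crux — that the coefficient of $\prod_j x_j^{d_j}$ survives the reduction because $d_j<|S_j|$ makes that monomial already reduced and because every elementary rewrite strictly lowers total degree, so nothing can feed back into a monomial of the maximal total degree $d$ — is exactly right and closes the argument.
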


\begin{theorem}[{DeMillo–Lipton–Schwartz–Zippel Lemma~\cite{DemilloLipton1978,Zippel1979,Schwartz1980}}]
    \label{Thm:DLSZ}
    Let $R$ be an integral domain and $f \in R[ x_1, x_2, \ldots, x_k ]$ a polynomial with total degree $d$.
    For each finite subset $S$ of $R$, if $s_1, s_2, \ldots, s_k$ are sampled at random, both independently and uniformly from $S$, then $\Pr( f( s_1, s_2, \ldots, s_k ) = 0 ) \le d / |S|$.
\end{theorem}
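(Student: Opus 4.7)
The plan is to prove the DeMillo--Lipton--Schwartz--Zippel Lemma by induction on the number of indeterminates $k$, which is the standard and shortest route. The base case $k=1$ is already noted in the paragraph preceding \cref{Thm:DLSZ}: a nonzero univariate polynomial of degree $d$ over an integral domain has at most $d$ roots, so sampling $s_1$ uniformly from $S$ yields $\Pr(f(s_1) = 0) \le d/|S|$.

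For the inductive step, I would isolate the variable $x_k$ and write $f$ in the form
\[
    f(x_1, \ldots, x_k) \;=\; \sum_{i=0}^{d'} g_i(x_1, \ldots, x_{k-1}) \, x_k^{i},
\]
where $d' = \deg_{x_k}(f)$ and the ``leading coefficient'' $g_{d'} \in R[x_1, \ldots, x_{k-1}]$ is nonzero. The degree identities in \cref{Sect:Background:Polynomials} (which use that $R$ is an integral domain) give $\deg(g_{d'}) \le d - d'$.

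Next, I would split the event $\{ f(s_1, \ldots, s_k) = 0 \}$ according to whether the leading coefficient vanishes at $(s_1, \ldots, s_{k-1})$. In the case $g_{d'}(s_1, \ldots, s_{k-1}) = 0$, the inductive hypothesis applied to $g_{d'}$ (a polynomial in $k-1$ variables of total degree at most $d-d'$) gives probability at most $(d-d')/|S|$. In the complementary case, the specialization $f(s_1, \ldots, s_{k-1}, x_k) \in R[x_k]$ is a nonzero univariate polynomial of degree exactly $d'$, so conditioning on $s_1, \ldots, s_{k-1}$ and invoking the base case gives $\Pr(f = 0 \mid s_1, \ldots, s_{k-1}) \le d'/|S|$. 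Summing the two contributions via a union bound yields the claimed bound $(d-d')/|S| + d'/|S| = d/|S|$.

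The main obstacle, such as it is, lies in bookkeeping rather than mathematics: one must be careful that the total-degree bound $\deg(g_{d'}) \le d - d'$ is the right quantity to feed the induction (so that the telescoping sum closes to $d$), and that the independence of the coordinates is used correctly when conditioning on $s_1, \ldots, s_{k-1}$ before sampling $s_k$. Because $R$ is only assumed to be an integral domain rather than a field, one should also note that the degree-additivity and ``at most $d$ roots'' facts require exactly this hypothesis, which is why the theorem is stated at this level of generality.
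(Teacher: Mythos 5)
The paper does not prove \cref{Thm:DLSZ}: it is stated as background in \cref{Sect:Background:Polynomials} and cited to the original sources, so there is no paper proof to compare against. Your argument is the standard induction on the number of variables and is correct. Writing $f = \sum_{i=0}^{d'} g_i(x_1,\ldots,x_{k-1})\,x_k^i$ with $d' = \deg_{x_k}(f)$ and $g_{d'} \ne 0$, the degree-additivity for integral domains gives $\deg(g_{d'}) \le d - d'$; splitting on whether $g_{d'}(s_1,\ldots,s_{k-1}) = 0$, the first event is bounded by $(d-d')/|S|$ via the inductive hypothesis, and the second by $d'/|S|$ by conditioning on $(s_1,\ldots,s_{k-1})$ and invoking the univariate base case on the specialization, which is nonzero of degree exactly $d'$. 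The union bound then closes the telescope at $d/|S|$. The one caveat is that both the theorem as printed and your proof tacitly assume $f \ne 0$ (otherwise the event has probability $1$, which no sensible degree convention makes $\le d/|S|$); this is implicit in your appeal to a nonzero leading coefficient and should be read into the hypothesis.
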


We can further generalize multivariate polynomials to multivariate Laurent polynomials, denoted $R[ x_1, x_1^{-1}, \ldots, x_k, x_k^{-1} ]$.
In this setting, $T \subseteq \mathbb{Z}^k$, so that powers may be positive or negative.
For example, $f( x_1, x_2 ) = x_1x_2 - x_1^{-3} + 5$ is a Laurent polynomial from $\mathbb{Z}[ x_1, x_1^{-1}, x_2, x_2^{-1} ]$.
Since the exponents in a Laurent polynomial may be both positive and negative, each Laurent polynomial has both positive and negative degrees.
We write $\deg_{x_j}^{+}( f )$ for the positive degree of $f$ in variable $x_j$ and $\deg_{x_j}^{-}$ for the negative degree of $f$ in variable $x_j$, where $\deg_{x_j}^{+}( f ) = \max\{ t_j^{+} : t \in T \}$ and $\deg_{x_j}^{-}( f ) = \max\{ -t_j^{-} : t \in T \}$.
Similarly, the total positive degree of $f$ is $\deg^{+}( f ) = \max\{ \sum_{j=1}^{k} t_j^{+} : t \in T \}$.

\section{A Syntax and Semantics for Parameterized Circuits}
\label{Sect:Circuits}

This section begins by reviewing quantum states, quantum operators, and their composition, as in~\cite[Ch.~4]{NielsenChuang2011}.
This background material is then used to give syntax and parameterized semantics for quantum circuits with arbitrary gates, and rotations around arbitrary axes.

\subsection{Quantum States}

The primitive unit of information in quantum computing is the qubit.
As in classical computing, a qubit can be in the states zero and one, denoted $\ket{0}$ and $\ket{1}$.
However, a qubit may also be in a 
\emph{superposition} of the states $\ket{0}$ and $\ket{1}$.
Formally, this means that the state of a qubit $\ket{\psi}$ can be described as $\alpha \ket{0} + \beta \ket{1}$ for any $\alpha \in \mathbb{C}$ and $\beta \in \mathbb{C}$ satisfying $|\alpha|^2 + |\beta|^2 = 1$.
To simplify calculations, we think of $\ket{0}$ and $\ket{1}$ as the standard basis vectors for $\mathbb{C}^2$ to obtain the following vector equation: $\ket{\psi}
    =
    \alpha \ket{0} + \beta \ket{1}
    =
    \alpha \bigl[\begin{smallmatrix}
        1 \\ 0
    \end{smallmatrix}\bigr]
    +
    \beta \bigl[\begin{smallmatrix}
        0 \\ 1
    \end{smallmatrix}\bigr]
    =
    \bigl[\begin{smallmatrix}
        \alpha \\ \beta
    \end{smallmatrix}\bigr]
$.

Of course, the quantum algorithms described in the introduction of this paper require more than a single qubit of information.
Given an $n$-qubit quantum system, there are clearly $2^n$ possible basis states.
For example, when $n = 2$, these are $\ket{00}$, $\ket{01}$, $\ket{10}$, and $\ket{11}$.
As before, an $n$-qubit quantum system may also be in an arbitrary superposition of these basis states with the modulus-squared of the coefficients summing to $1$.
For example, an arbitrary $2$-qubit quantum system has state $\ket{\psi} = \alpha \ket{00} + \beta \ket{01} + \gamma \ket{10} + \rho \ket{11}$ for any $\alpha, \beta, \gamma, \rho \in \mathbb{C}$ satisfying $|\alpha|^2 + |\beta|^2 + |\gamma|^2 + |\rho|^2 = 1$.
This means that the states of an $n$-qubit quantum system correspond to the unit vectors in $\mathbb{C}^{2^n}$.

\subsection{Quantum Operations}

A quantum program evolves the state of a quantum system, after which all qubits are measured.
Given a quantum state $\ket{\psi} = \sum_{j=1}^{2^n} \alpha_j \ket{j}$, the probability of observing state $\ket{j}$ is $|\alpha_j|^2$.
Then the paradigm of quantum computing is to construct an $n$-qubit quantum system whose probability distribution assigns high probability to the correct output.

The evolution of a quantum system is described by a linear transformation of its state space.
Since the laws of physics are reversible, then this transformation must be invertible.
Moreover, the inverse of this transformation should be its conjugate transpose.
This means that operations on $n$-qubit systems correspond to unitary matrices.
Given an $n$-qubit state $\ket{\psi}$ and an $(2^n) \times (2^n)$ dimensional matrix $M$, the state obtained by applying $M$ to $\ket{\psi}$ is $M \ket{\psi}$.
For example, the following four matrices are unitary operations on a qubit.
{\small\begin{align*}
    I &= \begin{bmatrix}
        1 & 0 \\
        0 & 1
    \end{bmatrix}
    &
    X &= \begin{bmatrix}
        0 & 1 \\
        1 & 0
    \end{bmatrix}
    &
    Z &= \begin{bmatrix}
        1 & 0 \\
        0 & -1
    \end{bmatrix}
    &
    Y &= \begin{bmatrix}
        0 & -i \\
        i & 0
    \end{bmatrix}
\end{align*}\par}\noindent%
The matrix $I$ corresponds to a no-op and the matrix $X$ corresponds to a not gate.
The matrix $Z$ can be understood as adjusting the coefficient of $\ket{1}$ by a factor of $(-1)$.
This has no classical analogue.
The gate $Y$ is equal to $(-iZ)X$, and therefore, corresponds to a not gate followed by some non-classical operation.

An important construct in classical computing is the if-then statement.
This can be generalized to quantum computing as follows.
Let $M$ be a unitary transformation on an $n$-qubit quantum system.
Then there exists a unitary transformation $I_{2^n} \oplus M$ on an $(n+1)$-qubit quantum system, such that $I_{2^n} \oplus M$ applies $M$ to the last $n$ qubits of a basis state if and only if the first qubit of the basis is in state $\ket{1}$.
Formally, $I_{2^n}$ is the $(2^n) \times (2^n)$ identity matrix, and $I_{2^n} \oplus M$ is the direct sum of $I_{2^n}$ with $M$.
In terms of matrices, $I_{2^n} \oplus M$ is simply the block diagonal matrix with blocks $I_{2^n}$ and $M$, as shown below.
{\small\begin{align*}
    I_{2^n} \oplus M
    &=
    \begin{bmatrix}
        I_{2^n} & 0 \\
        0 & M
    \end{bmatrix}
    &
    I_2 \oplus X
    &=
    \begin{bmatrix}
        I_2 & 0 \\
        0 & X
    \end{bmatrix}
    =
    \begin{bmatrix}
        1 & 0 & 0 & 0 \\
        0 & 1 & 0 & 0 \\
        0 & 0 & 0 & 1 \\
        0 & 0 & 1 & 0
    \end{bmatrix}
\end{align*}\par}\noindent%
The matrix for $I_2 \oplus X$, known as a \emph{cnot gate}, is given above.
This generalizes the classical conditional statement: if the first bit is in state $\ket{1}$, then apply a not gate to the second bit.

So far, all of the operations discussed are parameter-free.
However, quantum algorithms also make use of rotation gates, which are parameterized by an angle of rotation.
As the name suggests, a rotation gate is defined by its axis-of-rotation.
Formally, each axis $M$ is a Hermitian unitary matrix.
Then one can define, as a generalization of Euler's formula, the rotation $R_M( \theta )$ as follows.
{\small\begin{equation*}
    R_M( \theta ) = e^{-iM\theta/2} = \sum_{n=0}^{\infty} \frac{(-iM\theta/2)^n}{n!} = \cos( -\theta/2 ) I + i \sin( -\theta/2 ) M
\end{equation*}\par}\noindent%
This definition can be extended to $k$ parameters by taking any transformation $f: \mathbb{R}^k \to \mathbb{R}$.
For example, given $f( \theta_1, \theta_2 ) = \theta_1 + \theta_2$, we can define a two parameter rotation $R_M( f )$ where $R_M( f )( \theta_1, \theta_2 ) = R_M( f( \theta_1, \theta_2 ) ) = R_M( \theta_1 + \theta_2 )$.
In this work, we consider the family $\mathcal{F}$ of $k$-variable rational-linear functions with affine translations by rational multiples of $\pi$.
That is, the set $\mathcal{F}$ is defined to be $\{ f( \theta ) = a_1 \theta_1 + a_2 \theta_2 + \cdots + a_k \theta_k + q \pi \mid a_1, a_2, \ldots, a_k, q \in \mathbb{Q} \}$.

The most common rotations in quantum circuits are the $I$-, $X$-, $Y$-, and $Z$-rotations.
However, there are many single qubit rotations not of this form.
For example, given any coefficients $\alpha, \beta, \gamma \in \mathbb{R}$, if $\alpha^2 + \beta^2 + \gamma^2 = 1$, the matrix $\alpha X + \beta Y + \gamma Z$ is also a Hermitian unitary matrix.
Note that the matrix $R_I( -2\theta )$ is typically referred to as a \emph{global phase gate}, rather than an $I$-rotation.

\begin{example}[Real Amplitude Ansatz Circuit]
    In quantum machine learning, convolutional layers are often implemented using the real amplitude ansatz circuit~\cite{AbbasSutter2021,ArthurDate2022,DekelFrankel2023,KannoNakamura2024,YoffeEntin2024}.
    This circuit is composed from one or more  layers of $Z$-rotations, each followed by a layer of controlled-not gates.
    Since $Z$-rotations do not commute with the targets of controlled-not gates, then these layers can interact in non-trivial ways.
    The choice of parameter to each $Z$-rotation is treated as a weight in the quantum machine learning model.
\end{example}

\subsection{Composing Quantum Operations}

Just like classical operations, quantum operations can also be composed in sequence and in parallel.
Of the two, sequential composition is the simplest to describe.
Assume that both $M$ and $N$ are operations on an $n$-qubit quantum system.
If $N$ is applied to an $n$-qubit system $\ket{\psi}$, then the state $N \ket{\psi}$ is obtained.
If $M$ is then applied to this intermediate state, then the state $M (N \ket{\psi})$ is obtained.
This is equivalent to applying $MN$ to $\ket{\psi}$.
In other words, the sequential composition of quantum operations corresponds to matrix multiplication.

Now let $M$ denote a quantum operation on an $m$-qubit quantum system and $N$ denote a quantum operation on an $n$-qubit quantum system.
Intuitively, the parallel composition of $M$ and $N$ should act on the first $m$-qubits by $M$, and the last $n$-qubits by $N$.
However, this composition must also respect superposition, through a property known an \emph{bilinearity}.
To compute this new operation, the \emph{Kronecker tensor product} is required, which is denoted $\otimes$ and defined as follows for matrices of any dimension.
{\small\begin{equation*}
    \begin{bmatrix}
        c_{1,1} & c_{1,2} & \cdots & c_{1,n} \\
        c_{2,1} & c_{2,2} & \cdots & c_{2,n} \\
        \vdots & \vdots & \ddots & \vdots \\
        c_{m,1} & c_{m,2} & \cdots & c_{m,n}
    \end{bmatrix}
    \otimes
    M
    =
    \begin{bmatrix}
        c_{1,1} M & c_{1,2} M & \cdots & c_{1,n} M \\
        c_{2,1} M & c_{2,2} M & \cdots & c_{2,n} M \\
        \vdots & \vdots & \ddots & \vdots \\
        c_{m,1} M & c_{m,2} M & \cdots & c_{m,n} M
    \end{bmatrix}
\end{equation*}\par}\noindent%
It follows that $(M \otimes N)(\ket{\psi} \otimes \ket{\varphi}) = (M\ket{\phi}) \otimes (N\ket{\varphi})$ as desired.

\subsection{Quantum Circuits}

Quantum circuits are constructed from primitive gates, under sequential and parallel composition.
In this section, we first define what we take to be primitive gates, and then define what it means to be a circuit over this gate set.
The distinction between syntax and semantics is emphasized.
In both cases, we introduce inductive principles which will be used later in this paper.
Formally, these circuits correspond to diagrams in a certain PROP category~\cite{BaezCoya2010}, with semantics given functorially~\cite{Lawvere1963}, though this is only used to prove the inductive principles used throughout the paper, and to establish that our semantics and circuit transformations are well-defined (see~\cref{Appendix:Syntax} for more details).

In what follows, $C( - )$ is a function symbol used to denote conditional control.
A gate set is a collection of basic gates, closed under conditional control.
A basic gate is a complex matrix (e.g. unitary operations, state preparation, post-selection) or  parameterized rotation. 
Formally, we take some set $\mathcal{G}$ of complex matrices and some set $\mathcal{H}$ of Hermitian unitary matrices.
The associated gate set, denoted $\Sigma( \mathcal{G}, \mathcal{H} )$ is defined inductively as follows.
\begin{itemize}
    \item If $G \in \mathcal{G}$, then $G \in \Sigma( \mathcal{G}, \mathcal{H} )$.
    \item If $M \in \mathcal{H}$, then $R_M( f ) \in \Sigma( \mathcal{G}, \mathcal{H} )$ for each parameterization $f \in \mathcal{F}$.
    \item If $G \in \Sigma( \mathcal{G}, \mathcal{H} )$ and $G$ is unitary, then $C( G ) \in \Sigma( \mathcal{G}, \mathcal{H} )$.
\end{itemize}
We let $\win( - )$ and $\wout( - )$ denote the input and output arities of these gates, which are defined as follows.
\begin{itemize}
    \item If $G \in \mathcal{G}$ is $(2^n) \times (2^m)$, then $\win( G ) = n$ and $\wout( G ) = m$.
    \item If $M \in \mathcal{H}$ is $(2^n) \times (2^n)$ and $f \in \mathcal{F}$, then $\win( R_M( f ) ) = \wout( R_M( f ) ) = n$.
    \item If $G \in \Sigma( \mathcal{G}, \mathcal{H} )$, then $\win( C( G ) ) = \win( G ) + 1$ and $\wout( C( G ) ) = \wout( G ) + 1$.
\end{itemize}
We let $\interp{-}$ denote the parameterized semantics of each gate, which are defined as expected.
\begin{itemize}
    \item If $G \in \mathcal{G}$, then $\interp{G}( \theta ) = G$.
    \item If $M \in \mathcal{H}$ and $f \in \mathcal{F}$, then $\interp{R_M(f)}( \theta ) = \cos( -f(\theta) / 2 ) I + i \sin( -f(\theta) / 2 ) M$.
    \item If $G \in \Sigma( \mathcal{G}, \mathcal{H} )$ with $G$ an $(2^n) \times (2^n)$ unitary, then $\interp{C(G)}(\theta) = I_{2^n} \oplus \interp{G}(\theta)$.
\end{itemize}
Since this gate set is defined inductively, then to prove that every gate satisfies a predicate $P$, it suffices to use well-founded induction~(see~\cref{Appendix:Syntax}).

\begin{restatable}{proposition}{gateprop}
    \label{Prop:GateInd}
    Assume that a predicate $P$ on $\Sigma( \mathcal{G}, \mathcal{H} )$ satisfies the following.
    \begin{itemize}
    \item \textbf{Base Case (1).}
          $\forall G \in \mathcal{G}, P( G )$.
    \item \textbf{Base Case (2).}
          $\forall M \in \mathcal{H}, \forall f \in \mathcal{F}, P( R_M( f ) )$.
    \item \textbf{Control Induction.}
          $\forall G \in \Sigma( \mathcal{G}, \mathcal{H} )$, $G$ unitary and $P( G )$ implies $P( C( G ) )$.
    \end{itemize}
    Then $P( G )$ holds for each $G \in \Sigma( \mathcal{G}, \mathcal{H} )$.
\end{restatable}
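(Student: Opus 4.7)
The plan is to prove this as a direct instance of well-founded induction on the inductive definition of $\Sigma(\mathcal{G}, \mathcal{H})$. The gate set has been defined as the smallest set closed under three generation rules (atomic gates in $\mathcal{G}$, rotations $R_M(f)$ for $M \in \mathcal{H}$ and $f \in \mathcal{F}$, and conditional control $C(G)$ on unitary $G$), so the associated induction principle states precisely that any predicate closed under those three rules holds on the entire set. The three hypotheses of the proposition (Base Case (1), Base Case (2), and Control Induction) are arranged to match exactly these three rules, so the proof is essentially bookkeeping once the inductive definition is in place.

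Concretely, I would introduce the set $S = \{G \in \Sigma(\mathcal{G}, \mathcal{H}) : P(G) \text{ holds}\}$ and aim to prove $S = \Sigma(\mathcal{G}, \mathcal{H})$. The inclusion $S \subseteq \Sigma(\mathcal{G}, \mathcal{H})$ is immediate from the definition of $S$. For the reverse inclusion, I would show that $S$ itself satisfies the three closure conditions defining $\Sigma(\mathcal{G}, \mathcal{H})$: closure under the two base cases follows directly from Base Case (1) and Base Case (2), while closure under the $C(-)$ constructor follows from the Control Induction hypothesis, whose side-condition ($G$ unitary) matches the side-condition on the constructor itself. The minimality of $\Sigma(\mathcal{G}, \mathcal{H})$ as a set closed under these rules then yields $\Sigma(\mathcal{G}, \mathcal{H}) \subseteq S$, completing the argument.

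An equivalent, more concrete presentation is to assign to each $G \in \Sigma(\mathcal{G}, \mathcal{H})$ a natural-number depth $d(G)$ given by $d(G) = 0$ whenever $G \in \mathcal{G}$ or $G = R_M(f)$, and $d(C(G)) = d(G) + 1$, then proceed by strong induction on $d(G)$ with a case split on the outermost form of $G$. Each of the three cases is discharged by invoking the corresponding hypothesis. There is no substantive mathematical obstacle here; the only care required is to ensure that the inductive definition is genuinely well-founded, which is guaranteed by the formal setup in the referenced appendix (namely the free PROP construction from which $\Sigma(\mathcal{G}, \mathcal{H})$ is obtained). The hardest aspect, if anything, is cosmetic: choosing a presentation (closure-under-rules versus induction-on-depth) that aligns with the authors' conventions elsewhere in the paper, so that later proofs invoking \cref{Prop:GateInd} can do so uniformly.
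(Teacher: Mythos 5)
Your proposal is correct and takes essentially the same route as the paper. The paper proves this via well-founded induction on the subterm ordering $\succeq_\Sigma$ of the ground-term algebra $T(\Sigma_G)$, establishing well-foundedness through a level function $\textsf{lv}$ that tracks the least iteration stage at which each term first appears---this is exactly your depth function $d(G)$ in disguise, and the three-way case split you sketch is the case split the paper performs once it has the well-founded ordering in hand; your alternative closure-and-minimality presentation is a standard equivalent formulation of the same structural induction.
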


\begin{figure}[t]
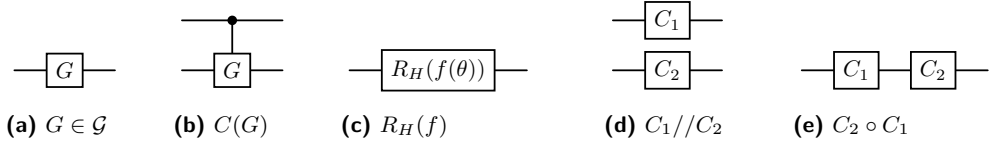

    \begin{subfigure}[b]{0.15\textwidth}
        \input{circuits/gate}
        \caption{$G \in \mathcal{G}$}
    \end{subfigure}
    \begin{subfigure}[b]{0.15\textwidth}
        \input{circuits/ctrl}
        \caption{$C( G )$}
    \end{subfigure}
    \begin{subfigure}[b]{0.24\textwidth}
        \input{circuits/rot}
        \caption{$R_H( f )$}
    \end{subfigure}
    \begin{subfigure}[b]{0.17\textwidth}
        \input{circuits/parallel}
        \caption{$C_1 // C_2$}
    \end{subfigure}
    \begin{subfigure}[b]{0.26\textwidth}
        \input{circuits/sequential}
        \caption{$C_2 \circ C_1$}
    \end{subfigure}
    \caption{The graphical language for circuits in $\Circ( \mathcal{G}, \mathcal{H} )$.}
    \label{Fig:CircLang}
    \vspace{-1em}
\end{figure}

Circuits are then constructed from the elements of $\Sigma( \mathcal{G}, \mathcal{H} )$ through sequential and parallel composition.
We let  $( \circ )$ denote sequential composition and $( // )$ denote parallel composition, to distinguish between syntactic compositions and their semantic counterparts.
Of course, sequential composition requires that the outputs of the first sub-circuit matches the inputs of the second sub-circuit.
To handle this, we extend $\win( - )$ and $\wout( - )$ as follows.
\begin{itemize}
    \item $\win( C_1 // C_2 ) = \win( C_1 ) + \win( C_2 )$ and $\wout( C_1 // C_2 ) = \wout( C_1 ) + \wout( C_2 )$.
    \item $\win( C_2 \circ C_1 ) = \win( C_1 )$ and $\wout( C_2 \circ C_1 ) = \wout( C_2 )$.
\end{itemize}
Then $\Circ( \mathcal{G}, \mathcal{H} )$, the family of circuits over the gate set $\Sigma( \mathcal{G}, \mathcal{H} )$, is defined inductively as follows where $\epsilon$ denotes the \emph{empty} wire with $\win( \epsilon ) = \wout( \epsilon ) = 1$.
\begin{itemize}
    \item If $C \in \Sigma( \mathcal{G}, \mathcal{H} )$, then $C \in \Circ( \mathcal{G}, \mathcal{H} )$.
    \item If $C_1, C_2 \in \Circ( \mathcal{G}, \mathcal{H} )$, then $C_1 // C_2 \in \Circ( \mathcal{G}, \mathcal{H} )$.
    \item If $C_1, C_2 \in \Circ( \mathcal{G}, \mathcal{H} )$ and $\win( C_2 ) = \wout( C_1 )$, then $C_2 \circ C_1 \in \Circ( \mathcal{G}, \mathcal{H} )$.
\end{itemize}
A graphical language for $\Circ( \mathcal{G}, \mathcal{H} )$ is given in \cref{Fig:CircLang}.
The semantic map $\interp{-}$ extends to these circuits as expected: $\interp{C_2 // C_1}( \theta ) = \interp{C_2}( \theta ) \otimes \interp{C_1}( \theta )$, $\interp{C_2 \circ C_1}( \theta ) = ( \interp{C_2}( \theta ) ) ( \interp{C_1}( \theta ) )$, and $\interp{\epsilon} = I_2$.
As with quantum gates, an inductive principle also holds for quantum circuits.

\begin{restatable}{proposition}{circprop}
    \label{Prop:CircInd}
    Assume that a predicate $P$ on $\Circ( \mathcal{G}, \mathcal{H} )$ satisfies the following.
    \begin{itemize}
    \item \textbf{Base Case (1).}
          $P( \epsilon )$. 
    \item \textbf{Base Case (2).}
          $\forall G \in \Sigma( \mathcal{G}, \mathcal{H} ), P( G )$.
    \item \textbf{Parallel Induction}.
          If $C_1, C_2 \in \Circ( \mathcal{G}, \mathcal{H} )$ such that $P( C_1 )$ and $P( C_2 )$, then $P( C_1 // C_2 )$.
    \item \textbf{Sequential Induction.}
          If $C_1, C_2 \in \Circ( \mathcal{G}, \mathcal{H} )$ such that $\win( C_2 ) = \wout( C_1 )$ with $P( C_1 )$ and $P( C_2 )$, then $P( C_2 \circ C_1 )$.
    \end{itemize}
    Then $P( C )$ holds for each $C \in \Circ( \mathcal{G}, \mathcal{H} )$.
\end{restatable}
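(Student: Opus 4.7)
The plan is to prove this as a routine structural induction, since the definition of $\Circ(\mathcal{G}, \mathcal{H})$ given just above is purely inductive: it is the smallest set containing the base elements ($\epsilon$ and every $G \in \Sigma(\mathcal{G}, \mathcal{H})$) and closed under parallel composition ($//$) and arity-matching sequential composition ($\circ$). Given a predicate $P$ satisfying the four hypotheses, I would consider the set $S = \{ C \in \Circ(\mathcal{G}, \mathcal{H}) : P(C) \}$. The two base-case hypotheses say that $S$ contains all base elements, and the two induction hypotheses say that $S$ is closed under the two constructors (with $\win(C_2) = \wout(C_1)$ in the sequential case, which is exactly the constraint imposed in the definition of $\Circ(\mathcal{G}, \mathcal{H})$). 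By the minimality of $\Circ(\mathcal{G}, \mathcal{H})$, we get $S = \Circ(\mathcal{G}, \mathcal{H})$, which is the desired conclusion.

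Operationally, I would carry out the induction by assigning a size $|C| \in \mathbb{N}$ to each circuit, namely $|\epsilon| = 0$, $|G| = 0$ for $G \in \Sigma(\mathcal{G}, \mathcal{H})$, and $|C_1 // C_2| = |C_2 \circ C_1| = |C_1| + |C_2| + 1$, and then do strong induction on $|C|$. The base step $|C| = 0$ forces $C$ to be either $\epsilon$ or an element of $\Sigma(\mathcal{G}, \mathcal{H})$, which is handled by Base Case (1) or Base Case (2) respectively. For the inductive step, any $C$ with $|C| > 0$ decomposes as $C_1 // C_2$ or $C_2 \circ C_1$ with $|C_1|, |C_2| < |C|$, so the induction hypothesis gives $P(C_1)$ and $P(C_2)$, and the Parallel or Sequential Induction hypothesis then yields $P(C)$.

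The one genuine subtlety is that $\epsilon$ is introduced in the prose surrounding the inductive definition rather than as an explicit clause of the definition itself; its status as a base element of $\Circ(\mathcal{G}, \mathcal{H})$ is inherited from the PROP-categorical formalisation deferred to \cref{Appendix:Syntax}, where $\epsilon$ is the identity morphism on the one-wire object and is therefore a distinguished generator. Once this is made explicit, together with the fact that decomposition of composite circuits into immediate subcircuits is unique up to the categorical axioms, the induction goes through without complication. I do not expect any real obstacle beyond this bookkeeping: the proof is entirely an unpacking of the inductive/PROP definition of $\Circ(\mathcal{G}, \mathcal{H})$, and the four clauses of the proposition have been chosen to mirror exactly that definition.
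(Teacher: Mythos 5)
Your proof is correct and follows essentially the same route as the paper: structural induction on the term structure of circuits. The paper formalizes this as well-founded induction on the immediate-subterm relation over $T(\Sigma_C)$ restricted to $\Circ(\mathcal{G},\mathcal{H})$, which is interchangeable with your minimality-of-the-inductively-defined-set argument and with your strong induction on a size measure. The one technical detail that the paper handles explicitly and that your operational version glides past is this: the appendix does \emph{not} define $\Circ(\mathcal{G},\mathcal{H})$ directly as an inductively generated set, but rather as $\{ C \in T(\Sigma_C) : \mathsf{type}(C) \neq \bot \}$, i.e., the well-typed terms over the raw term algebra. When you say ``the induction hypothesis gives $P(C_1)$ and $P(C_2)$,'' you are implicitly appealing to the fact that if $C_2 \circ C_1$ (or $C_1 // C_2$) is well-typed, then $C_1$ and $C_2$ are also well-typed, so that they lie in $\Circ(\mathcal{G},\mathcal{H})$ and the induction hypothesis actually applies to them. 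The paper isolates this as a separate lemma (\cref{Lemma:WireComp}) and uses it in the sequential and parallel cases; your minimality framing sidesteps the issue because it reads the main-text definition at face value, but it is worth noting since the paper's formal definition takes the other route. Your observation about the status of $\epsilon$ is on target: in the appendix, $\epsilon$ is a distinguished nullary function symbol of $\Sigma_C$ with $\mathsf{type}(\epsilon) = (1,1)$, so it is a genuine base term and Base Case (1) is needed.
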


\section{A Motivating Example: Circuit Compilation}
\label{Sect:Motivation}

\begin{figure}[t]
    \begin{align*}
        \input{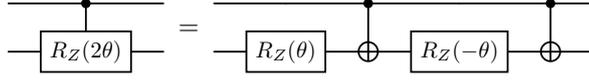}
    \end{align*}
    \vspace{-1.5em}
    \caption{A parameterized equality used to compile controlled rotations.}
    \label{Fig:MotivatingEx}
    \vspace{-1em}
\end{figure}

We now discuss the verification of a concrete circuit equation. The example is simple but illustrative of the techniques we will develop in the next section. Consider the equation in \cref{Fig:MotivatingEx}. A naive approach to establishing this equation is to evaluate the right-hand side to obtain the following operator.
{\small\begin{equation*}
    (I \oplus X)
    (I \otimes R_Z( -\theta ))
    (I \oplus X)
    (I \otimes R_Z( \theta ))
    =
    \begin{bmatrix}
        R_Z( -\theta ) R_Z( \theta ) & 0 \\
        0 & X R_Z( -\theta ) X R_Z\left( \theta \right)
    \end{bmatrix}
\end{equation*}\par}\noindent%
Then, by further simplification, we obtain the following equations.
{\small\begin{align*}
    R_Z(-\theta) R_Z(\theta) &=
    \begin{bmatrix}
        e^{-i\theta/2} e^{i\theta/2} & 0 \\
        0 & e^{i\theta/2} e^{-i\theta/2}
    \end{bmatrix}
    &
    X R_Z( -\theta ) X R_Z( \theta ) &=
    \begin{bmatrix}
        e^{-i\theta/2} e^{-i\theta/2} & 0 \\
        0 & e^{i\theta/2} e^{i\theta/2}
    \end{bmatrix}
\end{align*}\par}\noindent%
Using the identities $e^{a}e^{b} = e^{a+b}$ and $e^0=1$, it then follows that $X R_Z( -\theta ) X R_Z( \theta ) = R_Z( 2\theta )$ and $R_Z(-\theta) R_Z(\theta) = I$.
Consequently,
{\small\begin{equation*}
    (I \oplus X)
    (I \otimes R_Z( -\theta ))
    (I \oplus X)
    (I \otimes R_Z( \theta ))
    =
    \begin{bmatrix}
        I & 0 \\
        0 & R_Z( 2\theta )
    \end{bmatrix}
    =
    (I \oplus R_Z( 2\theta )).
\end{equation*}\par}\noindent%
This establishes the equation in \cref{Fig:MotivatingEx} for all choices of $\theta$.
However, this proof depends on the parameterized equations $e^{a+b} = e^{a} e^{b}$ and $e^0=1$. In general, it is challenging to find a complete set of parameterized relations for a parameterized gate set~\cite{MillerBakewellThesis}. Moreover, given an arbitrary set of complete relations, the problem of deciding if two expressions are equivalent is known to be undecidable~\cite{Novikov1955}. For these reasons, we adopt a different approach.

A perhaps surprising result is that all parameterized circuit equalities can be established by checking only a finite number of rotation angles.
In other words, if the equality in \cref{Fig:MotivatingEx} did not hold, then a counterexample could be found by checking only a fixed number of instances.
To do this, we first convert the equality into a family of polynomials, such that the equality holds if and only if all of the polynomials are identically zero.
We then find an integer $n$ such that each of the polynomials has degree at most $n$.
Since non-zero polynomials of degree $n$ have at most $n$ roots, then either the polynomial is zero and will evaluate to zero on $n + 1$ angles, or the polynomial is non-zero and at least one of the $n + 1$ angles yields a non-zero result.

To obtain the desired polynomials, we apply the change-of-variable $e^{-i\theta/2} \mapsto z$. 
Under this change of variable, the following equalities hold.
{\small\begin{align*}
    R_Z(-\theta) R_Z(\theta) &=
    \begin{bmatrix}
        z^{-1} z & 0 \\
        0 & z z^{-1}
    \end{bmatrix} =
    \begin{bmatrix}
        1 & 0 \\
        0 & 1
    \end{bmatrix} =
    z^{-2}
    \begin{bmatrix}
        z^2 & 0 \\
        0 & z^2
    \end{bmatrix}
    \\
    X R_Z( -\theta ) X R_Z( \theta ) &=
    \begin{bmatrix}
        z^{-1} z^{-1} & 0 \\
        0 & z z
    \end{bmatrix} =
    \begin{bmatrix}
        z^{-2} & 0 \\
        0 & z^2
    \end{bmatrix} =
    z^{-2}
    \begin{bmatrix}
        1 & 0 \\
        0 & z^4
    \end{bmatrix}
\end{align*}\par}\noindent%
Continuing in this fashion, we can find that each matrix entry on the left-hand side or the right-hand side of \cref{Fig:MotivatingEx} has degree at most four.
Then the difference between the left-hand side and the right-hand side also has degree at most four.
Note that the $z^{-2}$ terms correspond to a removable singularity at $z = 0$, which does not fall on the complex unit circle, and can be safely ignored.
Since degree four polynomials have at most four roots, then it suffices to check the equality in \cref{Fig:MotivatingEx} using only $5$ angles from $[0, 4\pi)$.
For example, consider the five angles $\theta_j = j\pi/2$ for $0\leq j\leq 4$. It is easily verified that $( I \oplus R_Z( 2\theta_j ) ) = (I \oplus X) (I \otimes R_Z( -\theta_j )) (I \oplus X) (I \otimes R_Z( \theta_j ))$ for all $0\leq j \leq 4$.
Then $f( \theta ) = (I \oplus R_Z( 2\theta )) - ( I \oplus X ) (I \otimes R_Z( -\theta )) (I \oplus X) (I \otimes R_Z( \theta ))$ has at least five roots.
Since each entry of $f( \theta )$ has degree at most four, then $f$ is identically zero and \cref{Fig:MotivatingEx} must hold.
Note that the angles were sampled from $[0,4\pi)$ since $e^{-i\theta_j/2}$ has period $4\pi$.

While this example was admittedly simplistic, we will see in the next section, that the technique generalizes to all parameterized circuits.
In particular, just as in this example, we will see that computing the polynomials is inconsequential.
Instead, it will suffice to find an efficient procedure which provides a reason bound on each degree.

\section{Equivalence Checking Techniques}
\label{Sect:Equiv}

In this section, we consider parameterized quantum circuits where all coefficients are from $\mathbb{Z}$, rather than $\mathbb{Q}$.
We denote these circuits $\ZCirc( \mathcal{G}, \mathcal{H} )$.
It is first shown that up to a change of variable, these circuits admit semantics as matrices over the ring of Laurent polynomials $\mathbb{C}[ z_1, z_1^{-1}, \ldots, z_k, z_k^{-1} ]$.
This is then combined with \cref{Thm:Nullstellensatz} to establish a cutoff-based equivalence checking theorem for these circuits.
Using \cref{Thm:DLSZ}, a probabilistic variant is also obtained.
In \cref{Sect:Implementation}, we show how these results generalize back to parameterized circuits with rational coefficients.

\subsection{Polynomial Semantics}
\label{Sect:Equiv:Poly}

This section shows that, up to a change of variable, each circuit $\Circ( \mathcal{G}, \mathcal{H} )$ has semantics given by a matrix with entries corresponding to complex Laurent polynomials.
Moreover, these polynomials are shown to have degrees bounded by certain sums of the coefficients which appear in the circuit.
It follows that the techniques used in \cref{Sect:Motivation} can be generalized to all integral circuits in $\ZCirc( \mathcal{G}, \mathcal{H} )$.

As a first step, a new semantic interpretation $\pinterp{-}$ is provided for $\ZCirc( \mathcal{G}, \mathcal{H} )$, which interprets each circuit in $\ZCirc( \mathcal{G}, \mathcal{H} )$ as a polynomial over $\mathbb{C}[ z_1, z_1^{-1}, \ldots, z_k, z_k^{-1} ]$.
Since parameters only appear in trigonometric terms, then a first step is to give Laurent polynomials which abstract the trigonometric terms.
Let $\alpha \in \mathbb{Z}^k$, $q \in \mathbb{Q}$, and $f( \theta ) = \alpha_1 \theta_1 + \cdots \alpha_k \theta_k + q$.
{\small\begin{align*}
    \cos\left( -\frac{f( \theta )}{2} \right)
    &=
    \frac{e^{i(-f( \theta )/2)} + e^{-i(-f( \theta )/2)}}{2}
    =
    \frac{e^{-iq/2}}{2} \prod_{j=1}^{k} \left( e^{-i\theta_j/2} \right)^{a_j}
    +
    \frac{e^{iq/2}}{2} \prod_{j=1}^{k} \left( e^{i\theta_j/2} \right)^{a_j}
    \\
    \sin\left( - \frac{f( \theta )}{2} \right)
    &=
    \frac{e^{i(-f( \theta )/2)} - e^{-i(-f( \theta )/2)}}{2i}
    =
    \frac{e^{-iq/2}}{2i} \prod_{j=1}^{k} \left( e^{-i\theta_j/2} \right)^{a_j}
    -
    \frac{e^{iq/2}}{2i} \prod_{j=1}^{k} \left( e^{i\theta_j/2} \right)^{a_j}
\end{align*}\par}\noindent%
By substituting $z_j = e^{-i \theta_j / 2}$ for each $j \in [k]$ and letting $c = e^{-iq/2}$, the following Laurent polynomials are obtained.
{\small\begin{align*}
    \CPoly( f ) &= \frac{c}{2} \prod_{j=1}^{k} z_j^{\alpha_j} + \frac{1}{2c} \prod_{j=1}^{k} z_j^{-\alpha_j}
    &
    \SPoly( f )
    &= \frac{-ic}{2} \prod_{j=1}^{k} z_j^{\alpha_j} + \frac{i}{2c} \prod_{j=1}^{k} z_j^{-\alpha_j}
\end{align*}\par}\noindent%
Then the following equations hold by construction.
{\small\begin{align*}
    \CPoly( f )\left( e^{-i\theta_1/2}, \ldots, e^{-i\theta_k/2} \right)
    &=
    \frac{e^{-iq/2}}{2} \prod_{j=1}^{k} \left( e^{-i\theta_j/2} \right)^{a_j}
    +
    \frac{e^{iq/2}}{2} \prod_{j=1}^{k} \left( e^{i\theta_j/2} \right)^{a_j}
    =
    \cos\left( -\frac{f( \theta )}{2} \right)
    \\
    \SPoly( f )\left( e^{-i\theta_1/2}, \ldots, e^{-i\theta_k/2} \right)
    &=
    \frac{e^{-iq/2}}{2i} \prod_{j=1}^{k} \left( e^{-i\theta_j/2} \right)^{a_j}
    -
    \frac{e^{iq/2}}{2i} \prod_{j=1}^{k} \left( e^{i\theta_j/2} \right)^{a_j}
    =
    \sin\left( - \frac{f( \theta )}{2} \right)
\end{align*}\par}\noindent%
Given these polynomials, $\pinterp{-}$ is defined inductively on the gates as follows.
\begin{itemize}
    \item If $G \in \mathcal{G}$, then $\pinterp{G} = G$.
    \item If $M \in \mathcal{H}$ and $f \in \mathcal{F}$, then $\pinterp{R_M(f)} = \CPoly( f ) I + i \, \SPoly( f ) M$.
    \item If $G \in \Sigma( \mathcal{G}, \mathcal{H} )$ with $G$ an $(2^n) \times (2^n)$ unitary, then $\pinterp{C(G)} = I_{2^n} \oplus \pinterp{G}$.
\end{itemize}
The semantics extend as expected to sequential and parallel composition.
This makes precise the change of variable used in~\cref{Sect:Motivation}.

\begin{definition}[Polynomial Abstraction]
    A \emph{polynomial abstraction} is a function $\interp{-}_{*}$ from $\ZCirc( \mathcal{G}, \mathcal{H} )$ to collection of matrices over $\mathbb{C}[ z_1, z_1^{-1}, \ldots, z_k, z_k^{-1} ]$ such that $\interp{C}( \theta_1, \ldots, \theta_k ) = \interp{C}_{*}\left( e^{-i\theta_1/2}, \ldots, e^{-i\theta_k/2} \right)$ for all $C \in \ZCirc( \mathcal{G}, \mathcal{H} )$.
\end{definition}

\begin{restatable}{theorem}{polyabs}
    \label{Thm:PolyAbstract}
    $\pinterp{-}$ is a polynomial abstraction.
\end{restatable}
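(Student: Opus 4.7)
The plan is to prove this by a double induction: first on the structure of gates via \cref{Prop:GateInd}, then on the structure of circuits via \cref{Prop:CircInd}. The predicate to establish is $P( C ) \equiv \interp{C}( \theta_1, \ldots, \theta_k ) = \pinterp{C}\bigl( e^{-i\theta_1/2}, \ldots, e^{-i\theta_k/2} \bigr)$, i.e., that evaluation of the polynomial abstraction at the point $z_j = e^{-i\theta_j/2}$ recovers the original analytic semantics.

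For the gate-level induction, the two base cases are essentially immediate. When $G \in \mathcal{G}$, both $\interp{G}$ and $\pinterp{G}$ equal $G$ regardless of the parameter values, so $P( G )$ holds trivially. For $R_M( f )$ with $f \in \mathcal{F}$, the key computation is precisely the one done immediately before the definition of $\pinterp{-}$: by construction of $\CPoly( f )$ and $\SPoly( f )$, substituting $z_j = e^{-i\theta_j/2}$ into $\CPoly( f ) I + i\, \SPoly( f ) M$ yields $\cos( -f( \theta )/2 ) I + i\sin( -f( \theta )/2 ) M = \interp{R_M( f )}( \theta )$. The control induction step follows from the fact that both $\interp{C( G )}$ and $\pinterp{C( G )}$ are defined as $I_{2^n} \oplus (-)$ applied to $\interp{G}$ and $\pinterp{G}$ respectively, together with the observation that the entrywise substitution $z_j \mapsto e^{-i\theta_j/2}$ commutes with direct sums (the identity block contains no variables, and the other block inherits $P( G )$ by induction).

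For the circuit-level induction via \cref{Prop:CircInd}, the base case on $\epsilon$ is trivial since $\pinterp{\epsilon} = I_2 = \interp{\epsilon}$, and the base case on $\Sigma( \mathcal{G}, \mathcal{H} )$ follows from the gate-level induction above. The parallel and sequential induction steps reduce to the observation that the substitution homomorphism $\mathbb{C}[ z_1, z_1^{-1}, \ldots, z_k, z_k^{-1} ] \to \mathbb{C}$ sending $z_j \mapsto e^{-i\theta_j/2}$ lifts to a homomorphism on matrices that commutes with both the Kronecker product and matrix multiplication. Combining this with the inductive hypotheses on $C_1$ and $C_2$ gives $\pinterp{C_1 // C_2}( e^{-i\theta/2} ) = \pinterp{C_1}( e^{-i\theta/2} ) \otimes \pinterp{C_2}( e^{-i\theta/2} ) = \interp{C_1}( \theta ) \otimes \interp{C_2}( \theta ) = \interp{C_1 // C_2}( \theta )$, and analogously for $\circ$.

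The main obstacle is not any deep mathematics but rather the careful bookkeeping around parameter vectors: since each circuit only mentions finitely many of the $k$ parameters, one must be slightly careful that the polynomial abstraction lives in the common ring $\mathbb{C}[ z_1, z_1^{-1}, \ldots, z_k, z_k^{-1} ]$ for a fixed $k$, and that formulas such as $\CPoly( f )$ for a parameter-free rotation ($\alpha = 0$) reduce correctly to constants. Formally, the cleanest way to handle this is to observe that each $\alpha_j \in \mathbb{Z}$ in $f \in \mathcal{F}$ ensures the exponents on $z_j$ are integers (so we genuinely obtain Laurent polynomials over $\mathbb{Z}$-exponents), and that the affine translation $q\pi$ contributes only a scalar coefficient $e^{-iq/2} \in \mathbb{C}$, not a new indeterminate. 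Once this is set up, the inductive steps above are routine.
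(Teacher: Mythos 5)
Your proposal is correct and follows essentially the same structure as the paper's proof: a double induction, first over gates via \cref{Prop:GateInd} (trivial for $\mathcal{G}$, by construction of $\CPoly$ and $\SPoly$ for rotations, and by commutation with $\oplus$ for controls) and then over circuits via \cref{Prop:CircInd} (base cases, then commutation of the evaluation map $z_j \mapsto e^{-i\theta_j/2}$ with $\otimes$ and matrix multiplication). Your framing of the circuit-level step as lifting a ring homomorphism to matrices is a clean way to state what the paper verifies by direct computation, but the underlying argument is the same.
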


\begin{example}[Polynomial Semantics]
    \label{Ex:PolySem}
    The calculations from \cref{Sect:Motivation} can be revisited from the perspective of polynomial semantics.
    Of course, the circuit in \cref{Fig:MotivatingEx} is somewhat uninteresting, since the circuit has only one parameter.
    Instead, we will consider a new circuit with two parameters $\rho_1$ and $\rho_2$ obtained through the substitution $\theta = f( \rho_1, \rho_2 )$ where $f( \rho_1, \rho_2 ) = \rho_1 - 2 \rho_2$.
    The sine and cosine polynomials for $f$ are as follows.
    \begin{align*}
        \CPoly( f )
        &=
        \frac{1}{2} z_1 z_2^{-2} + \frac{1}{2} z_1^{-1} z_2^{2}
        &
        \SPoly( f )
        &=
        \frac{-i}{2} z_1 z_2^{-2} + \frac{i}{2} z_1^{-1} z_2^{2}
    \end{align*}
    Then $\CPoly(f) + i \; \SPoly( f ) = z_1 z_2^{-2}$ and $\CPoly(f) - i \; \SPoly( f ) = z_1^{-1} z_2^{2}$.
    Let $C_1$ denote the right-hand side of the equation in~\cref{Fig:MotivatingEx}.
    To compute $\pinterp{C_1}$, we start by evaluating each gate.
    Clearly $\pinterp{ C( X ) } = I_2 \oplus X$.
    Moreover,
    {\small\begin{align*}
        \pinterp{\epsilon // R_Z( f )}
        =
        I_2 \otimes \pinterp{R_Z( f )}
        &=
        I_2 \otimes \begin{bmatrix}
            z_1z_2^{-2} & 0 \\
            0 & z_1^{-1}z_2^2
        \end{bmatrix},
        \\
        \pinterp{\epsilon // R_Z( -f )}
        =
        I_2 \otimes \pinterp{R_Z( -f )}
        &=
        I_2 \otimes \begin{bmatrix}
            z_1^{-1}z_2^{2} & 0 \\
            0 & z_1z_2^{-2}
        \end{bmatrix}.
    \end{align*}\par}\noindent%
    It follows by calculations similar to those in \cref{Sect:Motivation} that,
    {\small\begin{equation*}
        \pinterp{C_1}
        =
        \pinterp{ C( X ) }
        \pinterp{\epsilon // R_Z( -f )}
        \pinterp{ C( X ) }
        \pinterp{\epsilon // R_Z( f )}
        =
        I_2 \oplus
        \begin{bmatrix}
            z_1^{-2}z_2^{4} & 0 \\
            0 & z_1^{2}z_2^{-4}
        \end{bmatrix}.
    \end{equation*}\par}\noindent%
    Then $\pinterp{C_1}( e^{-i\rho_1/2}, e^{-i\rho_2/2} ) = I \oplus R_Z( 2f( \rho_1, \rho_2 ) ) = \interp{C_1}( \rho_1, \rho_2 )$ as expected.
    \qed
\end{example}

To check that $\interp{C_1} = \interp{C_2}$, it suffices to check symbolically that $\pinterp{C_1} = \pinterp{C_2}$.
However, it is often too computationally expensive to compute the polynomials explicitly.
Instead, one could first upper-bound the degree of each polynomial, and then combine these degree bounds with the theorems of~\cref{Sect:Background:Polynomials}.
It is not hard to see that for each component of $\pinterp{R_H( f )}$, its degrees are all bounded by the coefficients of $f$.
This property extends to all circuits in $\ZCirc( \mathcal{G}, \mathcal{H} )$ by studying their coefficient sequences.
Intuitively, the coefficient sequence of a circuit $C$ is a sequence $A( C )$ over $\mathbb{Q}^k$ such that $A( C )_j$ is the list of coefficients for the $j$-th rotation in $C$.
More formally, let $( \mathbb{Q}^k )^*$ denote the set of all finite sequences over $\mathbb{Q}^k$ and $( \cdot )$ denote sequence concatenation.
Then $A( - )$ is defined inductively as follows.
\begin{itemize}
\item If $G \in \mathcal{G}$, then $A( G ) = \epsilon$.
\item If $M \in \mathcal{H}$ and $f( \theta ) = a_1 \theta_1 + \cdots + a_k \theta_k + q$, then $A( R_M( f ) ) = ( ( a_1, \ldots, a_k ) )$.
\item If $G \in \Sigma( \mathcal{G}, \mathcal{H} )$, then $A( C( G ) ) = A( G )$.
\item If $C_1, C_2 \in \Circ( \mathcal{G}, \mathcal{H} )$, then $A( C_1 // C_2 ) = A( C_1 ) \cdot A( C_2 )$.
\item If $C_1, C_2 \in \Circ( \mathcal{G}, \mathcal{H} )$ and $\win( C_2 ) = \wout( C_1 )$, then $A( C_2 \circ C_1 ) = A( C_2 ) \cdot A( C_1 )$.
\end{itemize}
Then $\ZCirc( \mathcal{G}, \mathcal{H} )$ is precisely the set of circuits in $\ZCirc( \mathcal{G}, \mathcal{H} )$ such that $A( C ) \in ( \mathbb{Z}^k )^*$.
We define $\Sigma_\mathbb{Z}( \mathcal{G}, \mathcal{H} )$ analogously.
The following definition generalizes the coefficient bound of the degree of a gate to a coefficient bound on the degree of all circuits.

\begin{definition}[Coefficient Bounded Semantics]
    Let $\interp{-}_{*}$ be a polynomial abstraction.
    A circuit $C \in \Circ( \mathcal{G}, \mathcal{H} )$ with $\win( C ) = n$ and $\wout( C ) = m$ is \emph{coefficient bounded with respect to $\interp{-}_{*}$}, denoted $\Bnd_{*}( C )$, if for each $s \in [2^n]$ and $t \in [2^m]$ with $f = ( \interp{C}_{*} )_{s,t}$,
    \begin{itemize}
        \item (B1). $\deg^{+}_{z_j}( f ) \le \sum_{a \in A( C )} |a_j|$ for each $j \in [k]$,
        \item (B2). $\deg^{-}_{z_j}( f ) \le \sum_{a \in A( C )} |a_j|$ for each $j \in [k]$,
        \item (B3). $\deg^{+}( f ) \le \sum_{a \in A( C )} \kappa( a )$ where $\kappa( a ) = \max\{ \sum_{j=1}^{k} a_j^{+}, \sum_{j=1}^{k} -a_j^{-} \}$.
    \end{itemize}
\end{definition}

\begin{example}[Coefficient Bounded Semantics]
    \label{Ex:PolyBnd}
    Recall $C_1$ from \cref{Ex:PolySem}.
    It will be shown that $\Bnd_{\textsf{Poly}}( C_1 )$ holds.
    First, the coefficient sequence of $C_1$ must be computed.
    As illustrated in the previous example, $C_1$ contains only the rotations: $R_1 = C( R_Z( -\rho_1 + 2 \rho_2 ) )$ and $R_2 = C( R_Z( \rho_1 - 2 \rho_2 ) )$.
    The coefficient sequences of these rotations are $\beta = ( -1, 2 )$ and $\gamma = ( 1, -2 )$ respectively.
    Then $A( C_2 ) = A( R_1 ) \cdot A( R_2 ) = ( \beta ) \cdot ( \gamma ) = ( \beta, \gamma )$.
    Moreover, $\kappa( \beta ) = \max\{ 0 + 2, 1 + 0 \} = 2$ and $\kappa( \gamma ) = \max\{ 1 + 0, 0 + 2 \} = 2$.
    By inspecting the matrices in \cref{Ex:PolySem}, it is clear that the following bounds hold for all $j \in [2]$ and $s, t \in [4]$.
    {\small\begin{align*}
        \deg_{z_j}^{+}( ( \pinterp{R_1} )_{s,t} ) &\le |\beta_j|
        &
        \deg_{z_j}^{-}( ( \pinterp{R_1} )_{s,t} ) &\le |\beta_j|
        &
        \deg^{+}( ( \pinterp{R_1} )_{s,t} ) &\le \kappa( \beta )
        \\
        \deg_{z_j}^{+}( ( \pinterp{R_2} )_{s,t} ) &\le |\gamma_j|
        &
        \deg_{z_j}^{-}( ( \pinterp{R_2} )_{s,t} ) &\le |\gamma_j|
        &
        \deg^{+}( ( \pinterp{R_2} )_{s,t} ) &\le \kappa( \gamma )
    \end{align*}\par}\noindent%
    The $\kappa$ terms can be thought of as adding together the maximum positive degrees of the two terms in each sine or cosine polynomial
    It turns out that these bounds compose additively under the composition of matrices, motivating properties (B1) through to (B3).
    In this example $\sum_{\alpha \in A(C_1)} |\alpha_1| = |-1| + |1| = 2$, $\sum_{\alpha \in A(C_1)} |\alpha_2| = |2| + |-2| = 4$, and $\sum_{\alpha \in A(C_1)} \kappa(\alpha) = 2 + 2 = 4$
    By inspecting the final matrix in \cref{Ex:PolySem}, it is clear that the following bounds hold for all $s, t \in [4]$  where $f = ( \pinterp{C_1} )_{s,t}$.
    {\small\begin{align*}
        \deg_{z_1}^{+}( f ) &\le 2
        &
        \deg_{z_1}^{-}( f ) &\le 2
        &
        \deg_{z_2}^{+}( f ) &\le 4
        &
        \deg_{z_2}^{-}( f ) &\le 4
        &
        \deg^{+}( f ) &\le 4
    \end{align*}\par}\noindent%
    Then $C_1$ satisfies (B1) through to (B3).
    Therefore, $\Bnd_{\textsf{Poly}}( C_1 )$ holds
    \qed
\end{example}

This rationale given in \cref{Ex:PolyBnd} extends to all circuits in $\ZCirc( \mathcal{G}, \mathcal{H} )$.
Since primitive gates map to constant matrices, then they trivially satisfy $\Bnd_{\textsf{Poly}}( - )$.
By construction of $\CPoly( f )$ and $\SPoly( f )$, then rotation matrices also satisfy $\Bnd_{\textsf{Poly}}( - )$.
It is then easy to show, using \cref{Prop:GateInd}, that every gate in $\Sigma_{\mathbb{Z}}( \mathcal{G}, \mathcal{H} )$ satisfies $\Bnd_{\textsf{Poly}}( - )$.
With a slightly more careful analysis, it can then be shown that this invariant is closed under sequential and parallel composition.
Intuitively, both matrix multiplication and the Kronecker tensor product yields sums of products of polynomials, in which each term can be shown to satisfy the degree bounds.
Then by \cref{Prop:CircInd}, every circuit in $\ZCirc( \mathcal{G}, \mathcal{H} )$ also satisfies $\Bnd_{\textsf{Poly}}( - )$.
Given these coefficient bounded semantics, the singularity factoring techniques of \cref{Sect:Motivation} can then be applied to obtain \cref{Cor:PolyDiff}.
All proof details can be found in \cref{Appendix:Poly}.

\begin{restatable}{theorem}{circbnd}
    \label{Thm:PolyMat}
    If $C \in \ZCirc( \mathcal{G}, \mathcal{H} )$, then $\Bnd_{\textsf{Poly}}( C )$.
\end{restatable}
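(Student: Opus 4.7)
The plan is to prove $\Bnd_{\textsf{Poly}}(C)$ by a two-level structural induction: first use \cref{Prop:GateInd} to show that every gate $G \in \Sigma_{\mathbb{Z}}(\mathcal{G}, \mathcal{H})$ satisfies the three bounds, and then use \cref{Prop:CircInd} to lift the result to all of $\ZCirc(\mathcal{G}, \mathcal{H})$. Throughout, the argument reduces to per-entry calculations about how Laurent-polynomial degrees behave under addition and multiplication.

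For the gate-level induction, a primitive $G \in \mathcal{G}$ has $\pinterp{G} = G$ constant and $A(G) = \epsilon$, so all three right-hand sides are $0$ and the inequalities hold trivially. For a rotation $R_M(f)$ with integer coefficient vector $\alpha = (a_1, \ldots, a_k)$, every entry of $\pinterp{R_M(f)} = \CPoly(f) I + i\,\SPoly(f) M$ is a constant complex linear combination of the two monomials $\prod_j z_j^{a_j}$ and $\prod_j z_j^{-a_j}$ that appear in $\CPoly(f)$ and $\SPoly(f)$. Inspection gives $\deg^{\pm}_{z_j} \le |a_j|$ directly, and the total positive degree equals $\max\{\sum_j a_j^{+}, -\sum_j a_j^{-}\} = \kappa(\alpha)$, matching the one-element sums over $A(R_M(f)) = (\alpha)$. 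The control case is immediate: $\pinterp{C(G)} = I_{2^n} \oplus \pinterp{G}$ has entries that are $0$, $1$, or entries of $\pinterp{G}$, and $A(C(G)) = A(G)$, so the bounds transfer from $G$ unchanged.

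For the circuit-level induction, the empty wire and any single gate are handled by the previous step. The parallel case reduces to noting that each entry of $\pinterp{C_1 \mathbin{/\!/} C_2} = \pinterp{C_1} \otimes \pinterp{C_2}$ is a product of one entry from each factor; combined with $A(C_1 \mathbin{/\!/} C_2) = A(C_1) \cdot A(C_2)$, the right-hand side bounds literally add. The sequential case is similar, with entries being sums of products, and a sum only replaces the per-product bound with a max, still dominated by the sum on the right. The main obstacle is (B3): unlike $\deg^{\pm}_{z_j}$, the total positive degree $\deg^{+}$ is not a per-variable quantity, so one must establish separately that $\deg^{+}(fg) \le \deg^{+}(f) + \deg^{+}(g)$ for Laurent polynomials. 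This follows because every exponent vector $t$ of a monomial in $fg$ decomposes as $t = u + v$ for exponents $u, v$ of monomials in $f, g$; using the coordinatewise inequality $(u_j + v_j)^{+} \le u_j^{+} + v_j^{+}$ and summing over $j$ gives $\sum_j t_j^{+} \le \sum_j u_j^{+} + \sum_j v_j^{+} \le \deg^{+}(f) + \deg^{+}(g)$, and taking the max over monomials of $fg$ finishes the lemma. With this subadditivity lemma in hand, the two inductive cases are routine and the theorem follows.
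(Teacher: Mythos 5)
Your proof is correct and mirrors the paper's argument in structure: a two-level induction via \cref{Prop:GateInd} and then \cref{Prop:CircInd}, with the gate base cases handled by inspecting $\CPoly$, $\SPoly$, and constants, and with the parallel and sequential cases reduced to per-entry degree bookkeeping for sums of products (the paper bundles both composition cases into one lemma over index sets, but this is only a cosmetic difference). Your explicit proof of the subadditivity $\deg^{+}(fg) \le \deg^{+}(f) + \deg^{+}(g)$ for multivariate Laurent polynomials via $(u_j + v_j)^{+} \le u_j^{+} + v_j^{+}$ is a welcome tightening of a step the paper asserts without justification.
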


\vspace{-1em}
\begin{restatable}{corollary}{diffpoly}
    \label{Cor:PolyDiff}
    If $C_1 \in \ZCirc( \mathcal{G}, \mathcal{H} )$ and $C_2 \in \ZCirc( \mathcal{G}, \mathcal{H} )$ with $\win( C_1 ) = \win( C_2 ) = n$ and $\wout( C_1 ) = \wout( C_2 ) = m$, then for each pair of indices $s \in [2^n]$ and $t \in [2^m]$, there exists a polynomial $f \in \mathbb{C}[ x_1, \ldots, x_k ]$ such that,
    \begin{itemize}
        \item (D1). $\deg_{x_j}( f ) \le 2 \lambda_j$ for each $j \in [k]$,
        \item (D2). $\deg( f ) \le \max\{ \sum_{a \in A( C )} \kappa( a ) : C \in \{ C_1, C_2 \} \} + \sum_{j=1}^{k} \lambda_j$,
        \item (D3). $\left( \interp{C_1} - \interp{C_2} \right)_{s,t}( \theta ) = 0$ if and only if $f( e^{-i\theta_1/2}, \ldots, e^{-i \theta_k/2} ) = 0$,
    \end{itemize}
    where $\lambda_j = \max\{ \sum_{a \in A( C )} |a_j| : C \in \{ C_1, C_2 \} \}$ for each $j \in [k]$.
\end{restatable}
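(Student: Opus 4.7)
The plan is to reduce the statement to a direct application of \cref{Thm:PolyMat} together with the singularity-factoring trick already demonstrated informally in \cref{Sect:Motivation}. Concretely, define the Laurent polynomial $g = \bigl( \pinterp{C_1} - \pinterp{C_2} \bigr)_{s,t} \in \mathbb{C}[z_1, z_1^{-1}, \ldots, z_k, z_k^{-1}]$ and then clear its negative powers by multiplying through by $\prod_{j=1}^{k} z_j^{\lambda_j}$. Setting $f = g \cdot \prod_{j=1}^{k} z_j^{\lambda_j}$ should yield the required (ordinary) polynomial, and the three claims (D1)--(D3) will follow from standard degree subadditivity combined with the fact that $\pinterp{-}$ is a polynomial abstraction.

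The key steps, in order, are as follows. First, I would invoke \cref{Thm:PolyMat} on $C_1$ and $C_2$ separately to conclude that both $(\pinterp{C_1})_{s,t}$ and $(\pinterp{C_2})_{s,t}$ satisfy (B1)--(B3) with coefficient sequences $A(C_1)$ and $A(C_2)$ respectively. Then, using the subadditivity of positive and negative degrees under subtraction (which holds just as for addition), I obtain $\deg^{+}_{z_j}(g) \le \lambda_j$, $\deg^{-}_{z_j}(g) \le \lambda_j$, and $\deg^{+}(g) \le \max\{ \sum_{a \in A(C)} \kappa(a) : C \in \{C_1, C_2\} \}$. Because $\deg^{-}_{z_j}(g) \le \lambda_j$, multiplying $g$ by $\prod_{j=1}^{k} z_j^{\lambda_j}$ eliminates every negative exponent, so $f$ is an element of $\mathbb{C}[x_1, \ldots, x_k]$. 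The bound (D1) follows from $\deg_{x_j}(f) = \deg^{+}_{z_j}(g) + \lambda_j \le 2\lambda_j$, and (D2) follows from $\deg(f) \le \deg^{+}(g) + \sum_{j=1}^{k} \lambda_j$.

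For (D3), the point is that the change-of-variable values $e^{-i\theta_j/2}$ all lie on the complex unit circle and therefore never vanish. Since $\pinterp{-}$ is a polynomial abstraction (\cref{Thm:PolyAbstract}), we have $\bigl( \interp{C_1}(\theta) - \interp{C_2}(\theta) \bigr)_{s,t} = g\bigl( e^{-i\theta_1/2}, \ldots, e^{-i\theta_k/2} \bigr)$, and by construction $f\bigl( e^{-i\theta_1/2}, \ldots, e^{-i\theta_k/2} \bigr) = g\bigl( e^{-i\theta_1/2}, \ldots, e^{-i\theta_k/2} \bigr) \cdot \prod_{j=1}^{k} e^{-i\lambda_j \theta_j/2}$. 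The second factor is nonzero, so the two quantities vanish simultaneously, giving (D3).

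I expect no serious obstacle here, as the result is essentially a packaging step: the real technical content is hidden inside \cref{Thm:PolyMat} (which establishes the degree bounds inductively over the circuit structure). The only delicate point worth double-checking is the subadditivity of the \emph{negative} degree under subtraction, since the bookkeeping is slightly different from the positive case; however, this reduces to observing that $\deg^{-}_{z_j}(f_1 - f_2) \le \max\{\deg^{-}_{z_j}(f_1), \deg^{-}_{z_j}(f_2)\}$ directly from the definition $\deg^{-}_{z_j}(f) = \max\{ -t_j^{-} : t \in T \}$. Everything else is routine arithmetic on the bounds supplied by \cref{Thm:PolyMat}.
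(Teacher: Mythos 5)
Your proposal is correct and mirrors the paper's own argument: both proofs invoke \cref{Thm:PolyMat} to bound the degrees of the Laurent polynomial components, clear negative powers by multiplying by $\prod_{j=1}^{k} z_j^{\lambda_j}$, and deduce (D3) from the fact that $e^{-i\theta_j/2}$ never vanishes. The only cosmetic difference is that the paper separately names the two components $(\pinterp{C_1})_{s,t}$ and $(\pinterp{C_2})_{s,t}$ before forming their difference, whereas you work with the difference directly; one small nit is that your claimed equality $\deg_{x_j}(f) = \deg^{+}_{z_j}(g) + \lambda_j$ should really be an inequality (since $\deg^{+}_{z_j}$ takes a $\max$ with $0$), but the bound $\le 2\lambda_j$ holds regardless.
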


An interesting observation is that the bounds obtained through \cref{Thm:PolyMat} were tight in \cref{Ex:PolyBnd}.
A natural question is whether these bounds are always tight, with respect to the granularity of the abstraction.
We answer this question in the positive, by showing that for each coefficient sequence $\alpha$, there exists a circuit $C$ with $A( C ) = \alpha$ such that the corresponding bound is tight.
Of course, it is not possible to reconstruct a circuit from its coefficient sequence, so some information must be lost.
To this end, we exhibit a family of circuits in \cref{Ex:RelBnd}, each of degree zero, for which arbitrarily large bounds can be obtained.
In this example, relations exist between the rotations that depend on the axes-of-rotation and the parameter-free gates in the circuit, both of which are not captured by the coefficient sequence.
In particular, both examples rely on the relations $(R_X( \beta )) (R_X( \gamma )) = R_X( \beta + \gamma )$ and $Z (R_X( \beta ) ) = ( R_X( -\beta ) ) Z$.

\begin{figure}[t]
    \begin{subfigure}[b]{0.48\textwidth}
        \input{circuits/tight}
        \caption{The circuit for $\alpha = ( ( 1, 2 ), ( 0, -1 ) )$.}
        \label{Fig:Tight}
    \end{subfigure}
    \begin{subfigure}[b]{0.48\textwidth}
        \input{circuits/relbnd}
        \caption{$(R_X( n\theta )) \circ Z \circ (R_X( n\theta ))$}
        \label{Fig:CexFamily}
    \end{subfigure}
    \caption{Circuits used in \cref{Ex:Tight} and \cref{Ex:RelBnd} to illustrate the precision of $\Bnd( - )$.}
    \vspace{-1em}
\end{figure}

\begin{example}[Necessary Bounds]
    \label{Ex:Tight}
    Let $\alpha$ be any sequence over $\mathbb{Z}^k$ with $|\alpha| = n$.
    For each $j \in [n]$, define a linear function $f_j( \theta ) = ( \alpha_j )_1 \theta_1 + \cdots + ( \alpha_j )_k \theta_k$ and a rotation gate $G_j = R_X( f_j )$.
    Now consider the circuit $C = G_1 // \cdots // G_n$ (see~\cref{Fig:Tight}).
    It follows that $A( C ) = \alpha$.
    Moreover, $(\interp{C}( \theta ))_{0,0} = \prod_{j=1}^{n} \cos( f_j( \theta ) /2 )$.
    With regard to the polynomial semantics, $\pinterp{C} = 2^{-n} \prod_{a \in \alpha} ( \prod_{j=1}^{k} z_k^{a_j} - \prod_{j=1}^{k} z_k^{-a_j} )$.
    Clearly $\deg_{x_j}^{+}( ( \pinterp{C} )_{0,0} ) = \sum_{a \in \alpha} |a_j|$ and $\deg_{x_j}^{-}( g ) = \sum_{a \in \alpha} |a_j|$ for each $j \in [k]$.
    Then $\Bnd_{\textsf{Poly}}( C )$ is tight.
    Since $\alpha$ was arbitrary, then every coefficient sequence is realizable with tight bounds.
    \qed
\end{example}

\begin{example}[Impact of Circuit Relations]
    \label{Ex:RelBnd}
    Fix $k = 1$ as the number of parameters and let $n \in \mathbb{N}$.
    Consider the circuit $C = R_X( n \theta ) \circ Z \circ R_X( n \theta )$, as illustrated in \cref{Fig:CexFamily}.
    It follows that $\interp{C}( \theta ) = ( R_X( n \theta ) ) Z ( R_X( n \theta ) ) = ( R_X( n \theta ) ) ( R_X( -n \theta ) ) Z = R_X( 0 ) Z = Z$.
    Since $\interp{C}( \theta )$ is constant, its associated polynomials have degree zero.
    However, $\Bnd_{\textsf{Poly}}( C )$ yields an upper bound of $\sum_{a \in A( C )} |a_1| = |n| + |n| = 2n$, which exceeds the true degree by $2n$.
    Since $n$ was arbitrary, this error can be made arbitrarily large.
    \qed
\end{example}

\subsection{A Cutoff Theorem for Parameterized Equivalence}
\label{Sect:Equiv:Elim}

This section shows that parameterized equivalence checking reduces to parameter-free equivalence checking for quantum circuits (\cref{Thm:QuantElim}).
The proof proceeds as follows.
First,~\cref{Cor:PolyDiff} is used to characterize a family of polynomials which are identically zero if and only if the two circuits are equal.
Using \cref{Thm:Nullstellensatz}, a finite set of points $S \subseteq \mathbb{Q}^k$ can be constructed to determine if these polynomials are identically zero.
The points in $S$ are in bijection with a set of points on the complex unit circle under the transformation $x \mapsto e^{-i x/2}$.
It follows that each polynomial is identically zero if and only if $\interp{C_1}( s ) = \interp{C_2}( s )$ for all points $s \in S$.
Note that the polynomials are never explicitly constructed.
All proof details are in \cref{Appendix:Elim}.

\begin{restatable}{theorem}{exacteq}
    \label{Thm:QuantElim}
    Let $C_1 \in \ZCirc( \mathcal{G}, \mathcal{H} )$ and $C_2 \in \ZCirc( \mathcal{G}, \mathcal{H} )$ with $\win( C_1 ) = \win( C_2 )$ and $\wout( C_1 ) = \wout( C_2 )$.
    If $S_1, S_2, \ldots, S_k \subseteq [0, 4\pi)$ such that $|S_j| > 2 \lambda_j$ for each $j \in [k]$, then $\interp{C_1}( \theta ) = \interp{C_2}( \theta )$ for all $\theta \in \mathbb{R}^k$ if and only if $\interp{C_1}( v ) = \interp{C_2}( v )$ for all $v \in S_1 \times S_2 \times \cdots \times S_k$.
\end{restatable}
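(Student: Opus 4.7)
The plan is to apply Corollary \ref{Cor:PolyDiff} entry-by-entry and reduce the problem to a grid-vanishing argument for a multivariate polynomial. The forward direction is immediate, so the task is the backward direction: assuming $\interp{C_1}(v) = \interp{C_2}(v)$ for every $v$ in the grid $S_1 \times \cdots \times S_k$, deduce equality for all $\theta \in \mathbb{R}^k$. Fix a pair of output indices $s \in [2^n]$ and $t \in [2^m]$, and invoke Corollary \ref{Cor:PolyDiff} to obtain a polynomial $f \in \mathbb{C}[x_1, \ldots, x_k]$ satisfying $\deg_{x_j}(f) \le 2\lambda_j$ together with property (D3).

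The next step is to transport the hypothesis to a statement about $f$ via the change of variable $z_j = e^{-i\theta_j/2}$. The map $\theta \mapsto e^{-i\theta/2}$ is periodic with period $4\pi$ and hence injective on the half-open interval $[0, 4\pi)$, so the image $T_j = \{ e^{-i\theta/2} : \theta \in S_j \} \subseteq \mathbb{C}$ has cardinality $|T_j| = |S_j| > 2\lambda_j \ge \deg_{x_j}(f)$. Property (D3) then says exactly that $f$ vanishes on every point of the grid $T_1 \times \cdots \times T_k$.

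It remains to deduce $f \equiv 0$ from this grid-vanishing. This is the standard grid corollary of Theorem \ref{Thm:Nullstellensatz}: any polynomial with $\deg_{x_j}(f) < |T_j|$ in each variable that vanishes on an entire product grid $T_1 \times \cdots \times T_k$ must be identically zero. A clean proof proceeds by induction on $k$, viewing $f$ as univariate in one variable at a time with coefficients in the remaining polynomial ring, and invoking the univariate fact that a polynomial with more zeros than its degree is the zero polynomial; alternatively, one applies Theorem \ref{Thm:Nullstellensatz} contrapositively to any nonzero coefficient of highest individual degree. Once $f \equiv 0$, the reverse implication of (D3) gives $(\interp{C_1} - \interp{C_2})_{s,t}(\theta) = 0$ for every $\theta \in \mathbb{R}^k$, and varying $s$ and $t$ finishes the proof. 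The only subtle point is the injectivity of the change of variables on $[0, 4\pi)$: this is exactly why the cutoff requires $2\lambda_j$ samples rather than $\lambda_j$, and why the samples must be drawn from an interval of length $4\pi$ rather than $2\pi$.
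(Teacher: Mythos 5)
Your proposal is correct and follows essentially the same route as the paper: apply \cref{Cor:PolyDiff} entrywise, pass through the bijective change of variables $\theta \mapsto e^{-i\theta/2}$ on $[0,4\pi)$, and invoke \cref{Thm:Nullstellensatz} to relate grid-vanishing to identical vanishing. The only cosmetic difference is that you argue the backward direction directly (grid-vanishing implies $f\equiv 0$), whereas the paper argues by contrapositive (if $f\neq 0$, the Nullstellensatz produces a grid point where $f$ is nonzero), which are the same argument.
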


\begin{restatable}{corollary}{decidable}
    If $\mathcal{G}$ and $\mathcal{H}$ consist of matrices over the universal cyclotomic field, then the parameterized equivalence checking problem is decidable for $\ZCirc( \mathcal{G}, \mathcal{H} )$.
\end{restatable}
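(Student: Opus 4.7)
The plan is to combine \cref{Thm:QuantElim} with the fact that exact arithmetic and equality testing are computable inside any finite cyclotomic extension of $\mathbb{Q}$. Given circuits $C_1, C_2 \in \ZCirc(\mathcal{G}, \mathcal{H})$, I would first compute the coefficient sequences $A(C_1)$ and $A(C_2)$ by structural recursion on the syntax, and from these derive the integers $\lambda_j$ appearing in \cref{Cor:PolyDiff} by summing absolute values. Fix any $N_j \in \mathbb{N}$ with $N_j > 2\lambda_j$ and set $S_j = \{ 4\pi m / N_j : 0 \le m < N_j \} \subseteq [0, 4\pi)$. By \cref{Thm:QuantElim}, the semantic equivalence $\interp{C_1} = \interp{C_2}$ on all of $\mathbb{R}^k$ is equivalent to the finite conjunction of parameter-free matrix equalities $\interp{C_1}(v) = \interp{C_2}(v)$ over the grid $v \in S_1 \times \cdots \times S_k$, so decidability reduces to being able to test each of these equalities symbolically.

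Next, I would verify that every such evaluation produces matrices whose entries lie in a single finite extension $\mathbb{Q}(\zeta_L)$ that is itself computable from the input. Each $v_j$ is a rational multiple of $\pi$; each rotation $R_M(f)$ in $C_i$ has an integer linear form with a rational affine shift $q\pi$; hence every argument $-f(v)/2$ is a rational multiple of $\pi$, so by Euler's formula both $\cos(-f(v)/2)$ and $\sin(-f(v)/2)$ lie in the universal cyclotomic field. The entries of $\mathcal{G}$ and $\mathcal{H}$ are cyclotomic by hypothesis, and the universal cyclotomic field is closed under addition, multiplication, direct sums, and Kronecker products. A straightforward induction using \cref{Prop:GateInd} and \cref{Prop:CircInd} then shows $\interp{C_i}(v)$ is a matrix over a single field $\mathbb{Q}(\zeta_L)$, where $L$ can be chosen as a common multiple of the conductors of the cyclotomic entries of the gates from $\mathcal{G} \cup \mathcal{H}$ that actually occur in $C_1$ or $C_2$, the denominators of the rational shifts $q$, and the $N_j$.

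Finally, equality of matrices over $\mathbb{Q}(\zeta_L)$ is decidable using any exact implementation of cyclotomic arithmetic, for example the algorithms of Bosma or Breuer cited in \cref{Sect:Background:Algebraic}. The overall procedure is thus: compute $\lambda_j$, choose the $N_j$ and $L$, enumerate the finite grid $S_1 \times \cdots \times S_k$, symbolically evaluate $\interp{C_1}(v)$ and $\interp{C_2}(v)$ inside $\mathbb{Q}(\zeta_L)$, and return \emph{yes} iff every pair of matrices agrees. I expect the only real subtlety to be the bookkeeping needed to identify a uniform $L$ that simultaneously accommodates the cyclotomic data of every gate occurrence and every affine shift; however, since a circuit contains only finitely many gates and each entry of $\mathcal{G} \cup \mathcal{H}$ has a finite cyclotomic representation by hypothesis, such an $L$ always exists and can be read off mechanically from the circuit.
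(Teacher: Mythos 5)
Your proof is correct and follows essentially the same route as the paper: reduce to a finite grid of rational multiples of $\pi$ via \cref{Thm:QuantElim}, observe by induction that evaluation at such points stays inside the universal cyclotomic field (the paper isolates this as a separate lemma, \cref{Thm:Cyclo}, proved exactly by the induction over \cref{Prop:GateInd} and \cref{Prop:CircInd} that you sketch), and then appeal to exact, decidable cyclotomic arithmetic. The only stylistic difference is that you make the finite conductor $L$ explicit rather than simply invoking computability of the universal cyclotomic field, which is a harmless and arguably clarifying refinement.
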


As $k$ grows large, the utility of \cref{Thm:QuantElim} decreases.
For example, if each $\lambda_j$ is $b$, then $|S_1 \times \cdots \times S_k| = ( 2b + 1 )^k$.
That is, the number of instances grows exponentially with $k$.
However, this exponential growth can be overcome by a probabilistic algorithm.
Fix a finite subset $S$ of $[0, 4\pi)^k$ and assume that $s$ is chosen at random from $S$.
If $\interp{C_1}( s ) = \interp{C_2}( s )$, then conclude that $\interp{C_1} = \interp{C_2}$, otherwise conclude that $\interp{C_2} \ne \interp{C_2}$.
Clearly, this algorithm has no false negatives, since $\interp{C_1}( s ) \ne \interp{C_2}( s )$ implies $\interp{C_2} \ne \interp{C_2}$.
A more interesting question is the false positive rate.
Note that a false positive occurs when $\interp{C_1}( s ) = \interp{C_2}( s )$ but $\interp{C_1} \ne \interp{C_2}$.
It is shown in the following theorem that the probability of a false positive decreases with order $O( 1 / |S| )$, as an application of \cref{Thm:DLSZ}.

\begin{restatable}{theorem}{probeq}
    \label{Thm:ProbEquiv}
    Let $C_1 \in \ZCirc( \mathcal{G}, \mathcal{H} )$ and $C_2 \in \ZCirc( \mathcal{G}, \mathcal{H} )$ with $\win( C_1 ) = \win( C_2 )$, $\wout( C_1 ) = \wout( C_2 )$, and $\interp{C_1} \ne \interp{C_2}$.
    For each finite subset $S \subseteq [0, 4\pi)$, if $s_1, \ldots, s_k$ are sampled at random both independently and uniformly from $S$, then
    {\small\begin{equation*}
        \Pr\left( \interp{C_1}( s_1, \ldots, s_k ) = \interp{C_2}( s_1, \ldots, s_k ) \right) \le d / |S|
    \end{equation*}\par}\noindent%
    where $d = \max\{ \sum_{\alpha \in A( C )} \kappa( \alpha ) : C \in \{ C_1, C_2 \} \} + \sum_{j=1}^{k} \lambda_j$.
\end{restatable}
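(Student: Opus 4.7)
The plan is to reduce the probabilistic equivalence statement to an application of the DeMillo--Lipton--Schwartz--Zippel lemma (\cref{Thm:DLSZ}) via the polynomial abstraction already developed, following the same change-of-variables strategy used for \cref{Thm:QuantElim} but without needing the full combinatorial Nullstellensatz.

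First, since $\interp{C_1} \ne \interp{C_2}$, there must exist indices $s \in [2^n]$ and $t \in [2^m]$ (where $n = \win(C_1) = \win(C_2)$ and $m = \wout(C_1) = \wout(C_2)$) such that the entry-functions $(\interp{C_1})_{s,t}$ and $(\interp{C_2})_{s,t}$ differ on some $\theta \in \mathbb{R}^k$. Apply \cref{Cor:PolyDiff} to this particular pair of indices to obtain a polynomial $f \in \mathbb{C}[x_1, \ldots, x_k]$ satisfying (D1)--(D3). By (D2), $\deg(f) \le d$. By (D3), the existence of a $\theta$ on which the two entry functions disagree forces $f$ to have a nonzero evaluation at $(e^{-i\theta_1/2}, \ldots, e^{-i\theta_k/2})$, so $f$ is not the zero polynomial.

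Next, I would handle the change of variables at the probabilistic level. The map $\phi \colon [0, 4\pi) \to \mathbb{C}$, $x \mapsto e^{-ix/2}$, is injective, so $S' := \phi(S) \subseteq \mathbb{C}$ has $|S'| = |S|$, and sampling $s_j$ uniformly from $S$ is equivalent (via $z_j := \phi(s_j)$) to sampling $z_j$ uniformly and independently from $S'$. Since $\mathbb{C}$ is an integral domain, \cref{Thm:DLSZ} applies to the nonzero polynomial $f$ with sample set $S'$, yielding $\Pr\bigl( f(z_1, \ldots, z_k) = 0 \bigr) \le \deg(f)/|S'| \le d/|S|$.

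Finally, I would chain these together. The event $\interp{C_1}(s_1, \ldots, s_k) = \interp{C_2}(s_1, \ldots, s_k)$ implies in particular that the $(s,t)$-entries agree, which by (D3) is equivalent to $f(z_1, \ldots, z_k) = 0$. Monotonicity of probability then gives the desired bound. The main obstacle is essentially bookkeeping: selecting a \emph{fixed} differing entry $(s,t)$ \emph{before} sampling (so that one fixed polynomial $f$ controls the event), and verifying that the uniform distribution on $S$ transports to a uniform distribution on $S'$ under $\phi$; once this is set up, the degree bound (D2) and \cref{Thm:DLSZ} provide the conclusion immediately, with no further calculation required.
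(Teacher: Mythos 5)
Your proof is correct and follows essentially the same route as the paper: identify a differing matrix entry, invoke \cref{Cor:PolyDiff} to obtain the nonzero polynomial $f$ with degree bound $d$, transport the uniform distribution through the injective change of variables $x \mapsto e^{-ix/2}$ on $[0,4\pi)$, and conclude by combining (D3) with monotonicity of probability and \cref{Thm:DLSZ}. No meaningful differences; the paper's write-up even contains a typo (bijection "on $[0,2\pi)$") that you implicitly correct.
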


\section{Extending to Rational Coefficients and Global Phase}
\label{Sect:Implementation}

The methods presented in~\cref{Sect:Equiv} face several limitations.
In particular, both~\cref{Thm:QuantElim} and \cref{Thm:ProbEquiv} assume that the circuits are integral, and do not allow for equivalence up to global phase.
In this section, we show how to extend the techniques of \cref{Sect:Equiv} to handle rational circuits and global phase.
We also expand~\cref{Thm:ProbEquiv} into an algorithm, and consider the problem of angle sampling given a gate set over the universal cyclotomic field.

%
\subsection{Verifying Circuits with Rational Coefficients}
\label{Sect:Implementation:Reparam}

Most parameterized quantum circuits have fractional coefficients.
For example, the equality in~\cref{Fig:MotivatingEx} is typically stated with a parameter $\theta$ on the left-hand side and the parameters $\pm \theta / 2$ on the right-hand side.
The circuits in~\cref{Fig:MotivatingEx} are related to these fractional circuits by the substitution $f( \theta ) = \theta / 2$.
Conceptually, $f: \mathbb{R}^k \to \mathbb{R}^k$ \emph{reparameterizes} the circuit, by inducing a bijection between the parameter space of the rational circuits and the parameter space of the integral circuits.
This generalizes to all examples (see~\cref{Append:Reparam} for proofs).

\begin{restatable}{lemma}{biject}
    \label{Lemma:Biject}
    Let $C_1, C_2 \in \Circ( \mathcal{G}, \mathcal{H} )$.
    If $f: \mathbb{R}^k \to \mathbb{R}^k$ is a bijective function, then $\interp{C_1} = \interp{C_2}$ if and only if $\interp{C_1} \circ f = \interp{C_2} \circ f$.
\end{restatable}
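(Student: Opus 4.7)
The plan is to treat the claim as a basic fact about equality of functions and precomposition with a bijection, specialized to the parameterized semantic maps $\interp{C_1}, \interp{C_2}: \mathbb{R}^k \to \CMat$. No induction on circuit structure is required, because the lemma concerns only the extensional equality of the semantic functions, which live in the ambient function space.

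For the forward direction, I would observe that it holds trivially and does not use bijectivity: if two functions are equal, then precomposing both with any function $f$ preserves the equality, so $\interp{C_1} = \interp{C_2}$ immediately gives $\interp{C_1} \circ f = \interp{C_2} \circ f$. For the backward direction, I would fix an arbitrary $\theta \in \mathbb{R}^k$ and use surjectivity of $f$ to pick $\theta' \in \mathbb{R}^k$ with $f(\theta') = \theta$. Then
\[
\interp{C_1}(\theta) = \interp{C_1}(f(\theta')) = (\interp{C_1} \circ f)(\theta') = (\interp{C_2} \circ f)(\theta') = \interp{C_2}(f(\theta')) = \interp{C_2}(\theta),
\]
using the hypothesis $\interp{C_1} \circ f = \interp{C_2} \circ f$ at the middle step. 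Since $\theta$ was arbitrary, we conclude $\interp{C_1} = \interp{C_2}$.

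The only subtlety worth flagging is that the backward direction uses only surjectivity of $f$, while the forward direction uses nothing about $f$ at all; bijectivity is therefore a convenient uniform hypothesis rather than a tight one. In the intended application (reparameterization $\theta \mapsto \theta / r$ to clear denominators), the map is a linear bijection of $\mathbb{R}^k$, so the hypothesis is obviously satisfied. There is no main obstacle here: the lemma is a one-line set-theoretic observation packaged to justify the subsequent rational-to-integer reduction, and its proof is comfortably shorter than its statement.
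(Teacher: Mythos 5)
Your proof is correct and takes essentially the same route as the paper: the forward direction is the trivial observation that precomposition preserves equality, and the backward direction evaluates at a preimage of an arbitrary point (the paper writes $f^{-1}(x)$ where you invoke surjectivity to choose $\theta'$, but these are the same step). Your remark that only surjectivity is used in the backward direction is an accurate, if minor, sharpening of the paper's hypothesis.
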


The goal of this section is to construct a syntactic transformation which eliminates all rational coefficients, which preserving the semantic interpretation via a bijective reparameterization.
A \emph{syntactic reparameterization} is a map $F: \Circ( \mathcal{G}, \mathcal{H} ) \to \Circ( \mathcal{G}, \mathcal{H} )$ with a bijective function $f: \mathbb{R}^k \to \mathbb{R}^k$ such that $\interp{F(C)} = \interp{C} \circ f$.
The simplest syntactic reparameterization is a linear rescaling of the parameters in the circuit by a non-zero rational vector.
For each vector $v \in ( \mathbb{Q} \setminus \{ 0 \} )^k$, define the map $F_v: \Circ( \mathcal{G}, \mathcal{H} ) \to \Circ( \mathcal{G}, \mathcal{H} )$ as follows.
\begin{itemize}
\item If $G \in \mathcal{G}$, then $F_v( G ) = G$.
\item If $M \in \mathcal{H}$ and $f( \theta ) = a_1 \theta_1 + a_2 \theta_2 + \cdots + a_k \theta_k + q$, then $F_v( R_M( f ) ) = R_M( g )$ where $g( \theta ) = (v_1 a_1) \theta_1 + (v_2 a_2) \theta_2 + \cdots + (v_k a_k) \theta_k + q$.
\item If $G \in \Sigma( \mathcal{G}, \mathcal{H} )$, then $F_v( C( G ) ) = C( F_v( G ) )$.
\item If $C_1, C_2 \in \Circ( \mathcal{G}, \mathcal{H} )$, then $F_v( C_1 // C_2 ) = F_v( C_1 ) // F_v( C_2 )$.
\item If $C_1, C_2 \in \Circ( \mathcal{G}, \mathcal{H} )$, then $F_v( C_2 \circ C_1 ) = F_v( C_2 ) \circ F_v( C_1 )$.
\end{itemize}

\vspace{-0.5em} 
\begin{restatable}{theorem}{synparam}
    For each $v \in ( \mathbb{Q} \setminus \{ 0 \} )^k$, $f: \mathbb{R}^k \to \mathbb{R}^k$ defined by $f( \theta ) = ( v_1 \theta_1, v_2 \theta_2, \ldots, v_k \theta_k )$ is bijective and $F_v$ is syntactic reparameterization with respect to $f$.
\end{restatable}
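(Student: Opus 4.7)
The plan is to handle the two claims separately. Bijectivity of $f$ is immediate: since each $v_j \in \mathbb{Q} \setminus \{0\}$, the componentwise map $g(\theta) = (\theta_1/v_1, \ldots, \theta_k/v_k)$ is a well-defined two-sided inverse for $f$.

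For the reparameterization property, I would take the predicate $P(C) \equiv (\interp{F_v(C)} = \interp{C} \circ f)$ on $\Circ(\mathcal{G}, \mathcal{H})$ and verify the hypotheses of \cref{Prop:CircInd}. The empty wire case is immediate since $\interp{\epsilon} = I_2$ is parameter-independent and $F_v$ fixes $\epsilon$. The parallel and sequential induction steps follow directly from the clauses defining $F_v$ on compositions, combined with the compositional semantics: for example, $\interp{F_v(C_1 // C_2)}(\theta) = \interp{F_v(C_1)}(\theta) \otimes \interp{F_v(C_2)}(\theta)$, which by the induction hypothesis equals $\interp{C_1}(f(\theta)) \otimes \interp{C_2}(f(\theta)) = \interp{C_1 // C_2}(f(\theta))$, and the sequential case is analogous with matrix product in place of $\otimes$.

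The gate base case itself reduces to a parallel induction on $\Sigma(\mathcal{G}, \mathcal{H})$ via \cref{Prop:GateInd}. For $G \in \mathcal{G}$, both $F_v(G) = G$ and $\interp{G}$ are constant in the parameters, so the equality holds trivially. For the control step, $\interp{C(G)}(\theta) = I_{2^n} \oplus \interp{G}(\theta)$ depends on $\theta$ only through $\interp{G}(\theta)$, so the inductive hypothesis transfers directly through the direct sum. The only genuine computation is for rotations: given $R_M(h)$ with $h(\theta) = a_1 \theta_1 + \cdots + a_k \theta_k + q$, the definition gives $F_v(R_M(h)) = R_M(h')$ with $h'(\theta) = (v_1 a_1) \theta_1 + \cdots + (v_k a_k) \theta_k + q$. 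Since $h(f(\theta)) = a_1(v_1 \theta_1) + \cdots + a_k(v_k \theta_k) + q = h'(\theta)$, the trigonometric formula defining $\interp{R_M(-)}$ immediately yields $\interp{R_M(h')}(\theta) = \interp{R_M(h)}(f(\theta))$.

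There is no serious technical obstacle here; the whole argument is bookkeeping, so the main discipline is simply setting up the nested induction and checking each clause of $F_v$ against the corresponding clause of $\interp{-}$. The only conceptual point worth stressing is that the componentwise scaling hardcoded into $F_v$ on the rotation clause exactly reproduces the action of the global reparameterization $f$ on the affine argument of each rotation, while leaving the additive constant untouched; this is what ensures that the rational translations appearing in $\mathcal{F}$ are not scaled away by $F_v$.
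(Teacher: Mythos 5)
Your argument is correct and follows the same route as the paper: bijectivity via the explicit componentwise inverse, then a nested structural induction (Prop.~\ref{Prop:GateInd} inside Prop.~\ref{Prop:CircInd}) with the only nontrivial computation occurring in the rotation case, where $h(f(\theta)) = h'(\theta)$ by commutativity of multiplication in $\mathbb{Q}$. Nothing further to add.
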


Now assume that $C_1$ and $C_2$ are circuits in $\Circ( \mathcal{G}, \mathcal{H} )$.
For the correct choice of $v$, both $F_v( C_1 )$ and $F_v( C_2 )$ are elements of $\ZCirc( \mathcal{G}, \mathcal{H} )$.
Intuitively, each $v_j$ must be chosen such that it clears the denominators of all coefficients tied to $\theta_k$ in both $C_1$ and $C_2$.
Formally, let $\denom( q )$ denote the denominator of $q \in \mathbb{Q}$ and $\lcm\{ x_1, x_2, \ldots, x_n \}$ denote the least common multiple of $x_1, x_2, \ldots, x_n \in \mathbb{Z}$.
Then for each $j \in [k]$, $X_j = \{ \denom( \alpha_j ) : \alpha \in A( C_1 ) \cdot A( C_2 ) \}$ is the set of all denominators of coefficients tied to $\theta_k$ in both $C_1$ and $C_2$.
Then $v_j = \lcm( X_j )$ for each $j \in [k]$.
Let $\clcm( C_1, C_2 )$ denote this vector.

\begin{restatable}{theorem}{fracexact}
    \label{Thm:FracExact}
    If $C_1, C_2 \in \Circ( \mathcal{G}, \mathcal{H} )$ and $v = \clcm( C_1, C_2 )$, then $F_v( C_1 ) \in \ZCirc( \mathcal{G}, \mathcal{H} )$ and $F_v( C_2 ) \in \ZCirc( \mathcal{G}, \mathcal{H} )$.
    Moreover, $\interp{C_1} = \interp{C_2}$ if and only if $\interp{F_v( C_1 )} = \interp{F_v( C_2 )}$.
\end{restatable}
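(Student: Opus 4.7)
The plan is to prove the two claims separately, with the integrality claim by structural induction on circuits, and the semantic equivalence by direct appeal to the previously established results on syntactic reparameterization and \cref{Lemma:Biject}.

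First I would establish that $F_v( C_1 ), F_v( C_2 ) \in \ZCirc( \mathcal{G}, \mathcal{H} )$, which, by the characterization given just before the theorem, amounts to showing $A( F_v( C_i ) ) \in ( \mathbb{Z}^k )^*$ for $i \in \{1,2\}$. By construction of $\clcm$, for each $j \in [k]$ and each $\alpha$ appearing in $A( C_1 ) \cdot A( C_2 )$, we have $\denom( \alpha_j ) \in X_j$, so $\denom( \alpha_j ) \mid v_j$, and hence $v_j \alpha_j \in \mathbb{Z}$. I would formalize this by applying \cref{Prop:CircInd} with the predicate ``every coefficient vector appearing in $A( F_v( C ) )$ lies in $\mathbb{Z}^k$.'' The base cases are immediate: parameter-free gates in $\mathcal{G}$ contribute the empty sequence, and for a rotation $R_M( f )$ with $f( \theta ) = a_1 \theta_1 + \cdots + a_k \theta_k + q$ appearing in $C_i$, the definition of $F_v$ produces a rotation whose coefficient vector is $(v_1 a_1, \ldots, v_k a_k)$, which lies in $\mathbb{Z}^k$ by the divisibility argument above. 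The inductive steps (control, parallel, sequential) follow from the defining recursion of $F_v$ and the fact that $A( F_v( C_2 \circ C_1 ) ) = A( F_v( C_2 ) ) \cdot A( F_v( C_1 ) )$, which is a concatenation of sequences over $\mathbb{Z}^k$.

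For the semantic equivalence, let $f: \mathbb{R}^k \to \mathbb{R}^k$ be the rescaling $f( \theta ) = ( v_1 \theta_1, \ldots, v_k \theta_k )$. Since $v \in ( \mathbb{Q} \setminus \{0\} )^k$, the previous theorem says that $f$ is bijective and that $F_v$ is a syntactic reparameterization with respect to $f$, so $\interp{F_v( C_i )} = \interp{C_i} \circ f$ for $i \in \{1,2\}$. Then \cref{Lemma:Biject} gives $\interp{C_1} = \interp{C_2}$ iff $\interp{C_1} \circ f = \interp{C_2} \circ f$, which by the previous identity is exactly $\interp{F_v( C_1 )} = \interp{F_v( C_2 )}$.

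The main obstacle is really just bookkeeping in the first part: one must check carefully that the coefficient sequence $A( F_v( C_i ) )$ is constructed from exactly the same set of scaled rotations, so that the denominators cleared by $\clcm$ are the right ones. Once the inductive structure is set up and the divisibility observation $\denom( \alpha_j ) \mid \lcm( X_j ) = v_j$ is invoked, the rest is formally routine, and the semantic claim is a one-line application of the preceding theorem and \cref{Lemma:Biject}.
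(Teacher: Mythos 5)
Your proof of the second claim (semantic equivalence) is exactly the paper's: invoke the previous theorem establishing that $F_v$ is a syntactic reparameterization with respect to the bijection $f(\theta)=(v_1\theta_1,\ldots,v_k\theta_k)$, then apply \cref{Lemma:Biject}. That part is fine.

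There is a gap in the first part, and it is exactly in the spot you flag as ``bookkeeping.'' The predicate ``every coefficient vector appearing in $A(F_v(C))$ lies in $\mathbb{Z}^k$'' is not true for an arbitrary circuit $C\in\Circ(\mathcal{G},\mathcal{H})$, and \cref{Prop:CircInd} requires Base Case (2) to hold for \emph{all} $G\in\Sigma(\mathcal{G},\mathcal{H})$. Concretely, take any $R_M(g)$ with $g(\theta)=\tfrac13\theta_1$; unless $3\mid v_1$, the scaled coefficient $v_1/3$ is not an integer, so the predicate fails on this gate, and your induction cannot even get off the ground. The restriction ``appearing in $C_i$'' that you invoke is doing real work, but the induction principle you cite does not let you restrict attention to subcircuits of a fixed pair. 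To make this approach go through you would have to either conditionalize the predicate (e.g.\ ``if every $\alpha\in A(C)$ has $\denom(\alpha_j)\mid v_j$ for all $j$, then $A(F_v(C))\in(\mathbb{Z}^k)^*$''), or, as the paper does, first prove the unconditional \cref{Lemma:AlphaTx} (that $(A(F_v(C))_j)_\ell=v_\ell(A(C)_j)_\ell$ for all $C$, $j$, $\ell$) by structural induction and then read off the integrality as a one-liner from the definition of $\clcm$. The paper's route is essentially your conditional predicate made into an honest equality; it is marginally cleaner because it separates the purely syntactic fact about $F_v$'s action on coefficient sequences from the number-theoretic divisibility argument.
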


\begin{corollary}
    If $\mathcal{G}$ and $\mathcal{H}$ consist of matrices over the universal cyclotomic field, then the parameterized equivalence checking problem is decidable for $\Circ( \mathcal{G}, \mathcal{H} )$.
\end{corollary}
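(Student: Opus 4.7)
The plan is to reduce this corollary to the integral-circuit decidability result already established for $\ZCirc( \mathcal{G}, \mathcal{H} )$. Given an input instance $(C_1, C_2)$ with $C_1, C_2 \in \Circ( \mathcal{G}, \mathcal{H} )$, I would first compute the vector $v = \clcm( C_1, C_2 )$. This is effective because both $C_1$ and $C_2$ are finite syntactic objects whose coefficient sequences $A( C_1 )$ and $A( C_2 )$ contain only finitely many rational numbers, and the least common multiple of a finite set of integers is computable.

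Next, I would compute the reparameterized circuits $F_v( C_1 )$ and $F_v( C_2 )$, which is a purely syntactic operation that just multiplies each rotation coefficient by the corresponding component of $v$. By \cref{Thm:FracExact}, both outputs lie in $\ZCirc( \mathcal{G}, \mathcal{H} )$ and the equality $\interp{C_1} = \interp{C_2}$ holds if and only if $\interp{F_v( C_1 )} = \interp{F_v( C_2 )}$. At this point the problem has been transformed into an instance of parameterized equivalence checking over integral circuits with the same underlying gate sets $\mathcal{G}$ and $\mathcal{H}$.

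Finally, I would invoke the earlier decidability corollary for $\ZCirc( \mathcal{G}, \mathcal{H} )$ (the one obtained from \cref{Thm:QuantElim}), which states that parameterized equivalence checking is decidable whenever $\mathcal{G}$ and $\mathcal{H}$ consist of matrices over the universal cyclotomic field. This decides $\interp{F_v( C_1 )} = \interp{F_v( C_2 )}$, and hence by \cref{Thm:FracExact} also decides $\interp{C_1} = \interp{C_2}$. Since the hypothesis of the invoked corollary refers to the gate sets $\mathcal{G}$ and $\mathcal{H}$, which are unchanged by the reparameterization $F_v$, the hypothesis is transferred directly.

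The only step that is not completely routine is verifying that the cyclotomic hypothesis survives the reduction. The transformation $F_v$ multiplies the linear coefficients of each rotation by rational numbers and leaves both the axes-of-rotation (elements of $\mathcal{H}$) and the parameter-free gates (elements of $\mathcal{G}$) untouched, so the new circuits use exactly the same gate sets as the originals. Consequently no new matrix entries are introduced outside the universal cyclotomic field, and the integral decidability corollary applies without modification. No additional effectivity assumptions are needed beyond those already used for $\ZCirc( \mathcal{G}, \mathcal{H} )$, namely the ability to perform exact arithmetic over the universal cyclotomic field.
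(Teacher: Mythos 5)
Your proof is correct and matches the paper's intended argument: the corollary follows immediately from \cref{Thm:FracExact} together with the already-established decidability corollary for $\ZCirc( \mathcal{G}, \mathcal{H} )$, since $F_v$ alters only rotation coefficients and leaves the gate sets $\mathcal{G}$ and $\mathcal{H}$ unchanged. Your explicit check that the cyclotomic hypothesis is preserved by the reparameterization is exactly the observation that makes the reduction go through.
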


%
\subsection{Verifying Circuits Modulo Global Phase}
\label{Sect:Implementation:Param}

In~\cref{Sect:Equiv} the circuits $C_1$ and $C_2$ where defined to be equivalence when $\interp{C_1}( \theta ) = \interp{C_2}( \theta )$ for all $\theta \in \mathbb{R}^k$.
For many applications, this notion of equivalence is far too strict.
This is because $C_1$ and $C_2$ will prepare the same probability distribution provided there exists some function $p: \mathbb{R}^k \to \mathbb{R}$ such that $\interp{C_1}( \theta ) = e^{ip(\theta)} \interp{C_2}( \theta )$ for all $\theta \in \mathbb{R}^k$.
When such a function exists, we say that $C_1$ and $C_2$ are equivalent modulo global phase.
Of course, verifying the existence of an arbitrary $p$ is infeasible.
Prior work has assumed $p$ to be affine linear~\cite{HongHuang2024,PehamBurgholzer2023,XuLi2022}.
That is, $p( \theta ) = \alpha_1 \theta_1 + \cdots \alpha_k \theta_k + \beta$.
We further assume that $\alpha_1$ through to $\alpha_k$ are rational.
In this section we show how to verify the equivalence of $C_1$ and $C_2$ modulo affine rational linear global phase, under the following assumptions.
\begin{enumerate}
\item All matrices in $\mathcal{H}$ are defined over the universal cyclotomic field.
\item All matrices in $\mathcal{G}$ are injective and defined over the universal cyclotomic field.
\end{enumerate}
In practice, the second assumption restricts $\mathcal{G}$ to unitary operations and state preparation.

Since the universal cyclotomic field is closed under addition and multiplication, then every global phase will be cyclotomic when evaluated at rational multiples of $\pi$.
In general, $\alpha$ need not be rational, since there exists cyclotomic numbers of norm $1$ which are not roots of unity.
However, the periodicity of $\interp{C_1}$ an $\interp{C_2}$ ensure that $\alpha \in \mathbb{Q}^k$.
Using properties of cyclotomic numbers, such as the fact that $\mathbb{Q}( \zeta_{2n} ) = \mathbb{Q}( \zeta_n )$ for odd $n$, it is then possible to solve for $\alpha$ (if it exists).
In~\cref{Appendix:Phase}, an algorithm $\FindPhase( C_1, C_2 )$ is described to compute these coefficients.
The injectivity of $\mathcal{G}$ ensures that all coefficients can be isolated (this condition is sufficient but not necessary).
In the case where $C_1$ and $C_2$ are not equivalent up to global phase, then arbitrary coefficients are returned.
Otherwise, the function $\FindPhase( C_1, C_2 )$ returns a tuple $( z, f )$ such that $z = e^{i\beta}$ and $f( \theta ) = (-2\alpha_1) \theta_1 + \cdots + (-2\alpha_k) \theta_k$.
Then the global phase can be offset by introducing a unitary gate $z I$ and a global phase gate $R_I( f )$.
Then equivalence modulo global phase reduces to exact equivalence as follows.

\begin{restatable}{theorem}{gphase}
    \label{Thm:GPhase}
    Assume $\mathcal{G}$ and $\mathcal{H}$ consist of matrices over the universal cyclotomic field, with all gates in $\mathcal{G}$ injective.
    If $C_1, C_2 \in \ZCirc( \mathcal{G}, \mathcal{H} )$ and $( z, f ) = \FindPhase( C_1, C_2 )$, then $C_1$ is equivalent to $C_2$ modulo affine rational linear global phase if and only if the equation $\interp{C_1} = \interp{z I \circ R_I( f ) \circ C_2}$ holds.
\end{restatable}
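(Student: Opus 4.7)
First I would unfold the right-hand equation as the pointwise identity $\interp{C_1}(\theta) = z \cdot e^{-if(\theta)/2} \cdot \interp{C_2}(\theta)$ for every $\theta \in \mathbb{R}^k$, using the fact that $\interp{R_I(g)}(\theta) = e^{-ig(\theta)/2} I$ (since $I$ is its own cosine component) and that the leading $zI$ contributes only a scalar factor. Since $f \in \mathcal{F}$ is rational-linear with a rational-$\pi$ offset, and since $z$ is a cyclotomic scalar of modulus one produced by $\FindPhase$, the prefactor $z \cdot e^{-if(\theta)/2}$ can be written as $e^{ih(\theta)\pi}$ for some affine linear $h$. Thus the theorem is really saying that $\FindPhase$ returns, when possible, a canonical witness $(z,f)$ for affine-linear global-phase equivalence.

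The $(\Leftarrow)$ direction is immediate. Assuming $\interp{C_1}(\theta) = z \cdot e^{-if(\theta)/2} \cdot \interp{C_2}(\theta)$, rearranging gives $\interp{C_2}(\theta) = z^{-1} e^{if(\theta)/2} \interp{C_1}(\theta)$, and rewriting the scalar as $e^{ih(\theta)\pi}$ with $h$ affine linear exhibits the required witness.

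For the $(\Rightarrow)$ direction, assume there is an affine linear $g(\theta) = \alpha_1 \theta_1 + \cdots + \alpha_k \theta_k + \beta$ with $\interp{C_2}(\theta) = e^{ig(\theta)\pi} \interp{C_1}(\theta)$. I would proceed in three steps. \emph{(i) Isolation:} use the injectivity assumption on $\mathcal{G}$ to locate an index $(s,t)$ at which $(\interp{C_1})_{s,t}$ is not identically zero; on this entry the ratio $(\interp{C_2})_{s,t}(\theta)/(\interp{C_1})_{s,t}(\theta)$ uniquely pins down the scalar $e^{ig(\theta)\pi}$. \emph{(ii) Rationality of the slopes:} show that $\alpha \in \mathbb{Q}^k$ by appealing to the quasi-periodicity of $\interp{C_1}$ and $\interp{C_2}$ under the shifts $\theta_j \mapsto \theta_j + 4\pi$ that are built into the polynomial abstraction of \cref{Sect:Equiv:Poly}; an irrational $\alpha_j$ cannot be compatible with this periodicity, since each $\interp{C_i}$ extends to a matrix of Laurent polynomials in $e^{-i\theta_j/2}$. \emph{(iii) Cyclotomic extraction of the constant:} once $\alpha$ is rational, evaluating the ratio at a rational multiple of $\pi$ yields an element of the universal cyclotomic field, from which $e^{i\beta\pi}$ can be recovered exactly. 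By the specification of $\FindPhase$, the returned pair $(z,f)$ records these coefficients, and the target equation follows.

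The main obstacle is step (iii). The subtlety noted in \cref{Sect:Background:Algebraic} --- that a norm-one cyclotomic number need not be a root of unity, as witnessed by $(3+4i)/5$ --- means one cannot simply take a numerical phase and read off its exponent. The argument must therefore be carried out using exact cyclotomic arithmetic: the entries of $\interp{C_i}$ at rational-$\pi$ arguments lie in a finite extension of $\mathbb{Q}$, and $\FindPhase$ (detailed in the appendix) must be shown to recognise algorithmically whether the extracted scalar truly is a root of unity $e^{i\beta\pi}$ with $\beta \in \mathbb{Q}$, and to determine this $\beta$ when it exists. The injectivity hypothesis on $\mathcal{G}$ is what guarantees that the isolation of coefficients in step (i) is well-posed, and thereby that this extraction has a well-defined target to aim at.
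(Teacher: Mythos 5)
Your overall strategy tracks the paper's closely: both directions delegate the hard work to the correctness of $\FindPhase$, and the ``if'' direction is settled by unfolding the semantics of $zI \circ R_I(f) \circ C_2$ and observing that the composite is by construction a valid phase witness. The paper establishes the ``only if'' direction by invoking \cref{Thm:FindPhase} as a black box, and proves ``if'' by contraposition; your direct route for ``if'' is an equivalent reformulation of the same observation.

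Where your inline sketch of $\FindPhase$'s correctness falls short is in step~(ii) and the unresolved step~(iii). The $4\pi$-periodicity argument does not merely rule out irrational slopes --- it forces $\alpha_j \in \mathbb{Z}$, since the ratio $e^{-ip(\theta)/2}$ must have period dividing $4\pi$ in $\theta_j$, so $e^{-i2\pi\alpha_j}=1$. Stopping at $\alpha \in \mathbb{Q}^k$ is both weaker than what periodicity gives you and insufficient for what you need: your step~(iii) flags the genuine obstruction that a norm-one cyclotomic number need not be a root of unity, and you rightly say $\FindPhase$ must ``recognise algorithmically whether the extracted scalar truly is a root of unity,'' but you offer no way to do this. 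The paper's resolution is precisely the pair of facts you omit: integrality of the $\alpha_j$ together with the degree bound $|\alpha_j| \le 2\lambda_j$ (coming from the Laurent-polynomial abstraction via \cref{Cor:AlphaBnd}). Together these reduce root-of-unity recognition to a finite exhaustive comparison of $z_j$ against the $4\lambda_j + 1$ candidates $(\zeta_{8\lambda_j})^\ell$ for $\ell \in \{-2\lambda_j,\dots,2\lambda_j\}$, all carried out in exact cyclotomic arithmetic. Without the integrality and the bound, your plan has no way to terminate or certify failure at step~(iii), so that step is a real gap in the argument as sketched.

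One more small point: in the ``if'' direction you must justify $|z|=1$ even when $C_1$ and $C_2$ are \emph{not} phase-equivalent (and $\FindPhase$ therefore returns garbage coefficients). This is the content of the paper's \cref{Lem:FindPhase}, which shows the algorithm always returns a unit-modulus $z$ by construction. You assert this (``a cyclotomic scalar of modulus one produced by $\FindPhase$'') but should make explicit that it holds unconditionally, not just on the phase-equivalent branch.
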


\begin{corollary}
    If $\mathcal{G}$ and $\mathcal{H}$ satisfy assumptions (1--2), then the parameterized equivalence checking problem is decidable modulo affine rational linear global phase for $\Circ( \mathcal{G}, \mathcal{H} )$.
\end{corollary}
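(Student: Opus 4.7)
The plan is to reduce the problem to the integral-coefficient case already handled by \cref{Thm:GPhase}, and then apply the exact-equivalence decidability corollary of \cref{Thm:FracExact}. Given $C_1, C_2 \in \Circ(\mathcal{G},\mathcal{H})$ satisfying assumptions (1--2), I would first form $v = \clcm(C_1, C_2)$ and pass to the integral reparameterizations $D_1 = F_v(C_1)$ and $D_2 = F_v(C_2)$, which by \cref{Thm:FracExact} lie in $\ZCirc(\mathcal{G},\mathcal{H})$. The key preliminary lemma I would need is that equivalence modulo affine linear global phase is preserved by a diagonal linear bijection $f(\theta) = (v_1\theta_1, \ldots, v_k\theta_k)$: if $\interp{C_2}(\theta) = e^{i(\alpha\cdot\theta + \beta)\pi}\interp{C_1}(\theta)$, then substituting $\theta_j \mapsto \theta_j / v_j$ yields $\interp{D_2}(\theta) = e^{i(\alpha'\cdot\theta + \beta)\pi}\interp{D_1}(\theta)$ with $\alpha'_j = \alpha_j / v_j$ still rational; conversely, any affine phase relating $D_1$ and $D_2$ pulls back to one relating $C_1$ and $C_2$. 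This reduces the problem for $\Circ(\mathcal{G},\mathcal{H})$ to the same problem for $\ZCirc(\mathcal{G},\mathcal{H})$.

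Next, I would invoke \cref{Thm:GPhase} on $D_1, D_2$: compute $(z, f) = \FindPhase(D_1, D_2)$, which is effective because $\mathcal{G}$ and $\mathcal{H}$ are over the universal cyclotomic field and $\mathcal{G}$ consists of injective matrices. The theorem tells us that $D_1$ is equivalent to $D_2$ modulo affine linear global phase if and only if $\interp{D_1} = \interp{zI \circ R_I(f) \circ D_2}$. Since $zI$ is a constant matrix over the universal cyclotomic field and $R_I(f)$ is a rotation with rational-linear argument, the right-hand circuit again lies in $\Circ(\mathcal{G}',\mathcal{H})$ for a gate set $\mathcal{G}'$ over the universal cyclotomic field.

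Applying the reparameterization trick a second time (if $f$ introduces new fractional coefficients) brings both circuits into $\ZCirc$, and the decidability of exact parameterized equivalence from the corollary of \cref{Thm:FracExact} then finishes the argument by \cref{Thm:QuantElim} combined with the fact that the universal cyclotomic field contains the sampling points $e^{-i s/2}$ for any $s$ drawn from a finite set of rational multiples of $\pi$.

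The main obstacle I anticipate is the reparameterization lemma for global phase: one must verify that \emph{affine} linear global phase (not merely some arbitrary phase function) is closed under composition with the linear bijection used to clear denominators, and that this closure is both directions. Once that lemma is in hand, the rest is a mechanical chain of invocations --- \cref{Thm:FracExact} to move into $\ZCirc$, \cref{Thm:GPhase} to eliminate the global phase, and the decidability corollary for exact equivalence to close out the decision procedure. All the intermediate constructions ($\clcm$, $F_v$, $\FindPhase$, and the cutoff sampling of \cref{Thm:QuantElim}) are computable, so the composite procedure is an algorithm.
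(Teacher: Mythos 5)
Your overall plan is sound and is essentially the route the paper intends: clear denominators with $F_v$ to land in $\ZCirc(\mathcal{G},\mathcal{H})$, invoke $\FindPhase$ and \cref{Thm:GPhase} there, and discharge the resulting exact-equivalence check with \cref{Thm:QuantElim}. You correctly identify the one non-mechanical ingredient not in the paper's inventory, namely that equivalence modulo affine rational linear global phase is invariant under the diagonal rescaling $f(\theta) = (v_1\theta_1,\ldots,v_k\theta_k)$, and your sketch of that lemma (in both directions, using $v_j \neq 0$) is correct.

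Two small corrections. First, the coefficient transformation goes the other way: since $\interp{D_i}(\theta) = \interp{C_i}(v_1\theta_1,\ldots,v_k\theta_k)$, substituting into $\interp{C_2}(\theta) = e^{i(\alpha\cdot\theta+\beta)\pi}\interp{C_1}(\theta)$ gives $\interp{D_2}(\theta) = e^{i(\sum_j \alpha_j v_j \theta_j + \beta)\pi}\interp{D_1}(\theta)$, so $\alpha'_j = \alpha_j v_j$, not $\alpha_j / v_j$. This is immaterial here because $v_j \in \mathbb{Q}\setminus\{0\}$ keeps $\alpha'_j$ rational either way, but the sign of the slip matters in the inverse direction where you really do divide. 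Second, the "apply the reparameterization trick a second time" step is unnecessary: once $D_1, D_2 \in \ZCirc(\mathcal{G},\mathcal{H})$, \cref{Cor:AlphaBnd} guarantees that the $\alpha_j$ returned by $\FindPhase$ are integers, so $R_I(f)$ already has integral coefficients and $zI \circ R_I(f) \circ D_2$ remains in an integral circuit family over the universal cyclotomic field (note $z = 1/z_0$ is a ratio of cyclotomic entries, hence cyclotomic by \cref{Thm:Cyclo}). Hedging with an extra $F_v$ does no harm, but you can drop it.
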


%
\subsection{A Probabilistic Equivalence Checking Procedure}

Imagine applying \cref{Thm:ProbEquiv} to a pair of quantum circuits $C_1$ and $C_2$.
In practice, an end-user would have some desired upper bound $p \in ( 0, 1 ]$ on the false positive rate.
A simply way to bound the false positive rate is to require that $d / |S| \le p$, meaning that $d / p \le |S|$.
Since $d / p$ is positive and $|S|$ is a natural number, then the minimum value of $|S|$ which satisfies this inequality is $N = \ceil{d / p}$.
Using this optimal solution, the following algorithm is obtained.
\begin{enumerate}
\item Compute $d = \max\{ \sum_{\alpha \in A( C )} \kappa( \alpha ) : C \in \{ C_1, C_2 \} \} + \sum_{j=1}^{k} \lambda_j$.
\item Select a set $S \subseteq [ 0, 4\pi )$ such that $|S| = \ceil{d / p}$.
\item Sample $s_1, \ldots, s_k$ at random both independently and uniformly from $S$.
\item Determine if $\interp{C_1}( s_1, \ldots, s_k ) = \interp{C_2}( s_1, \ldots, s_k )$.
\end{enumerate}
The most crucial step of this algorithm is the second step.
First, the choice of $S$ must ensure that the values of $\sin( - )$ and $\cos( - )$ are exact.
As outlined in \cref{Sect:Background:Algebraic}, the simplest way to do this is to sample $S$ from $[ 0, 4\pi ) \cap \mathbb{Q}\pi$ for with for which $\sin( - )$ and $\cos( - )$ must evaluate to cyclotomic numbers.
This method is particularly effective when $\mathcal{G}$ and $\mathcal{H}$ consists purely of matrices over the universal cyclotomic field, in which case all computation can be carried out over the universal cyclotomic field.

Now, consider the elements of $\sin( S )$ and $\cos( S )$.
For each $( j / n ) \pi$ in $S$, the elements $\sin( j / n )$ and $\cos( j / n )$ will be elements of $\mathbb{Q}[ \zeta_n ]$.
Likewise, if $\ell$ is the least common denominator of all fractions in $S$, then $S \subseteq \mathbb{Q}[ \zeta_\ell ]$.
In the worst case, $\mathbb{Q}[ \zeta_\ell ]$ will be an $\ell$-dimensional vector space.
This means that the cost of addition will grow at least linearly with $\ell$, and the cost of multiplication will grow at least quadratically with $\ell$.

\begin{restatable}{theorem}{mindenom}
    \label{Thm:MinDenom}
    If $k \in \mathbb{N}$, $S \subseteq [0,k) \cap \mathbb{Q}$ and $b = |S|$, then $\lcm\{ \denom( s ) : s \in S \} \ge \ceil{b / k}$.
\end{restatable}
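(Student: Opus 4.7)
The plan is to bound the denominators by embedding $S$ into a finite set of equally-spaced rationals. Set $L = \lcm\{ \denom( s ) : s \in S \}$, which is the quantity we wish to bound below. The key observation is that for every $s \in S$, the denominator $\denom(s)$ divides $L$ by definition of the least common multiple, so $s$ can be written as a fraction with denominator exactly $L$, i.e., $s = a_s / L$ for some $a_s \in \mathbb{Z}$.

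Next I would use the range constraint to count how many such representatives can fit. Since $s \in [0, k)$, the integer $a_s = L s$ lies in $[0, kL)$, so $a_s \in \{ 0, 1, \ldots, kL - 1 \}$. The map $s \mapsto a_s$ is injective because the map $a \mapsto a/L$ is, and so the cardinality of $S$ is bounded by the number of available integers: $b = |S| \le kL$. Rearranging yields $L \ge b/k$. Since $L$ is a positive integer and $\ceil{b/k}$ is the smallest integer at least $b/k$, we conclude $L \ge \ceil{b/k}$, as desired.

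No step here is genuinely hard, but the one place to be careful is the injectivity remark: different elements of $S$ produce different numerators $a_s$ only because they are all expressed over the \emph{same} denominator $L$, which is precisely why passing through the lcm (rather than the individual denominators) is essential for the counting argument. The rest is just a pigeonhole-style cardinality bound followed by integrality of $L$.
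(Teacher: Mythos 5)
Your proof is correct and follows essentially the same route as the paper's: pass to the common denominator $L = \lcm\{\denom(s) : s \in S\}$, observe that each $s \in S$ corresponds injectively to a numerator $a_s = Ls \in \{0, 1, \ldots, kL - 1\}$, count to get $b \le kL$, and finish using integrality of $L$. In fact your count ($b \le kL$, since $\{0,\ldots,kL-1\}$ has exactly $kL$ elements) is slightly cleaner than the strict inequality the paper states at the same point; the end result is the same.
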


Let $M$ be the smallest multiple of $4$ which is greater than or equal to $N$.
It follows from \cref{Thm:MinDenom} that $S = \{ 0, ( 1/M ) 4\pi, ( 2/M ) 4\pi, \ldots, ( ( M - 1 ) / M ) 4\pi \}$ minimizes $\ell$.
This set is also easy to compute, and is therefore taken to be the definition of $S$.

\section{Related Work}
\label{Sect:Related}

In the introduction, we discussed the cutoff-based techniques~\cite{MillerBakewell2020}, which subsumes prior work such as~\cite{JeandelPerdrix2018}.
In this section, we compare to other approaches.

\textbf{Circuit Rewriting.}
It was highlighted in~\cref{Ex:RelBnd} that circuit rewriting intersects with parameterized equivalence checking.
In~\cite{PehamBurgholzer2023}, an incomplete equational theory is given for a family of parameterized circuits, which is shown to be effective for equivalence checking.
In~\cite{WateringYeung2024}, a complete set of relations are derived, under the assumption that each parameter appears exactly once in the circuit.
Relations which hold for abstract gate sets, such as $\Sigma( \mathcal{G}, \mathcal{H} )$, have yet to be explored.

\textbf{Symbolic Techniques.}
In~\cite{XuLi2022}, symbolic techniques are used to determine parameterized equivalence.
Particularly, trigonometric relations, together with the Pythagorean relation $\cos( \theta )^2 + \sin( \theta )^2 = 1$, are used to reduce equivalence checking to a family of equations over the theory of non-linear real arithmetic.
This is then solved using the Z3~\cite{MouraBjorner2008} solver as a black box.
However, the decision problem for non-linear real arithmetic is known to be double-exponential in the number of variables~\cite{JovanovicDeMoura2012,BjørnerDeMoura2019}, whereas our approach is exponential in the number of variables.

\textbf{Probabilistic Techniques.}
In~\cite{XuMolavi2023},~\cref{Thm:DLSZ} was used to determine the equivalence of parameterized quantum circuits.
However, our technique yields Laurent polynomials rather than ordinary polynomials, which we do not compute explicitly.
In~\cite{PehamBurgholzer2023}, Peham et al. show that if $v$ is sampled uniformly at random from $[0, 4\pi)^k$, then $\Pr( \interp{C_1}( v ) = \interp{C_2}( v ) ) = 0$ given $\interp{C_1} \ne \interp{C_2}$.
However, sampling $v$ from a uniform continuous distribution is impossible on a digital computer, which can only represent a countable and non-enumerable subset of real numbers~\cite{Turing1937}.
In Peham et al., floating-point is used, and presumably, the error is assumed to be uniform as well.
In our work, all computation is exact, and therefore, such assumptions do not apply.
Since there does not exist a uniform distribution for countable sets, we instead sample uniformly from a finite subset of $[0, 4\pi)$, in which case \cref{Thm:DLSZ} applies, rather than the analytic results of Peham et al.

\section{Conclusion and Future Work}
\label{Sect:Conclusion}

In this paper, we considered the problem of parameterized equivalence checking for quantum circuits.
We show that the parameterized problem can be reduced to finitely many instances of the parameter-free problem, regardless of the gate set or axes of rotation.
Consequently, the problem is decidable in the case of gate sets defined over the universal cyclotomic field.
Moreover, we show that when the number of instances becomes intractable large, there exists a probabilistic variation of the algorithm where the probability of being incorrect can be made arbitrarily small.
We have outlined how the techniques can be implemented in practice, taking into account rational coefficients, global phase, and angle sampling.

In future work, we would like to explore how these decision procedures can be implemented efficiently using circuit rewriting and sparse matrix representations.
In particular, we would like to explore angle sampling and circuit evaluation using ZX-diagrams~\cite{PehamBurgholzer2022}, tensor decision-diagrams~\cite{ZhangSaligane2024}, and model-counting~\cite{MeiBonsangue2024}, which have all proven effective in parameter-free equivalence checking.
We would also like to explore how rewriting-based techniques and symmetry reductions might help to tighten the cutoffs obtained from $\Bnd( - )$.
For example, the bound obtained in \cref{Ex:RelBnd} could be reduced to zero by viewing each relation as a rewriting rule, and then searching for a derivation which reduces the bound.

\bibliography{references}{}

\newpage
\appendix
\section{Categorical Foundations for Syntax and Semantics}
\label{Appendix:Syntax}

This section introduces the category theory necessary to understand why the syntax and semantics of~\cref{Sect:Circuits}, the abstractions of~\cref{Sect:Equiv}, and the syntactic transformations of~\cref{Sect:Implementation:Reparam} are all well-defined.
First, the syntax for $\Circ( \mathcal{G}, \mathcal{H} )$ is formally defined and the inductive principles are established.
Second,  categories are introduced as a mathematical framework for modeling the semantics of typed operations with sequential composition.
Third, premonoidal categories are introduced to model the composition of concurrent processes.
It is shown that all of the structures studied in this paper are premonoidal categories, and that all of the transformations studied are free premonoidal functors.
Finally, monoidal categories are introduced to model parallel composition in circuits.
It is shown that the semantics and syntactic transformations respect the monoidal structure as well, whereas the coefficient sequences do not.

\subsection{Syntax and Structural Induction}

Universal algebra is the field of mathematics which studies mathematical objects constructed from free variables, function symbols, and constant symbols.
We use the theory of universal algebra to formally define our syntactic structures and their various interpretations.
We begin with a review of many-sorted universal algebras as described in~\cite{BaaderNipkow1998}.
We then use this framework to define the set of all gates and the set of well-formed valid circuits.

Let $S$ be a finite set whose elements are referred to as \emph{sorts}.
An \emph{$S$-signature} is defined by the following data.
\begin{itemize}
\item A set $\Sigma$ whose elements are called \emph{function symbols}.
\item A function $\textsf{dom}: \Sigma \to S^*$ called the \emph{domain function}.
\item A function $\textsf{cod}: \Sigma \to S$ called the \emph{codomain function}.
\end{itemize}
Each $S$-signature $\Sigma$ defines a language of \emph{(ground) terms}, denoted $T( \Sigma )$.
Since $\Sigma$ is many-sorted, it is necessary to first define the ground terms of each sort $s \in S$, denoted $T( \Sigma, s )$.
Formally, for each function symbol $f \in \Sigma$ with domain $d = \textsf{dom}( f )$ and arity $n = |d|$, and for all terms $t_1 \in T( \Sigma, d_1 ), \ldots, t_n \in T( \Sigma, d_n )$, there is a term $f( t_1, \ldots, t_n ) \in T( \Sigma, \textsf{cod}( f ) )$.
The base cases for this definition are the constant terms, that is, the function symbols $f \in \Sigma$ such that $|\textsf{dom}( f )| = 0$.
It can be shown that the ground terms of sort $s \in S$ are always well-defined (and can be obtained by computing a least fixed point).
The set of all ground terms is defined to be $T( \Sigma ) = \bigcup_{s \in S} T( \Sigma, s )$.

An interpretation of an $S$-signature is an assignment of sets to each sort, an assignment of values to each constant term, an an assignment of functions to each function symbol.
Formally, let $\Sigma$ be an $S$-signature.
An \emph{$S$-interpretation of $\Sigma$} consists of the following data.
\begin{itemize}
\item For each $s \in S$, a set $X_s$ whose elements are called \emph{values of sort $s$}.
\item For each $f \in \Sigma$ with $|\textsf{dom}( f )| = 0$ and $s = \textsf{cod}( f )$, a choice of $v( f ) \in X_s$.
\item For each $f \in \Sigma$ with $d = \textsf{dom}( f )$, $s = \textsf{cod}( f )$, and $n = |d|$, a choice of function $v( f ): X_{d_1} \times \cdots \times X_{d_n} \to X_s$.
\end{itemize}
Each $S$-interpretation of $\Sigma$ defines a unique function $v: T( \Sigma ) \to \bigcup_{s \in S} X_s$ which satisfies the equation $v( f( t_1, \ldots, t_n ) ) = v( f )( v( t_1 ), \ldots, v( t_n ) )$.

The syntax used in this paper is easily expressible through universal algebra.
This construction is desirable, since the various functions defined on the gate set are merely interpretations.
For the signature of gate terms, there are four types, denoted $\{ \textsf{UMat}, \textsf{NMat}, \textsf{Herm}, \textsf{Poly} \}$.
The sorts $\textsf{UMat}$ and $\textsf{NMat}$ are used to distinguish the unitary operators from the non-unitary operators.
To this end, we partition $\mathcal{G}$ into $\mathcal{G}_U \cup \mathcal{G}_N$ where, $\mathcal{G}_U = \{ M \in \mathcal{G} : MM^{\dagger} = M^{\dagger}M = I \}$.
The sorts $\textsf{Herm}$ and $\textsf{Poly}$ distinguish the elements of $\mathcal{H}$ and $\mathcal{F}$ when constructing a rotation.
Moreover, the function symbols in the gate signature are $\Sigma_G = \{ C, \textsf{Rot} \} \cup \mathcal{G} \cup \mathcal{H} \cup \mathcal{F}$ with domains and codomains as follows.
\begin{itemize}
\item $C: \textsf{UMat} \to \textsf{UMat}$ and $\textsf{Rot}: \textsf{Herm} \times \textsf{Poly} \to \textsf{UMat}$.
\item If $G \in \mathcal{G}$, then $\textsf{dom}( G ) = ()$ and $\textsf{cod}( G ) = \textsf{UMat}$.
\item If $M \in \mathcal{H}$, then $\textsf{dom}( M ) = ()$ and $\textsf{cod}( M ) = \textsf{Herm}$.
\item If $p \in \mathcal{F}$, then $\textsf{dom}( p ) = ()$ and $\textsf{cod}( p ) = \textsf{Poly}$.
\end{itemize}
Then $\Sigma( \mathcal{G}, \mathcal{H} ) = T( \Sigma_G, \textsf{UMat} ) \cup T( \Sigma_G, \textsf{NMat} )$.
The functions $\win( - )$ and $\wout( - )$ are then interpretations of $\Sigma_G$.
This is illustrated for $\win( - )$.
\begin{itemize}
\item For each sort $s \in S$, the values of $s$ are $\mathbb{N}$.
\item $v( C ) = ( n \mapsto n + 1 )$ and $v( \textsf{Rot} ) = ( ( n, m ) \mapsto n )$.
\item If $G \in \mathcal{G}$ is a $( 2^n ) \times ( 2^m )$ matrix, then $v( G ) = n$.
\item If $M \in \mathcal{G}$ is a $( 2^n ) \times ( 2^n )$ matrix, then $v( M ) = n$.
\item If $p \in \mathcal{F}$, then $v( p ) = 0$.
\end{itemize}
This coincides with the definition given in~\cref{Sect:Circuits}.

In the circuit signature, there will be a single sort, denoted $\textsf{Circ}$.
Note that this sorting ignores the number of input wires and output wires on each gate.
We will think of the number of input and output wires as a type associated with each term, once the circuit terms has been constructed.
In the circuit signature, the function symbols are $\Sigma_C = \{ ( \circ ), (//), \epsilon \} \cup \Sigma( \mathcal{G}, \mathcal{H} )$ with the following arities.
\begin{itemize}
\item $( \circ ): \textsf{Circ} \times \textsf{Circ} \to \textsf{Circ}$ and $( // ): \textsf{Circ} \times \textsf{Circ} \to \textsf{Circ}$.
\item $\textsf{dom}( \epsilon ) = ()$ and $\textsf{cod}( \epsilon ) = \textsf{Circ}$ where $\epsilon$ represents an empty wire.
\item If $G \in \Sigma( \mathcal{G}, \mathcal{H} )$, then $\textsf{dom}( G ) = ()$ and $\textsf{cod}( G ) = \textsf{Circ}$.
\end{itemize}
Note that not every term in $T( \Sigma )$ is a well-formed circuit.
This is because the sorting does not account for the number of input and output wires.
While it is possible to define an (infinite) sorting which captures well-formed circuits, this would require infinitely many $( \circ )$ and $( // )$ associated with each valid typing.
Instead, we associate a type $( n, m )$ to each valid circuit $C \in T( \Sigma )$ indicating $\win( C ) = n$ and $\wout( C ) = m$, and the symbol $\bot$ to each invalid circuit.
Indeed, this is also an interpretation of $T( \Sigma )$.
The interpretation is defined as follows.
\begin{itemize}
\item For each $s \in S$, the values of $s$ are $( \mathbb{N} \times \mathbb{N} ) \cup \{ \bot \}$.
\item $v( \circ )( x, y ) =
    \begin{cases}
        \bot & \text{if } x = \bot \text{ or } y = \bot \\
        \bot & \text{if } y_2 \ne x_1 \\
        ( y_1, x_2 ) & \text{otherwise}
    \end{cases}$
\item $v( // )( x, y ) =
    \begin{cases}
    \bot & \text{if } x = \bot \text{ or } y = \bot \\
    ( x_1 + y_1, x_2 + y_2 ) & \text{otherwise}
    \end{cases}$
\item $v( \epsilon ) = ( 1, 1 )$.
\item If $G \in \Sigma( \mathcal{G}, \mathcal{H} )$, then $v( G ) = ( \win( G ), \wout( G ) )$.
\end{itemize}
This defines a unique interpretation $\textsf{type}: T( \Sigma ) \to ( \mathbb{N} \times \mathbb{N} ) \cup \{ \bot \}$.
Then the well-formed circuits are $\Circ( \mathcal{G}, \mathcal{H} ) = \{ C \in T( \Sigma ) : \textsf{type}( C ) \ne \bot \}$.
Moreover, $\win( C ) = \textsf{type}( C )_1$ and $\wout( C ) = \textsf{type}( C )_2$.
It is straight-forward to check that $\Circ( \mathcal{G}, \mathcal{H} )$ is closed under parallel and well-formed sequential composition.

It is now possible to establish the inductive theorems for the gate algebra and the circuit algebra.
Both theorems follow from the principle of well-founded induction~\cite{BaaderNipkow1998}.
First, a \emph{well quasi-ordering} on a set $X$ is a relation $( \succeq ) \subseteq X \times X$ subject to the following conditions.
\begin{itemize}
\item \textbf{Reflexivity}.
      If $x \in X$, then $x \succeq x$.
\item \textbf{Transitivity}.
      If $x \succeq y$ and $y \succeq z$, then $x \succeq z$.
\item \textbf{Well-Founded}.
      There does not exist an infinite chain $x_1 \succ x_2 \succ x_3 \succ \cdots$ where $x \succ y$ denotes $x \succeq y$ and $x \ne y$.
\end{itemize}
If $( \succeq )$ is well-founded, then the principle of well-founded induction states that a predicate $P( - )$ holds for all elements of $X$ if and only if,
{\small\begin{equation*}
    \forall x \in X,
    \left( \forall y \in X, x \succ y \implies P( y ) \right)
    \implies P( x ).
\end{equation*}\par}\noindent%
It can be shown that the inductive nature of $T( \Sigma_G )$ and $T( \Sigma_C )$ yields quasi well-orderings on their terms.
In the case of $T( \Sigma_G )$, the proof of~\cref{Prop:GateInd} follows almost immediately.
In the case of $T( \Sigma_C )$, some care is needed to show that this well quasi-ordering interacts well with $\textsf{type}( - )$.
Once this is established, the proof of~\cref{Prop:CircInd} also follows immediately.

Given an $S$-signature $\Sigma$, the quasi well-ordering $\succeq_{\Sigma}$ is constructed as follows.
First, define $R = \{ ( t, t_j ) \in T( \Sigma ) \times T( \Sigma ) : t = f( t_1, \ldots, t_n ) \land j \in [n] \}$.
This relation associates each term in $T( \Sigma )$ with its top-level sub-terms.
However, this relation is neither transitive nor reflextive.
This can be fixed by taking the transitive symmetric closure of $R$,
Formally, $( \succeq_{\Sigma} ) = \bigcup_{n=0}^{\infty} R^n$, where $R^0$ is the identity relation.
Then $( \succeq_{\Sigma} )$ is manifestly reflextive and transitive.
It remains to be shown that $( \succeq_{\Sigma} )$ is well-founded.
Recall that the least fixed point for $T( \Sigma )$ can be defined as $\bigcup_{n=1}^{\infty} T_n( \Sigma )$, where $T_n( \Sigma )$ is the set of terms obtained after $n$ iterations.
Then there exists a function $\textsf{lv}: T( \Sigma ) \to \mathbb{N}$ such that for each $\textsf{lv}( t )$ is the least $n$ such that $t \in T_n( \Sigma )$.
Clearly, if $t \succ_\Sigma t'$, then $\textsf{lv}( t ) > \textsf{lv}( t' )$.
If there did exist an infinite descending chain $x_1 \succ_\Sigma x_2 \succ_\Sigma x_3 \succ_\Sigma \cdots$ in $T( \Sigma )$, then there would also exist an infinite descending chain $\textsf{lv}( x_1 ) > \textsf{lv}( x_2 ) > \textsf{lv}( x_3 ) > \cdots$ in $\mathbb{N}$.
However, $( > )$ is well-founded, so this is a contradiction.
This means that $( \succ_\Sigma )$ is well-founded.

\gateprop*
\begin{proof}
    Since $\Sigma( \mathcal{G}, \mathcal{H} ) \subseteq T( \Sigma_G )$, then $( \succeq )$ restricts to $\Sigma( \mathcal{G}, \mathcal{H} )$.
    Clearly, this preserves reflexivity, transitivity, and well-foundedness.
    Then the proof proceeds by well-founded induction.
    Let $G \in \Sigma( \mathcal{G}, \mathcal{H} )$.
    Assume that for all $H \in \Sigma( \mathcal{H}, \mathcal{H} )$, if $G \succeq H$, then $P( H )$ holds.
    There are three cases to consider.
    \begin{enumerate}
    \item Assume that $G \in \mathcal{G}$.
          Then $P( G )$ holds by \textbf{Base Case (1)}.
    \item Assume that $G = \textsf{Rot}( M, f )$ for some $M \in \mathcal{H}$ and $f \in \mathcal{F}$.
          Then $P( G )$ holds by \textbf{Base Case (2)}.
    \item Assume that $G = C( H )$ for some $H \in T( \Sigma, \textsf{UMat} )$.
          Then $G \succ H$ by the definition of $( \succ )$.
          Since $T( \Sigma, \textsf{UMat} ) \subseteq \Sigma( \mathcal{G}, \mathcal{H} )$, then $P( H )$ holds by the inductive hypothesis.
          Then $P( G )$ holds by \textbf{Control Induction}.
    \end{enumerate}
    In each case, $P( G )$ holds.
    These cases exhaust all function symbols in $\Sigma_G$ except for those of type \textsf{Herm} and \textsf{Poly}.
    However, the terms of type \textsf{Herm} and \textsf{Poly} are omitted in $\Sigma( \mathcal{G}, \mathcal{H} )$.
    Then the cases are exhaustive, and $P( G )$ holds.
    Since $G$ was arbitrary, then by well-founded induction, $P( G )$ holds for each $G \in \Sigma( \mathcal{G}, \mathcal{H} )$.
\end{proof}

\begin{lemma}
    \label{Lemma:WireComp}
    Let $C_1 \in T( \Sigma_C )$ and $C_2 \in T( \Sigma_C )$.
    If either $C_2 \circ C_1 \in \Circ( \mathcal{G}, \mathcal{H} )$ or $C_1 // C_2 \in \Circ( \mathcal{G}, \mathcal{H} )$, then $C_1 \in \Circ( \mathcal{G}, \mathcal{H} )$ and $C_2 \in \Circ( \mathcal{G}, \mathcal{H} )$.
\end{lemma}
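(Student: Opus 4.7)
The plan is to prove this directly by unwinding the definition of $\textsf{type}(-)$ established in the universal-algebra setup, since membership in $\Circ(\mathcal{G}, \mathcal{H})$ is defined purely as the non-$\bot$ preimage of $\textsf{type}(-)$. I would handle the two cases ($C_2 \circ C_1$ and $C_1 // C_2$) in parallel, as they reduce to the same observation about the interpretation $v$ on the respective binary function symbols.

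For the sequential case, I would note that since $\textsf{type}(-)$ is the unique interpretation induced by $v$, we have $\textsf{type}(C_2 \circ C_1) = v(\circ)(\textsf{type}(C_2), \textsf{type}(C_1))$. Now consult the three-clause definition of $v(\circ)$: its first clause returns $\bot$ whenever either argument is $\bot$. Taking the contrapositive of this clause, if $\textsf{type}(C_2 \circ C_1) \ne \bot$, then both $\textsf{type}(C_1) \ne \bot$ and $\textsf{type}(C_2) \ne \bot$. By the definition $\Circ(\mathcal{G}, \mathcal{H}) = \{ C \in T(\Sigma) : \textsf{type}(C) \ne \bot \}$, this gives $C_1, C_2 \in \Circ(\mathcal{G}, \mathcal{H})$.

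The parallel case is strictly analogous, using the first clause of $v(//)$, which likewise propagates $\bot$ from either argument. Hence if $\textsf{type}(C_1 // C_2) \ne \bot$, both $\textsf{type}(C_1)$ and $\textsf{type}(C_2)$ must be non-$\bot$, so again $C_1, C_2 \in \Circ(\mathcal{G}, \mathcal{H})$.

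There is no real obstacle here: the lemma is a bookkeeping consequence of the fact that the $\bot$ value is absorbing in both $v(\circ)$ and $v(//)$. The only subtlety worth flagging is that the result requires $C_1, C_2 \in T(\Sigma_C)$ so that $\textsf{type}(C_1)$ and $\textsf{type}(C_2)$ are defined in the first place (otherwise one could not even form the composite term and apply the recursive clause of $v$). This hypothesis is given, so no additional work is needed.
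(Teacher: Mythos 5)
Your proof is correct and takes essentially the same approach as the paper: both arguments unwind the definition of $\textsf{type}(-)$ on the composite term and observe that a non-$\bot$ value forces both sub-terms to have non-$\bot$ type. The paper reads this off by positively identifying the third clause, while you phrase it as the contrapositive of the $\bot$-absorbing clause; this is the same step stated two ways.
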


\begin{proof}
    Let $C_1 \in T( \Sigma_C )$ and $C_2 \in T( \Sigma_C )$.
    There are two cases to consider.
    \begin{enumerate}
    \item Assume that $C_2 \circ C_1 \in \Circ( \mathcal{G}, \mathcal{H} )$.
          Then $\textsf{type}( C_2 \circ C_1 ) \in \mathbb{N} \times \mathbb{N}$ by definition.
          Then there exists $x \in \mathbb{N} \times \mathbb{N}$ and $y \in \mathbb{N} \times \mathbb{N}$ such that $\textsf{type}( C_2 ) = x$, $\textsf{type}( C_1 ) = y$, and $x_1 = y_2$.
          Then $\textsf{type}( C_1 ) \in \mathbb{N} \times \mathbb{N}$ and $\textsf{type}( C_2) \in \mathbb{N} \times \mathbb{N}$.
    \item Assume that $C_1 // C_2 \in \Circ( \mathcal{G}, \mathcal{H} )$.
          Then $\textsf{type}( C_2 // C_1 ) \in \mathbb{N} \times \mathbb{N}$ by definition.
          Then $\textsf{type}( C_1 ) \in \mathbb{N} \times \mathbb{N}$ and $\textsf{type}( C_2) \in \mathbb{N} \times \mathbb{N}$ by definition.
    \end{enumerate}
    In either case, $\textsf{type}( C_1 ) \in \mathbb{N} \times \mathbb{N}$ and $\textsf{type}( C_2 ) \in \mathbb{N} \times \mathbb{N}$.
    It follows by definition that  $C_1 \in \Circ( \mathcal{G}, \mathcal{H} )$ and $C_2 \in \Circ( \mathcal{G}, \mathcal{H} )$.
\end{proof}

\circprop*
\begin{proof}
    Since $\Circ( \mathcal{G}, \mathcal{H} ) \subseteq T( \Sigma_C )$, then $( \succeq )$ restricts to $\Circ( \mathcal{G}, \mathcal{H} )$.
    Clearly, this preserves reflexivity, transitivity, and well-foundedness.
    Then the proof proceeds by well-founded induction.
    Let $C \in \Circ( \mathcal{G}, \mathcal{H} )$.
    Assume that for each circuit $C' \in \Sigma( \mathcal{H}, \mathcal{H} )$, if $C \succeq C'$, then $P( C' )$ holds.
    There are four cases to consider.
    \begin{enumerate}
    \item Assume $C = \epsilon$.
          Then $P( G )$ holds by \textbf{Base Case (1)}.
    \item Assume $C \in \Sigma( \mathcal{G}, \mathcal{H} )$.
          Then $P( G )$ holds by \textbf{Base Case (2)}.
    \item Assume $C = C_2 \circ C_1$ for some $C_1 \in T( \Sigma_C )$ and $C_2 \in T( \Sigma_C )$.
          It follows that $C_1 \in \Circ( \mathcal{G}, \mathcal{H} )$ and $C_2 \in \Circ( \mathcal{G}, \mathcal{H} )$ by \cref{Lemma:WireComp}.
          Moreover, $C \succ C_1$ and $C \succ C_2$ by the definition of $( \succ )$.
          Then $P( C_1 )$ and $P( C_2 )$ hold by the inductive hypothesis.
          Then $P( C )$ holds by \textbf{Sequential Induction}.
    \item Assume $C = C_1 // C_2$ for some $C_1 \in T( \Sigma_C )$ and $C_2 \in T( \Sigma_C )$.
          It follows by a symmetric argument that $P( C )$ holds, in which \textbf{Sequential Induction} is replaced by \textbf{Parallel Induction}.
    \end{enumerate}
    These cases exhaust all of the function symbols in $\Sigma_C$.
    Then $P( C )$ holds.
    Since $C$ was arbitrary, then by well-founded induction, $P( C )$ holds for each $C \in \Circ( \mathcal{G}, \mathcal{H} )$.
\end{proof}

The remaining subsections will address the semantics of $\Circ( \mathcal{G}, \mathcal{H} )$.
In particular, premonoidal semantics and monoidal semantics will be given for $\Circ( \mathcal{G}, \mathcal{H} )$.
These should be understood as interpretations of $T( \Sigma_C )$ restricted to $\Circ( \mathcal{G}, \mathcal{H} )$.

\subsection{Categories and Sequential Composition}

A (small) category $\mathcal{C}$ describes a set of typed operations under sequential composition.
Formally, a \emph{category} is defined by the following data~\cite{MacLane2010}.
\begin{itemize}
\item A set $\mathcal{C}_0$ of types.
\item For each pair of types $( X, Y ) \in \mathcal{C}_0 \times \mathcal{C}_0$, a collection of operations $\mathcal{C}( X, Y )$.
      For each operation $f \in \mathcal{C}( X, Y )$, we write $X \xrightarrow{f} Y$.
\item For each triple of types $( X, Y, Z ) \in \mathcal{C}_0 \times \mathcal{C}_0 \times \mathcal{C}_0$, a \emph{sequential composition} function $\circ: \mathcal{C}( Y, Z ) \times \mathcal{C}( X, Y ) \to \mathcal{C}( X, Z )$.
\item For each type $X \in \mathcal{C}_0$, a trivial operation $1_X \in \mathcal{C}( X, X )$.
\end{itemize}
As in a monoid, composition should be associative and the trivial operations should be identity elements.
Then $\mathcal{C}$ is subject to the following conditions~\cite{MacLane2010}.
\begin{itemize}
\item If $X \xrightarrow{f} Y \xrightarrow{g} Z \xrightarrow{h} W$, then $h \circ (g \circ f) = (h \circ g) \circ f$.
\item If $X \xrightarrow{f} Y$, then $1_Y \circ f = f = f \circ 1_X$.
\end{itemize}

\begin{example}[Monoids Are Categories]
    \label{Ex:Monoid}
    This example shows that every monoid defines a one-type category.
    Let $M$ be a monoid with identity $e$..
    Then define a category $BM$ such that $(BM)_0 = \{ \star \}$ and $(BM)( \star, \star ) = M$.
    In this category, if $\star \xrightarrow{x} \star \xrightarrow{y} \star$, then $y \circ x := yx$.
    Clearly, $( \circ )$ is associative and has a trivial operation given by $1_\star := e$.
    In particular, $B( \mathbb{Q}^k )^*$ is a category.
    \qed
\end{example}

\begin{example}[Matrices Form Categories]
    \label{Ex:FHilb}
    Complex matrices form a category $\textbf{FHilb}$.
    The types in this category are natural numbers, corresponding to the dimensions of complex vector spaces.
    That is, $\textbf{FHilb}_0 = \mathbb{N}$.
    The operations in this category are complex matrices.
    More concretely, if $( n, m ) \in \textbf{FHilb}_0 \times \textbf{FHilb}_0$, then $\textbf{FHilb}( n, m )$ corresponds to the set of $m \times n$ matrices.
    In this category, if $x \xrightarrow{M} y \xrightarrow{N} z$, then $N \circ M := NM$.
    Clearly, $( \circ )$ is associative, with trivial operation for $n \in \textbf{FHilb}_0$ given by the $n \times n$ identity matrix.
    \qed
\end{example}

\begin{example}[Circuits Form Categories]
    \label{Ex:CircCat}
    Circuits over a gate set form a category $\mathcal{C}$.
    Let $\Sigma_0$ be a set of types and $\Sigma_1$ be a set of gates, such that each gate $G \in \Sigma_1$ has input type $\win( G )$ and output type $\wout( G )$.
    The types in the category correspond to the possible wire types.
    That is, $\mathcal{C}_0 = \Sigma_0$.
    The identities in this category are given by circuits without any gates.
    That is, for each type $X \in \mathcal{C}_0$, the identity operation $X \xrightarrow{1_X} X$ is a wire of type $X$ without any gates.
    For each gate $G \in \Sigma_1$, $G$ is a singleton circuit $G \in \mathcal{C}( \win( G ), \wout( G ) )$.
    Composition in $\mathcal{C}$ corresponds to sequential circuit composition.
    This is clearly unital and associative.
    In the case of $\Circ( \mathcal{G}, \mathcal{H} )$, $\Sigma_0 = \mathbb{N}$ and $\Sigma_1 = \Sigma( \mathcal{G}, \mathcal{H} )$.
    \qed
\end{example}

A functor is a structure-preserving mapping between categories.
Formally, if $\mathcal{C}$ and $\mathcal{D}$ are categories, then a \emph{functor} $F: \mathcal{C} \to \mathcal{D}$ from a category $\mathcal{C}$ to a category $\mathcal{D}$ consists of the following data~\cite{MacLane2010}.
\begin{itemize}
\item A translation of types $F_0: \mathcal{C}_0 \to \mathcal{D}_0$.
\item For each pair of types $( X, Y ) \in \mathcal{C}_0 \times \mathcal{C}_0$, a translation from the type $X \rightarrow Y$ to the type $F_0( X ) \rightarrow F_0( Y )$ via a family of maps $F_{X,Y}: \mathcal{C}( X, Y ) \to \mathcal{D}( F_0( X ), F_0( Y ) )$.
\end{itemize}
This data is subject to the following conditions~\cite{MacLane2010}.
\begin{itemize}
\item If $X \in \mathcal{C}_0$, then $F_{X,X}( 1_X ) = 1_{F_0( X )}$.
\item If $X \xrightarrow{f} Y \xrightarrow{g} Z$, then $F_{X,Z}( g \circ f ) = F_{Y,Z}( g ) \circ F_{X,Y}( f )$.
\end{itemize}
The functors of interest in this paper preserve both sequential composition and parallel composition.
This is explored in the next subsection.

\subsection{Premonoidal Categories and Parallel Composition}

In parallel computation, it is possible to run operations both sequentially and in parallel.
Let $X \xrightarrow{f} Y$ and $X' \xrightarrow{g} Y'$ be two processes running in parallel.
Then $( Y // g ) \circ ( f // X' )$ would denote a serialization of the trace where $f$ executes before $g$, and $( f // Y' ) \circ ( X // g )$ would denote a serialization of the trace where $g$ executes before $f$.
If $f$ and $g$ share memory, for example, then it may be the case that $( Y // g ) \circ ( f // X' ) \ne ( f // Y' ) \circ ( X // g )$.
Such operations are described by premonoidal categories~\cite{PowerRobinson1997}.
Formally, a \emph{premonoidal category} $\mathcal{C}$ is a category $\mathcal{C}$ with the following data~\cite{CurienMimram2017}.
\begin{itemize}
\item A trivial type $\mathbb{I} \in \mathcal{C}_0$.
\item For each type $X \in \mathcal{C}_0$, a functor $X // ( - )$ which executes operations on the right process.
\item For each type $Y \in \mathcal{C}_0$, a functor $( - ) // Y$ which executes operations on the left process.
\end{itemize}
This data is subject to the following conditions~\cite{CurienMimram2017}.
\begin{enumerate}
\item If $( X, Y ) \in \mathcal{C}_0 \times \mathcal{C}_0$, then $X // ( Y ) = ( X ) // Y$, which we denote $X // Y$.
\item If $( X, Y, Z ) \in \mathcal{C}_0 \times \mathcal{C}_0 \times \mathcal{C}_0$, then $( X // Y ) // Z = X // ( Y // Z )$.
\item If $X \in \mathcal{C}_0$, then $\mathbb{I} // X = X = X // \mathbb{I}$.
\item If $X \xrightarrow{f} Y$, then $\mathbb{I} // f = f = f // \mathbb{I}$.
\item If $X \xrightarrow{f} Y$ and $( Z, W ) \in \mathcal{C}_0 \times \mathcal{C}_0$, then $X // ( Y // f ) = ( X // Y ) // f$.
\item If $X \xrightarrow{f} Y$ and $( Z, W ) \in \mathcal{C}_0 \times \mathcal{C}_0$, then $( f // X ) // Y = f // ( X // Y )$.
\item If $X \xrightarrow{f} Y$ and $( Z, W ) \in \mathcal{C}_0 \times \mathcal{C}_0$, then $( X // f ) // Y = X // ( f // Y )$.
\end{enumerate}
Properties (1--3) ensure that $( // )$ defines a monoid on the types in $\mathcal{C}_0$, with $\mathbb{I}$ the unit.
This means that even if $( Y // g ) \circ ( f // X' ) \ne ( f // Y' ) \circ ( X // g )$, the input and output types will be the same regardless of the serialization of the trace.
Properties (4--7) state that there exists a unique way to execute an operation $f$ between an idle process of type $Z$ and an idle process of type $Y$.

\begin{example}[Monoids Are Premonoidal]
    \label{Ex:MonoidPMon}
    Recall the category $BM$ from~\cref{Ex:Monoid}.
    This category is trivially premonoidal.
    Since $( BM )_0 = \{ \star \}$, then a premonoidal structure on $BM$ is defined by the following data: a type $\mathbb{I} \in \mathcal{C}_0$; a functor $\star // ( - )$; a functor $( - ) // \star$.
    Clearly $\mathbb{I} = \star$ since $( BM )_0 = \{ \star \}$.
    Then $\star // ( - )$ and $( - ) // \star$ must act trivially by (3) and (4).
    These trivial acts trivially satisfy (1--2) and (5--7).
    Then $BM$ is a premonoidal category with only trivial composition in the parallel direction.
    \qed
\end{example}

\begin{example}[{$\mathbf{FHilb}$} is a Premonoidal Category]
    \label{Ex:MatPMon}
    Recall the category $\textbf{FHilb}$ form \cref{Ex:FHilb}.
    For each $n \in \textbf{FHilb}_0$, define $n \otimes ( - )$ to be the functor which map each type $x$ to $nx$ and each matrix $x \xrightarrow{M} y$ to $nx \xrightarrow{I_n \otimes M} ny$.
    Likewise, for each $n \in \textbf{FHilb}_0$, define $( - ) \otimes n$ to be the functor which map each type $x$ to $xn$ and each matrix $x \xrightarrow{M} y$ to $xn \xrightarrow{M \otimes I_n} yn$.
    Then $n \otimes ( - )$ and $( - ) \otimes m$ define a premonoidal structure on $\textbf{FHilb}$ with respect to the trivial type $\mathbb{I} = 1$.
    It is shown in~\cite{Wesley2025} that the functions which map parameters in $\mathbb{R}^k$ to operations in $\textbf{FHilb}$ also form a premonoidal category.
    We denote this category $\textbf{Param}( \mathbb{R}^k, \textbf{FHilb} )$.
    \qed
\end{example}

\begin{example}[Circuits Form Premonoidal Categories]
    \label{Ex:CircPMon}
    Recall from~\cref{Ex:CircCat} that circuits over a gate set form a category $\mathcal{C}$.
    For the purpose of this discussion, $( \otimes )$ will be used to denote the premnoidal product, and $( // )$ will be used to denote parallel wire composition.
    For each $X \in \mathcal{C}_0$, define $X \otimes ( - )$ to be the functor which maps each type $Y$ to type $X // Y$ and each circuit $Y \xrightarrow{C} Z$ to $1_X // C$.
    Likewise, for each $X \in \mathcal{C}_0$ define $( - ) \otimes X$ to be the functor which maps each type $Y$ to $Y // X$ and each circuit $Y \xrightarrow{C} Z$ to $C // 1_X$.
    That is, $X \otimes ( - )$ acts on circuits by introducing empty wires of type $X$ above the circuit, and $( - ) \otimes X$ acts on circuits by introducing empty wires of type $X$ below the circuit.
    Since the order the wires are introduced is inconsequential, then properties (1--2) and (5--7) are satisfied.
    To satisfy (3--4), a trivial type $\mathbb{I}$ must be selected such that parallel composition with $\mathbb{I}$ is the same as doing nothing.
    The only circuit with this property is the empty circuit, so $1_{\mathbb{I}}$ must be the empty circuit, and $\mathbb{I}$ must be the corresponding type.
    For example, $\mathbb{I} = 0$ in $\Circ( \mathcal{G}, \mathcal{H} )$.
    Note that $\Circ( \mathcal{G}, \mathcal{H} )$ also allows for terms of the form $C_1 // C_2$, where neither $C_1$ nor $C_2$ is the identity.
    By convention, we associate the term $C_1 // C_2$ with the semantic value $( C_1 // m ) \circ ( n // C_2 )$, where $\win( C_1 ) = n$ and $\wout( C_2 ) = m$.
    It will be shown in the next subsection why this convention is reasonable.
    \qed
\end{example}

A premonoidal functor is a structure-preserving map between premonoidal category.
Formally, if $\mathcal{C}$ and $\mathcal{D}$ are premonoidal categories, the a \emph{premonoidal functor $F: \mathcal{C} \to \mathcal{D}$} is a functor from $\mathcal{C}$ to $\mathcal{D}$ satisfying the following properties.
\begin{itemize}
\item $F_0( \mathbb{I}_{\mathcal{C}} ) = \mathbb{I}_{\mathcal{D}}$.
\item If $X \in \mathcal{C}_0$ and $Y \xrightarrow{f} Z$, then $F_{X // Y, X // Z}( X // f ) = F_0( X ) // F_{Y,Z}( f )$.
\item If $X \xrightarrow{f} Y$ and $Z \in \mathcal{C}_0$, then $F_{X // Z, Y // Z}( f // Z ) = F_{X,Y}( f ) // F_0( Z )$.
\end{itemize}
It will be shown in the next subsection that $\interp{-}$, $A(-)$, and $F_v(-)$ are all premonoidal functors.
It will follow that these maps are well-defined.

\subsection{Constructing Premonoidal Functors with Monoidal Signatures}

An important construction in category theory is the free premonoidal category.
These are the categories out of which it is easy to define premonoidal functors.
It must first be shown that every premonoidal category has an underlying monoidal signature $\Sigma$.
It will then be shown that every monoidal signature $\Sigma$ generates a premonoidal category $\Sigma^*$ such that every structure-preserving map out of $\Sigma$ defines a unique premonoidal functor out of $\Sigma^*$.
This subsection follows~\cite{CurienMimram2017}.

A \emph{monoidal signature} is a pair of sets $\Sigma_0$ and $\Sigma_1$ equipped with a pair of functions $s: \Sigma_0 \to ( \Sigma_1 )^*$ and $t: \Sigma_0 \to ( \Sigma_1 )^*$.
Intuitively, $\Sigma_0$ is the set of types in the category, and $\Sigma_1$ is the set of operations in the category.
The functions $s$ and $t$ pick out the input type and output type of each operation.
Since the category is premonoidal, then $s$ and $t$ map into $( \Sigma_1 )^*$ as opposed to $\Sigma_1$.
Given two monoidal signatures $\Sigma$ and $\Pi$, a structure-preserving transformation $\gamma: \Sigma \to \Pi$ is a pair of functions $\gamma_0: \Sigma_0 \to ( \Pi_0 )^*$ and $\gamma_1: \Sigma_1 \to \Pi_1$ such that $s \circ \gamma_1 = \gamma_0^* \circ s$ and $t \circ \gamma_1 = \gamma_0^* \circ t$.
That is, $\gamma$ respects sources and targets.

Every (small) premonoidal category $\mathcal{C}$ has an underlying monoidal signature $U( \mathcal{C} )$ defined as follows.
The set of types in $\mathcal{C}$ is $\mathcal{C}_0$, so $U( \mathcal{C} )_0 = \mathcal{C}_0$.
Then set of all operations in $\mathcal{C}$ is the disjoint union $U( \mathcal{C} )_1 = \bigsqcup_{(X,Y) \in \mathcal{C}_0 \times \mathcal{C}_0} \mathcal{C}( X, Y )$.
Then for each $f \in \mathcal{C}_1$, define $s( f ) = X$ and $t( f ) = Y$ where $( X, Y )$ is the unique element in $\mathcal{C}^0 \times \mathcal{C}^0$ such that $f \in \mathcal{C}( X, Y )$.
Moreover, if $F: \mathcal{C} \to \mathcal{D}$ is a premonoidal functor, then $U( F ): U( \mathcal{C} ) \to U( \mathcal{D} )$ is the structure-preserving map induced by the components of $F$.

Given a monoidal signature $\Sigma$, the \emph{free premonoidal category generated by $\Sigma$} is a premonoidal category $\mathcal{C}$ with a structure preserving inclusion $\iota: \Sigma \hookrightarrow U( \mathcal{C} )$ such that for each premonoidal category $\mathcal{D}$ and each structure-preserving map $\gamma: \Sigma \to U( \mathcal{D} )$, there exists a unique premonoidal functor $F: \mathcal{C} \to \mathcal{D}$ satisfying $U( F) \circ \iota = \gamma$.
In practice, this means that any structure preserving map out of $\Sigma$ defines a unique premonoidal functor out of $\mathcal{C}$ such that $F$ agrees with $\gamma$ when evaluated on the generating types and operations.
It can be shown that $\mathcal{C}$ is unique up to isomorphism, so we write $\Sigma^{\textbf{Pre}(*)}$ for \emph{the} premonoidal category generated by $\Sigma$.
Then, without loss of generality, $( \Sigma^{\textsf{Pre}(*)} )_0 = ( \Sigma_0 )^*$ and $F_0 = \gamma_0^*$.
The evaluation of $F$ on operations then follows inductively from the structure of a premonoidal category, starting from the operations in $\Sigma_1$.
The construction is tedious, and all details can be found in~\cite{CurienMimram2017}.

\begin{example}[Circuits and Free Premonoidal Categories]
    \label{Ex:FreePCirc}
    In~\cref{Ex:CircPMon}, it was shown that $\Circ( \mathcal{G}, \mathcal{H} )$ is a premonoidal category.
    Moreover, $\Circ( \mathcal{G}, \mathcal{H} )$ is the quotient of a free premonoidal category (this quotient is described in the next subsection).
    The monoidal signature $\Sigma$ used to generate this category is defined as follows.
    \begin{itemize}
    \item $\Sigma_0 = \{ \bullet \}$, since $\mathbb{N} \cong \{ \bullet \}^*$.
    \item $\Sigma_1 = \Sigma( \mathcal{G}, \mathcal{H} )$, since the gates in $\Sigma( \mathcal{G}, \mathcal{H} )$ are generating operations.
    \item $s, t: \Sigma_1 \to ( \Sigma_0 )^*$ correspond to $\win( - )$ and $\wout( - )$ respectively.
    \end{itemize}
    In the premonoidal case, parallel induction is restricted so that if $C \in \Circ( \mathcal{G}, \mathcal{H} )$ and $n \in \mathbb{N}$, then both $C // 1_n$ and $1_n // C$ are in $\Circ( \mathcal{G}, \mathcal{H} )$.
    Intuitively, premonoidal categories represent circuits as sequences of gates applied to subsets of adjacent wires, as opposed to directed acyclic graphs.
    \qed
\end{example}

\begin{example}[Semantic Interpretations]
    Let $\Sigma$ denote the monoidal signature defined in~\cref{Ex:FreePCirc}.
    The semantic interpretation map $\interp{-}$ can be defined as the free (pre)monoidal functor induced by some $\gamma: \Sigma \to U( \textbf{Param}( \mathbb{R}^k, \textbf{FHilb} ) )$.
    The first component of $\gamma$ is $\gamma_0( \bullet ) = 2$, since the state of a qubit is a $2$-dimensional vector space.
    The second component of $\gamma$ is defined to be the following interpretation of $T( \Sigma_G )$
    \begin{itemize}
    \item If $G \in \mathcal{G}$, then $v( G ) = f$ where $f( \theta ) = G$.
    \item If $M \in \mathcal{H}$, then $v( M ) = M$.
    \item If $p \in \mathcal{F}$, then $v( p ) = p$.
    \item $v( C ) = ( f \mapsto ( \theta \mapsto I \oplus f( \theta ) ) )$.
    \item $v( \textsf{Rot} ) = ( ( M, p ) \mapsto ( \theta \mapsto \cos( f( \theta ) ) I + i \sin( f( \theta ) ) M ) )$.
    \end{itemize}
    Then $\gamma$ defines a unique premonoidal functor denoted by $\interp{-}$.
    \qed
\end{example}

\begin{example}[Polynomial Semantics]
    Let $\Sigma$ denote the monoidal signature defined in~\cref{Ex:FreePCirc}.
    The polynomial semantics $\interp{-}_{\textsf{Poly}}$ can be defined as the free premonoidal functor induced by some $\gamma: \Sigma \to U( \textbf{PolyMat} )$.
    The first component of $\gamma$ is $\gamma_0( \bullet ) = 2$, since the polynomial semantics are meant to abstract the concrete semantics.
    The second component of $\gamma$ is defined to be the following interpretation of $T( \Sigma_G )$.
    \begin{itemize}
    \item If $G \in \mathcal{G}$, then $v( G ) = G$.
    \item If $M \in \mathcal{H}$, then $v( M ) = M$.
    \item If $p \in \mathcal{F}$, then $v( p ) = p$.
    \item $v( C ) = ( M \mapsto I \oplus M )$.
    \item $v( \textsf{Rot} ) = ( ( M, p ) \mapsto \textsf{CPoly}( f ) I + i \, \textsf{SPoly}( f ) M )$.
    \end{itemize}
    Then $\gamma$ defines a unique premonoidal functor denoted by $\interp{-}_{\textsf{Poly}}$.
    \qed
\end{example}

\begin{example}[Coefficient Abstraction]
    Let $\Sigma$ denote the monoidal signature defined in~\cref{Ex:FreePCirc}.
    The coefficient abstraction $A(-)$ can be defined as the free premonoidal functor induced by some $\gamma: \Sigma \to U( B( \mathbb{Q}^k )^* )$.
    The first component of $\gamma$ is $\gamma_0( \bullet ) = \star$, since $\star$ is the only type in $B( \mathbb{Q}^k )^*$.
    The second component of $\gamma$ is defined to be the following interpretation of $T( \Sigma_G )$.
    \begin{itemize}
    \item If $M \in \mathcal{H}$, then $v(M) = ()$.
    \item If $G \in \mathcal{G}$, then $v(G) = ()$.
    \item If $f \in \mathcal{F}$ and $f( \theta ) = a_1 \theta_1 + a_2 \theta_2 + \cdots + a_k + \theta_k + q$, then $v(f) = ( a_1, a_2, \ldots, a_k )$.
    \item $v( \textsf{Rot} ) = ( ( x, a ) \mapsto a )$ and $v( C ) = ( a \mapsto x )$.
    \end{itemize}
    Then $\gamma$ defines a unique premonoidal functor denoted by $A( - )$.
    \qed
\end{example}

\begin{example}[Syntactic Transformations]
    Let $\Sigma$ denote the monoidal signature defined in~\cref{Ex:FreePCirc}.
    Fix some $v \in \mathbb{Q}^k$.
    The syntactic transformation $F_v$ can be defined as the free premonoidal functor induced by some $\gamma: \Sigma \to U( \Sigma^{\textsf{Pre}(*)} )$.
    The first component of $\gamma$ is $\gamma( \bullet ) = \bullet$, since the number of wires is preserved by this family of syntactic transformations.
    The second component of $\gamma$ is defined to be the following interpretation of $T( \Sigma_G )$.
    \begin{itemize}
    \item If $G \in \mathcal{G}$, then $\gamma_1( G ) = G$.
    \item If $M \in \mathcal{H}$, then $\gamma_1( M ) = H$.
    \item If $f \in \mathcal{F}$ such that $f( \theta ) = a_1 \theta_1 + a_2 \theta_2 + \cdots + v_k \theta_k + q$, then $\gamma_1( f ) = g$ where $g( \theta ) = (a_1v_1) \theta_1 + (a_2v_2) + \theta_2 + \cdots + (a_kv_k) \theta_k$.
    \item $v( \textsf{Rot} ) = ( ( M, p ) \mapsto \textsf{Rot}( M, p ) )$ and $v( C ) = ( G \mapsto C( G ) )$.
    \end{itemize}
    Then $\gamma$ defines a unique premonoidal functor denoted by $F_v( - )$.
    \qed
\end{example}

\subsection{Monoidal Categories and Side-Effect Free Composition}

In many premonoidal categories, it is the case that for each pair of operations, $X \xrightarrow{f} Y$ and $X' \xrightarrow{g} Y'$, the equation $( Y // g ) \circ ( f // X' ) = ( f // Y' ) \circ ( X // g )$ holds.
From a computational point of view, this equation says that the operations $f$ and $g$ are side-effect free~\cite{PowerRobinson1997}.
When a premonoidal $\mathcal{C}$ satisfies this property, we say that $\mathcal{C}$ is a \emph{monoidal category}.

\begin{example}[Monoids as Monoidal Categories]
    \label{Ex:MonoidMon}
    Recall from~\cref{Ex:MonoidPMon} that $BM$ is a premonoidal category in a trivial way.
    If $BM$ is in fact a monoidal category, then $( \star // f ) \circ ( g // \star ) = ( g // \star ) \circ ( \star // f )$ for each pair of operations $\star \xrightarrow{f} \star$ and $\star \xrightarrow{g} \star$.
    Since $\star // ( - )$ and $( - ) // \star$ are trivial, then $BM$ is monoidal if and only if $fg = gf$ for all $f \in M$ and $g \in M$.
    In other words, $BM$ is a monoidal category if and only if $M$ is a commutative monoid.
    In particular, if $X$ is a set, then $B( X^* )$ is not a monoidal category.
    \qed
\end{example}

\begin{example}[{$\mathbf{FHilb}$} is a Monoidal Category]
    Recall from~\cref{Ex:MatPMon} that $\textbf{FHilb}$ is a premonoidal category with respect to the Kronecker tensor product $( \otimes )$.
    An important property of the Kronecker tensor product is bilinearity, which states that $(NM) \otimes (LK) = (M \otimes K)(N \otimes L)$ for any matrices $x \xrightarrow{M} y \xrightarrow{N} z$ and $x' \xrightarrow{K} y' \xrightarrow{L} z'$.
    This means that $( Y \otimes g ) \circ ( f \otimes X' ) = ( f \otimes Y' ) \circ ( X \otimes g )$ for any pair of matrices matrices $X \xrightarrow{f} Y$ and $X' \xrightarrow{g} Y'$.
    \qed
\end{example}

\begin{figure}[t]
    \begin{align*}
        \input{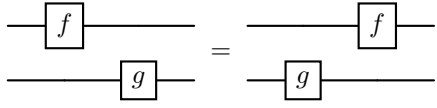}
    \end{align*}
    \vspace{-0.5em}
    \caption{A circuit diagram for the equation $( Y // g ) \circ ( f // X' ) = ( f // Y' ) \circ ( X // g )$.}
    \label{Fig:Bifunct}
    \vspace{-0.5em}
\end{figure}

\begin{example}[Circuits Form a Monoidal Category]
    Recall from~\cref{Ex:CircPMon} that circuits form monoidal categories with respect to parallel composition of wires.
    Graphically, the equation $( Y // g ) \circ ( f // X' ) = ( f // Y' ) \circ ( X // g )$ states that the two circuits in~\cref{Fig:Bifunct} should represent the same operation.
    Obviously, this is the case, so circuits are in fact monoidal categories.
    \qed
\end{example}

A monoidal functor is a premonoidal functor between monoidal categories.
This should make sense, since monoidal categories are premonoidal categories with extra properties, as opposed to extra data.
Given a monoidal signature $\Sigma$, it is also possible to construct a free monoidal category $\Sigma^*$.
Intuitively, $\Sigma^*$ is defined to be the premonoidal category $\Sigma^{\textsf{Pre}(*)}$ modulo the family of relations $( Y // g ) \circ ( f // X' ) = ( f // Y' ) \circ ( X // g )$.
This is constructed explicitly in~\cite{CurienMimram2017}.

\begin{example}[Coefficient Abstraction is not Monoidal]
    Recall from \cref{Ex:MonoidMon} that if $BM$ is a monoidal category, then $M$ is commutative.
    Since free monoids on more than one element are non-commutative, then $A( - )$ cannot be a monoidal functor.
    To illustrate way, consider the case where $k = 1$ and define two circuits, $C_1 = R_X( \theta_1 ) // 1_{\bullet}$ and $C_2 = 1_{\bullet} // R_X( -\theta_1 )$.
    It follows by definition of $A( - )$ that $A( C_1 ) = ( 1 )$ and $A( C_2 ) = ( -1 )$.
    Then,
    {\small\begin{equation*}
        A( C_1 ) \cdot A( C_2 )
        =
        ( 1, -1 )
        \ne
        ( -1, 1 )
        = A( C_2 ) \cdot A( C_1 ).
    \end{equation*}\par}\noindent%
    However, $C_1 \circ C_2 = C_2 \circ C_1$, when defined as a free monoidal category.
    \qed
\end{example}

It remains to be showing that the premonoidal abstraction $A( - )$ can be used to reason soundly about the monoidal semantics $\interp{-}$.
This is possible since $\interp{-}$ is definitionally a premonoidal functor.
Let $j: \Sigma^{\textsf{Pre(*)}} \rightarrow \Sigma^*$ denote the quotient map obtained through the construction of $\Sigma^*$.
The inclusion map from $\Sigma$ into $U( \Sigma^* )$ is precisely $U( j ) \circ \iota$.
If $\gamma: \Sigma \to U( \textbf{Param}( \mathbb{R}^k, \textbf{FHilb} ) )$ is the defining map for $\interp{-}$, then $\interp{-}$ is the unique map such that $U( \interp{-} ) \circ ( U( j ) \circ \iota ) = \gamma$.
There also exists a unique premonoidal functor $F: \Sigma^{\textsf{Pre(*)}} \to \textbf{Param}( \mathbb{R}^k, \textbf{FHilb} )$ such that $F$ is the solution to  $U( - ) \circ \iota = \gamma$.
However, $\interp{-} \circ j$ is a solution to $U( - ) \circ \iota = \gamma$.
Then $F = \interp{-} \circ j$, so the monoidal functor defined by $\gamma$ is precisely the quotient of the premonoidal functor defined by $\gamma$.
Then it suffices to prove \cref{Thm:QuantElim} and \cref{Thm:ProbEquiv} in the premonoidal setting.

\section{Proving the Correctness of the Polynomial Abstraction Bounds}
\label{Appendix:Poly}

In~\cref{Append:Poly:Abs}, the soundness of the polynomial abstract is established.
In~\cref{Append:Poly:Prelim}, some lemmas are introduced which show how certain functions naturally satisfy (B1) through to (B3).
In~\cref{Append:Poly:Gate}, these lemmas are combined with~\cref{Prop:GateInd} to show that all gates in $\Sigma_{\mathbb{Z}}( \mathcal{G}, \mathcal{H} )$ satisfy $\Bnd( - )$ when viewed as circuits with only one gate.
In~\cref{Append:Poly:Circ}, this result is then extended to show that all circuits in $\ZCirc( \mathcal{G}, \mathcal{H} )$ satisfy $\pBnd( - )$.

\subsection{Establishing the Polynomial Abstraction}
\label{Append:Poly:Abs}

This section shows that the polynomial abstraction is sound with respect to $\interp{-}$.

\polyabs*
\begin{proof}
    Let $\rho_j = e^{-i\theta_j/2}$ for each $j \in [k]$.
    First, it is shown that $\pinterp{-}$ holds for singleton circuits.
    This follows by \cref{Prop:GateInd}.
    \begin{itemize}
    \item \textbf{Base Case (1)}.
          Let $G \in \mathcal{G}$.
          Then by definition, $\interp{G}( \theta_1, \ldots, \theta_k ) = G = \pinterp{G}( \rho_1, \ldots, \rho_k )$.
    \item \textbf{Base Case (2)}.
          Let $M \in \mathcal{H}$ and $f \in \mathcal{F}$ have integral coefficients.
          Then by construction, the following equations hold.
          {\small\begin{align*}
            \CPoly( f )( \rho_1, \ldots, \rho_k ) &= \cos( -f( \theta ) / 2 )
            &
            \SPoly( f )( \rho_1, \ldots, \rho_k ) &= \sin( -f( \theta ) / 2 )
          \end{align*}\par}\noindent%
          Then by definition, the following equation holds.
          {\small\begin{align*}
              \pinterp{R_H( f )}( \rho_1, \ldots, \rho_k )
              &=
              \CPoly( f )( \rho_1, \ldots, \rho_k ) I + i\,\SPoly( f )( \rho_1, \ldots, \rho_k ) M
              \\
              &=
              \cos( -f( \theta ) / 2 ) I + i\cos( -f( \theta ) / 2 ) M
              \\
              &=
              \interp{R_H( f )}( \theta_1, \ldots, \theta_k )
          \end{align*}\par}\noindent%
    \item \textbf{Control Induction}.
          Let $G \in \Sigma_{\mathbb{Z}}( \mathcal{G}, \mathcal{H} )$.
          Assume that $G$ is unitary gate and satisfies $\interp{G}( \theta_1, \ldots, \theta_k ) = \pinterp{G}( \rho_1, \ldots, \rho_k )$.
          Since $G$ is unitary, then there exists some $n \in \mathbb{N}$ such that $\win( G ) = n = \wout( G )$.
          Then by definition, the following equation holds.
          {\small\begin{equation*}
              \pinterp{C( G )}( \rho_1, \ldots, \rho_k )
              =
              I_{2^n} \oplus \pinterp{G}( \rho_1, \ldots, \rho_k )
              =
              I_{2^n} \oplus \interp{G}( \theta_1, \ldots, \theta_k )
              =
              \interp{C( G )}( \theta_1, \ldots, \theta_k )
          \end{equation*}\par}\noindent%
    \end{itemize}
    Then by \cref{Prop:GateInd}, $\interp{G}( \theta_1, \ldots, \theta_k ) = \pinterp{C}\left( e^{-i\theta_1/2}, \ldots, e^{i\theta_k/2} \right)$ for all $G \in \Sigma_{\mathbb{Z}}( \mathcal{G}, \mathcal{H} )$.
    This can be extended to all of $\Circ( \mathcal{G}, \mathcal{H} )$ by \cref{Prop:CircInd}.
    \begin{itemize}
    \item \textbf{Base Case (1)}.
          By definition, $\pinterp{\epsilon}( \rho_1, \ldots, \rho_k ) = I_2 = \interp{\epsilon}( \theta_1, \ldots, \theta_k )$.
    \item \textbf{Base Case (2)}.
          From above, $\interp{G}( \theta_1, \ldots, \theta_k ) = \pinterp{C}( \rho_1, \ldots, \rho_k )$ for all $G \in \Sigma_{\mathbb{Z}}( \mathcal{G}, \mathcal{H} )$.
    \item \textbf{Parallel Induction}.
          Let $C_1 \in \ZCirc( \mathcal{G}, \mathcal{H} )$ and $C_2 \in \ZCirc( \mathcal{G}, \mathcal{H} )$.
          Assume that both $\pinterp{C_1}( \rho_1, \ldots, \rho_k ) = \interp{C_1}( \theta_1, \ldots, \theta_k )$ and $\pinterp{C_2}( \rho_1, \ldots, \rho_k ) = \interp{C_2}( \theta_1, \ldots, \theta_k )$.
          Then by definition, the following equation holds.
          {\small\begin{align*}
              \pinterp{C_1 // C_2}( \rho_1, \ldots, \rho_k )
              &=
              \pinterp{C_1}( \rho_1, \ldots, \rho_k ) \otimes \pinterp{C_2}( \rho_1, \ldots, \rho_k )
              \\
              &=
              \interp{C_1}( \theta_1, \ldots, \theta_k ) \otimes \interp{C_2}( \theta_1, \ldots, \theta_k )
              \\
              &=
              \interp{C_1 // C_2}( \theta_1, \ldots, \theta_k )
          \end{align*}\par}\noindent%
    \item \textbf{Sequential Induction}.
          Let $C_1 \in \ZCirc( \mathcal{G}, \mathcal{H} )$ and $C_2 \in \ZCirc( \mathcal{G}, \mathcal{H} )$.
          Assume that both $\pinterp{C_1}( \rho_1, \ldots, \rho_k ) = \interp{C_1}( \theta_1, \ldots, \theta_k )$ and $\pinterp{C_2}( \rho_1, \ldots, \rho_k ) = \interp{C_2}( \theta_1, \ldots, \theta_k )$ with $\wout( C_1 ) = \win( C_2 )$.
          Then by definition, the following equation holds.
          {\small\begin{align*}
              \pinterp{C_2 \circ C_1}( \rho_1, \ldots, \rho_k )
              &=
              \pinterp{C_2}( \rho_1, \ldots, \rho_k ) \pinterp{C_1}( \rho_1, \ldots, \rho_k )
              \\
              &=
              \interp{C_2}( \theta_1, \ldots, \theta_k ) \interp{C_1}( \theta_1, \ldots, \theta_k )
              \\
              &=
              \interp{C_2 \circ C_1}( \theta_1, \ldots, \theta_k )
          \end{align*}\par}\noindent%
    \end{itemize}
    Then by \cref{Prop:GateInd}, $\interp{C}( \theta_1, \ldots, \theta_k ) = \pinterp{C}\left( e^{-i\theta_1/2}, \ldots, e^{i\theta_k/2} \right)$ for all $C \in \ZCirc( \mathcal{G}, \mathcal{H} )$.
    Then $\pinterp{-}$ is a polynomial abstraction.
\end{proof}

\subsection{Preliminary Lemmas}
\label{Append:Poly:Prelim}

The section provides preliminary lemmas to prove~\cref{Thm:PolyMat}.
The first lemma (\cref{Lemma:ConstOp}) shows that constant polynomials trivially satisfy (B1) through to (B3).
The second lemma (\cref{Lemma:TrigPoly}) shows that the polynomials corresponding to $\sin( - )$ and $\cos( - )$ respect stricter versions of (B1) through to (B3), denoted (P1) through to (P3).

\begin{lemma}
    \label{Lemma:ConstOp}
    Let $C \in \ZCirc( \mathcal{G}, \mathcal{H} )$ with $n = \win( C )$ and $m = \wout( C )$ and $\interp{-}_{*}$ a polynomial abstraction.
    For each pair of indices $s \in [2^n]$ and $t \in [2^m]$, if $( \interp{C}_{*} )_{s,t}$ is a constant polynomial, then $C$ satisfies (B1) through to (B3) with respect to $\interp{-}_{*}$ and $( s, t )$.
\end{lemma}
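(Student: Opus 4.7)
The plan is to observe that the lemma is essentially a triviality: the right-hand sides of (B1), (B2), and (B3) are all sums of nonnegative quantities, so any polynomial with all degrees equal to $0$ (or the zero polynomial) automatically satisfies them. No induction on $C$ is needed, since the hypothesis is about the specific entry $(\interp{C}_{*})_{s,t}$, not about the structure of $C$.

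First, I would check that each summand on the right-hand sides is nonnegative. Since $|a_j| \geq 0$ for all $j$, the bounds in (B1) and (B2) are nonnegative sums. For (B3), I would verify that $\kappa(a) \geq 0$ for every $a \in A(C)$: by the definitions at the start of the paper, $a_j^{+} = \max(0, a_j) \geq 0$ and $-a_j^{-} = -\min(0, a_j) = \max(0, -a_j) \geq 0$, so both sums inside the max defining $\kappa(a)$ are nonnegative, hence $\kappa(a) \geq 0$. Therefore $\sum_{a \in A(C)} \kappa(a) \geq 0$.

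Next, I would unpack the hypothesis that $f := (\interp{C}_{*})_{s,t}$ is a constant polynomial. Writing $f$ in the form $\sum_{t \in T} a_t \prod_{j=1}^{k} z_j^{t_j}$, there are two subcases: either $T = \varnothing$ (so $f = 0$ and every degree is $-\infty$ under the standard convention $\max \varnothing = -\infty$), or $T = \{(0,\ldots,0)\}$ (so the only contributing exponent tuple is the zero tuple and each of $\deg^{+}_{z_j}(f)$, $\deg^{-}_{z_j}(f)$, $\deg^{+}(f)$ evaluates to $0$). In either case, all three degrees are at most $0$.

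Combining these two observations, each of $\deg^{+}_{z_j}(f), \deg^{-}_{z_j}(f), \deg^{+}(f) \leq 0 \leq \sum_{a \in A(C)} |a_j|$ and $\leq \sum_{a \in A(C)} \kappa(a)$, establishing (B1)--(B3) directly. The only step requiring any care at all is confirming $\kappa(a) \geq 0$, which is immediate from the sign conventions; beyond that there is no substantive obstacle. This lemma will then be used downstream as the base case for showing that elements of $\mathcal{G}$, whose polynomial abstractions are constant matrices, automatically satisfy $\pBnd(-)$.
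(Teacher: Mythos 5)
Your proposal is correct and follows essentially the same route as the paper's proof: observe that the right-hand sides of (B1)--(B3) are nonnegative (with the only point requiring a check being $\kappa(a)\ge 0$, verified exactly as you do via $a_j^{+}\ge 0$ and $-a_j^{-}\ge 0$), while a constant polynomial has all relevant degrees at most $0$ (or $-\infty$ in the zero case). The only cosmetic difference is ordering — the paper interleaves the constancy and nonnegativity observations within each of the (B1)/(B2) and (B3) cases, whereas you front-load the nonnegativity — but the mathematical content is identical.
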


\begin{proof}
    Let $s \in [2^n]$ and $t \in [2^m]$ where $n = \win( C )$ and $m = \wout( G )$.
    It remains to be shown that $C$ satisfies (B1) through to (B3) with respect to $\interp{-}_{*}$ and $( s, t )$.
    Let $f = ( \interp{C}_{*} )_{s,t}$.
    \begin{itemize}
    \item Let $j \in [k]$.
          Clearly $\sum_{\alpha \in A(C)} |\alpha_j| \ge 0$.
          Since $f$ is a constant polynomial, then either $f = 0$ and $\deg_{z_j}^{\pm}( f ) = -\infty$, or $f \ne 0$ and $\deg_{z_j}^{\pm}( f ) = 0$.
          In either case, $\deg_{z_j}^{\pm}( f ) \le \sum_{\alpha \in A(C)} |\alpha_j|$.
          Since $j$ was arbitrary, then $C$ satisfies both (B1) and (B2) with respect to $\interp{-}_{*}$ and $( s, t )$.
    \item Since $f$ is a constant polynomial, then either $f = 0$ and $\deg^{+}( f ) = -\infty$, or $f \ne 0$ and $\deg^{+}( f ) = 0$.
          In either case, $\deg^{+}( f ) \le 0$.
          Let $\alpha \in A( G )$ and $j \in [k]$.
          If $\alpha_j \ge 0$, then $\alpha_j^{+} \ge 0$ and $-\alpha_j^{-} = 0$.
          If $\alpha_j < 0$, then $\alpha_j^{+} = 0$ and $-\alpha_j^{-} > 0$.
          In either case, $\alpha_j^{+} \ge 0$ and $-\alpha_j^{-} \ge 0$.
          Since $j$ was arbitrary, then $\sum_{j=1}^{k} \alpha_k^{+} \ge 0$ and $\sum_{j=1}^{k} -\alpha_j^{-} \ge 0$.
          Then $\max\{ \sum_{j=1}^{k} \alpha_k^{+}, \sum_{j=1}^{k} -\alpha_j^{-} \} \ge 0$.
          Since $\alpha$ was arbitrary, then the following inequality holds by the monotonicity of sums.
          {\small\begin{equation*}
              \sum_{\alpha \in A(G)} \max\left\{ \sum_{j=1}^{k} \alpha_j^{+}, \sum_{j=1}^{k} -\alpha_j^{-} \right\}
              \ge
              \sum_{\alpha \in A(G)} 0
              =
              0
              \ge 
              \deg^{+}( f )
          \end{equation*}\par}\noindent%
          Then $C$ satisfies (B3) with respect to $\interp{-}_{*}$ and $( s, t )$.
    \end{itemize}
    In conclusion, $C$ satisfies (B1) through to (B3) with respect to $\interp{-}_{*}$ and $( s, t )$.
\end{proof}

\begin{lemma}
    \label{Lemma:TrigPoly}
    If $p( z_1, \ldots, z_k ) = a_1 z_1 + \cdots + a_k z_k + r \pi$ with $a \in \mathbb{Z}^k \setminus \{ 0 \}$ and $r \in \mathbb{Q}$,
    \begin{itemize}
    \item (P1). $\deg_{z_j}^{+}( \SPoly( p ) ) = \deg_{z_j}^{+}( \CPoly( p ) ) = |a_j|$ for each $j \in [k]$,
    \item (P2). $\deg_{z_j}^{-}( \SPoly( p ) ) = \deg_{z_j}^{-}( \CPoly( p ) ) = |a_j|$ for each $j \in [k]$,
    \item (P3). $\deg^{+}(\SPoly( p ) ) = \deg^{+}( \CPoly( p ) ) = \kappa( a )$.
    \end{itemize}
\end{lemma}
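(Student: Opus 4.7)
The plan is to reduce the claim directly to the definition of $\CPoly$ and $\SPoly$, since in this setting each polynomial consists of only two monomials. Writing $c = e^{-ir\pi/2}$, the definitions give
\[
    \CPoly(p) = \tfrac{c}{2} \prod_{j=1}^{k} z_j^{a_j} + \tfrac{1}{2c} \prod_{j=1}^{k} z_j^{-a_j},
    \qquad
    \SPoly(p) = \tfrac{-ic}{2} \prod_{j=1}^{k} z_j^{a_j} + \tfrac{i}{2c} \prod_{j=1}^{k} z_j^{-a_j}.
\]
So both polynomials have support $T = \{\alpha, -\alpha\}$ where $\alpha = (a_1, \ldots, a_k)$, with four nonzero coefficients (each being $\pm c/2$, $\pm 1/(2c)$, $\pm ic/2$, or $\pm i/(2c)$, all nonzero since $c$ is a nonzero root of a unit).

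The first step is to verify that no cancellation occurs, so that $T$ is genuinely the support. Since $a \ne 0$, there exists some $j_0 \in [k]$ with $a_{j_0} \ne 0$, hence $\alpha \ne -\alpha$, so the two monomials are distinct and cannot combine. It follows that the support of both $\CPoly(p)$ and $\SPoly(p)$ is exactly $\{\alpha, -\alpha\}$.

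The second step is to compute each of the three degree quantities from the definition. For (P1), applying $\deg_{z_j}^{+}(f) = \max\{t_j^{+} : t \in T\}$ with $T = \{\alpha, -\alpha\}$ gives $\max\{a_j^{+}, (-a_j)^{+}\} = \max\{a_j^{+}, -a_j^{-}\} = |a_j|$, and symmetrically for (P2) using $\deg_{z_j}^{-}(f) = \max\{-t_j^{-} : t \in T\}$. For (P3), applying $\deg^{+}(f) = \max\{\sum_{j=1}^{k} t_j^{+} : t \in T\}$ gives
\[
    \max\!\left\{\sum_{j=1}^{k} a_j^{+},\ \sum_{j=1}^{k} (-a_j)^{+}\right\}
    = \max\!\left\{\sum_{j=1}^{k} a_j^{+},\ \sum_{j=1}^{k} -a_j^{-}\right\}
    = \kappa(a),
\]
matching the definition of $\kappa$ from the coefficient bounded semantics.

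There is no real obstacle here beyond bookkeeping; the statement is essentially an unpacking of definitions, and the only subtlety is checking that the two monomials in the support are truly distinct (which is exactly where the hypothesis $a \ne 0$ is used) and that the coefficients $c/2$ and $1/(2c)$ (and their $i$-twisted variants) do not vanish. Once both checks are in place, the three equalities fall out immediately by matching the max-over-support formulas for positive, negative, and total positive degrees against the two exponent vectors $\alpha$ and $-\alpha$.
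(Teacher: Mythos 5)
Your proof is correct and takes essentially the same approach as the paper's: both arguments observe that each of $\CPoly(p)$ and $\SPoly(p)$ is a two-term Laurent polynomial with exponent vectors $\alpha = (a_1,\ldots,a_k)$ and $-\alpha$, use $a \neq 0$ to rule out cancellation (so neither polynomial vanishes and the support is exactly $\{\alpha,-\alpha\}$), and then read off the degrees directly from the definitions. The only cosmetic difference is that the paper handles (P1) by a three-way case split on the sign of $a_j$ whereas you collapse it into the single identity $\max\{a_j^{+}, -a_j^{-}\} = |a_j|$, and you explicitly spell out (P2) by the symmetric formula while the paper silently leaves it to symmetry.
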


\begin{proof}
    Let $f = \CPoly( p )$ and $g = \SPoly( p )$.
    Since $a \ne 0$, then $f \ne 0$ and $g \ne 0$.
    It must be shown that $f$ and $g$ satisfy (P1) through to (P3).
    \begin{itemize}
    \item Let $j \in [k]$.
          There are three cases to consider.
          \begin{enumerate}
          \item Assume that $a_j = 0$.
                Then $z_j$ appears in neither $f$ nor $g$.
                Since $f \ne 0$ and $g \ne 0$, then $\deg_{z_j}^{+}( f ) = 0$ and $\deg_{z_j}^{+}( g ) = 0$.
                Then $\deg_{z_j}^{+}( f ) = \deg_{z_j}^{+}( g ) = 0 = |a_j|$.
          \item Assume that $a_j > 0$.
                Then $\deg_{z_j}^{+}( f ) = a_j$ and $\deg_{z_j}^{+}( g ) = a_j$.
                Since $a_j > 0$, then $a_j = |a_j|$.
                Then $\deg_{z_j}^{+}( f ) = \deg_{z_j}^{+}( g ) = a_j = |a_j|$.
          \item Assume that $a_j < 0$.
                Then $\deg_{z_j}^{+}( f ) = -a_j$ and $\deg_{z_j}^{+}( g ) = -a_j$.
                Since $a_j < 0$, then $-a_j = |a_j|$.
                Then $\deg_{z_j}^{+}( f ) = \deg_{z_j}^{+}( g ) = -a_j = |\alpha_j|$.
          \end{enumerate}
          In each case, $\deg_{z_j}^{+}( f ) = \deg_{z_j}^{+}( g ) = |a_j|$.
          Since $j$ was arbitrary, then $f$ and $g$ satisfy (P1).
    \item Since $\alpha \ne -\alpha$, the the following equation holds.
          {\small\begin{equation*}
              \deg^{+}( f ) =
              \deg^{+}( g ) =
              \max\left\{ \deg^{+}\left( \prod_{j=1}^{k} (x_k)^{\alpha_j} \right), \deg^{+}\left( \prod_{j=1}^{k} (x_k)^{-\alpha_j} \right) \right\}
          \end{equation*}\par}\noindent%
          Then by the additivity of $\deg^{+}( - )$,
          {\small\begin{align*}
              \deg^{+}\left( \prod_{j=1}^{k} (z_k)^{\alpha_j} \right)
              &=
              \sum_{j=1}^{k} \alpha_j^+
              &\text{and}&&
              \deg^{+}\left( \prod_{j=1}^{k} (z_k)^{-\alpha_j} \right)
              &=
              \sum_{j=1}^{k} -\alpha_j^-.
          \end{align*}\par}\noindent%
          It follows that $\deg^{+}( f ) = \deg^{+}( g ) = \kappa( a )$.
          Then $f$ and $g$ satisfy (P3).
    \end{itemize}
    Therefore, $f$ and $g$ satisfy (P1) through to (P3).
\end{proof}

\subsection{Establishing Degree Bounds for the Gate Set}
\label{Append:Poly:Gate}

This section uses \cref{Prop:GateInd} to prove that all gates in $\Sigma_{\mathbb{Z}}( \mathcal{G}, \mathcal{H} )$ satisfy $\pBnd( - )$ when viewed as singleton circuits.
For the sake of readability, each case in \cref{Prop:GateInd} is presented as a lemma.
These lemmas are combined in \cref{Lemma:PolyGates} to prove that $\pBnd( - )$ is always satisfied.

\begin{lemma}
    \label{Lemma:GenParamFree}
    Let $\interp{-}_{*}$ be a polynomial abstraction.
    If $C \in \ZCirc( \mathcal{G}, \mathcal{H} )$ and there exists a complex matrix $M$ such that $\interp{G}_{*} = M$, then $\Bnd_{*}( C )$.
\end{lemma}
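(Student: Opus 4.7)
The plan is to reduce the claim directly to \cref{Lemma:ConstOp}. The hypothesis supplies a single complex matrix $M$ with $\interp{G}_{*} = M$, where $G$ is a parameter-free generator in $\Sigma(\mathcal{G}, \mathcal{H})$ whose singleton-circuit instantiation is the circuit $C$ at hand. Since $M$ has no dependence on the indeterminates $z_1, \ldots, z_k$, each of its scalar entries is a constant element of $\mathbb{C} \subseteq \mathbb{C}[z_1, z_1^{-1}, \ldots, z_k, z_k^{-1}]$.

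First I would promote the matrix-level hypothesis $\interp{G}_{*} = M$ to the entry-level statement that, for every pair of indices $(s, t) \in [2^{\win(C)}] \times [2^{\wout(C)}]$, the $(s, t)$-entry of $\interp{C}_{*}$ equals the scalar $M_{s,t}$ and is therefore a constant Laurent polynomial. This transfer is the only step that requires identifying the singleton circuit $C$ with its underlying generator $G$, and it is immediate from the definition of the polynomial abstraction on a single-gate term.

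Having verified that each matrix entry is a constant polynomial, I would invoke \cref{Lemma:ConstOp} pointwise. That lemma already established that any constant Laurent polynomial entry of a polynomial abstraction on a circuit in $\ZCirc(\mathcal{G}, \mathcal{H})$ automatically satisfies (B1), (B2), and (B3): the left-hand sides of those inequalities are either $-\infty$ (if the constant is zero) or $0$ (if it is nonzero), while the right-hand sides are always nonnegative sums of $|\alpha_j|$ and $\kappa(\alpha)$ terms taken over the coefficient sequence $A(C)$. Ranging over all index pairs $(s, t)$ then yields $\Bnd_{*}(C)$ by the definition of coefficient-bounded semantics.

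There is no genuine obstacle: the quantitative content has been absorbed into \cref{Lemma:ConstOp}, and the present lemma functions as a packaging of that result at the level of primitive parameter-free generators. The only mildly delicate point is ensuring that the scope of the existential in the hypothesis is correctly tied to the generator $G$ (as opposed to being read as a statement about $\interp{C}_{*}$ globally), so that $A(C)$ retains whatever generator-level data it inherits from $G$ while the degree bounds on the right-hand sides of (B1)--(B3) remain vacuously available.
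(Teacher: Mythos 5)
Your proposal is correct and takes essentially the same route as the paper's proof: fix $(s,t)$, observe that each entry $(\interp{C}_{*})_{s,t}$ equals the constant $M_{s,t}$, invoke \cref{Lemma:ConstOp}, and quantify over $(s,t)$ to conclude $\Bnd_{*}(C)$. One small caveat on your framing: the $G$ in the hypothesis is a typo for $C$, and the lemma is later applied not only to singleton generators but also to $\epsilon$ (in the proof of \cref{Thm:PolyMat}), so reading $C$ strictly as the singleton-circuit instantiation of a generator is narrower than the lemma's actual use, though the proof itself is unaffected.
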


\begin{proof}
    Let $s \in [2^n]$ and $t \in [2^m]$ where $n = \win( G )$ and $m = \wout( G )$.
    Then $\left( \interp{C}_{*} \right)_{s,t} = M_{s,t}$ is a constant polynomial.
    Then by \cref{Lemma:ConstOp}, $C$ satisfies (B1) through to (B3) with respect to $\interp{-}_{*}$ an $( s, t )$.
    Since $s$ and $t$ were arbitrary, then $\Bnd_{*}( G )$ holds.
\end{proof}

\begin{lemma}
    \label{Lemma:BasicRot}
    If $M \in \mathcal{H}$ and $p \in \mathcal{F}$ has integral coefficients, then $\Bnd( R_M( p ) )$.
\end{lemma}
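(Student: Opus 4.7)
The plan is to compute $\pinterp{R_M(p)} = \CPoly(p) I + i\,\SPoly(p) M$ entry-wise and check each of (B1), (B2), (B3) directly. Because $M$ is a constant Hermitian matrix (no parameters), each entry at position $(s,t)$ is simply the $\mathbb{C}$-linear combination
\[
\bigl(\pinterp{R_M(p)}\bigr)_{s,t} \;=\; \delta_{s,t}\,\CPoly(p) \;+\; i\,M_{s,t}\,\SPoly(p),
\]
where $\delta_{s,t}$ is the Kronecker delta. The coefficient sequence $A(R_M(p))$ is the singleton $((a_1,\ldots,a_k))$, so the right-hand sides appearing in (B1), (B2), (B3) collapse to $|a_j|$, $|a_j|$, and $\kappa(a)$ respectively. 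The task is therefore to bound $\deg_{z_j}^+$, $\deg_{z_j}^-$, and $\deg^+$ of each entry by these quantities.

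I would split the argument into two cases according to whether the coefficient vector $a = (a_1, \ldots, a_k)$ is zero. If $a = 0$, then $\CPoly(p)$ and $\SPoly(p)$ reduce to constants in $z_1, \ldots, z_k$ (the sums in their definitions collapse since all exponents vanish), so every entry of $\pinterp{R_M(p)}$ is a constant polynomial, and \cref{Lemma:ConstOp} applies directly. If $a \ne 0$, I invoke \cref{Lemma:TrigPoly}, which yields $\deg_{z_j}^{\pm}(\CPoly(p)) = \deg_{z_j}^{\pm}(\SPoly(p)) = |a_j|$ and $\deg^{+}(\CPoly(p)) = \deg^{+}(\SPoly(p)) = \kappa(a)$. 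The bounds then transfer to any $\mathbb{C}$-linear combination via the standard subadditivity inequalities $\deg_{x_j}(f+g) \le \max\{\deg_{x_j}(f),\deg_{x_j}(g)\}$ and $\deg(cf) \le \deg(f)$ for $c \in \mathbb{C}$ (applied analogously to $\deg^{-}$ and $\deg^{+}$ by clearing denominators).

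To conclude, once each entry is shown to satisfy the three degree bounds with the singleton coefficient sums, I read off $\Bnd_{\textsf{Poly}}(R_M(p))$ directly from the definition. The argument is essentially a mechanical combination of \cref{Lemma:TrigPoly} with the subadditive behaviour of polynomial degrees, and I do not anticipate any genuine obstacle --- the only mild care point is that the degree inequalities for sums are not tight (potential cancellation could lower a degree), but since (B1)--(B3) only require $\le$, this causes no issue.
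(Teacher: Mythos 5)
Your proposal is correct and takes essentially the same route as the paper: case-split on whether the coefficient vector $a$ vanishes, dispatch the constant case via \cref{Lemma:ConstOp} (the paper phrases this through \cref{Lemma:GenParamFree}, which is just the matrix-level version of the same observation), and in the nonzero case invoke \cref{Lemma:TrigPoly} for the exact degrees of $\CPoly(p)$ and $\SPoly(p)$, then push those bounds through the $\mathbb{C}$-linear combination $\delta_{s,t}\,\CPoly(p)+iM_{s,t}\,\SPoly(p)$ using subadditivity of (positive/negative/total) degree under addition and scalar multiplication. No gap.
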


\begin{proof}
    Let $G = R_M( p )$, $f = \CPoly( p )$, and $g = \SPoly( p )$.
    Since $p$ has integral coefficients, then there exists some $a \in \mathbb{Z}^k$ and $r \in \mathbb{Q}$ such that $p( \theta ) = a_1 \theta_1 + \cdots + a_k \theta_k + r$.
    There are two cases to consider.
    First, assume that $a  = 0$.
    Then $\CPoly( p )$ and $\SPoly( p )$ are constant by definition.
    Then $\pinterp{G}( z ) = f( z ) I + i g( z ) M$ is constant.
    Then $\pBnd( G )$ holds by~\cref{Lemma:GenParamFree}.
    Assume instead that $a \ne 0$.
    Since $a \ne 0$, then by \cref{Lemma:TrigPoly}, $f$ and $g$ satisfy (P1) through to (P3).
    Let $s \in [2^n]$ and $t \in [2^m]$ where $n = \win( G )$ and $m = \wout( G )$.
    Write $h( z ) = \left( \pinterp{G} \right)_{s,t}( z )$.
    By definition, $h( z ) = c_1 f( z ) + c_2 g( z )$ where $c_1 = I_{s,t}$ and $c_2 = i M_{s,t}$.
    It must be shown that $h$ satisfies (B1) through to (B3).
    \begin{itemize}
    \item Let $j \in [k]$.
          Then $\deg_{z_j}^{\pm}( h ) = \deg_{z_j}^{\pm}( c_1 f + c_2 g ) \le \max\{ \deg_{z_j}^{\pm}( f ), \deg_{z_j}^{\pm}( g )\}$.
          Since $f$ and $g$ satisfy (P1) and (P2), then the following inequality holds.
          {\small\begin{equation*}
            \deg_{z_j}^{\pm}( h )
            \le
            \max\left\{ \deg_{z_j}^{\pm}( f ), \deg_{z_j}^{\pm}( g ) \right\}
            =
            \max\left\{ |a_j|, |a_j| \right\}
            =
            |a_j|
            =
            \sum_{\alpha \in (a)} \alpha_j
            =
            \sum_{\alpha \in A( G )} \alpha_j
          \end{equation*}\par}\noindent%
          Since $j$ was arbitrary, then $\deg_{z_j}^{\pm}( h ) \le \sum_{\alpha \in A( G )} |\alpha_j|$ for each $j \in [k]$.
          Then $G$ satisfies (B1) and (B2) with respect to $\pinterp{-}$ and $( s, t )$.
    \item Since $h( z ) = c_1 f( z ) + c_2 g( z )$, then $\deg^{+}( h ) \le \max\{ \deg^{+}( f ), \deg^{+}( g ) \}$.
          Since $f$ and $g$ satisfy (P3), then the following inequality holds.
          {\small\begin{equation*}
              \deg^{+}( h )
              \le
              \max\left\{ \deg^{+}( f ), \deg^{+}( g ) \right\}
              =
              \max\left\{ \kappa( a ), \kappa( a ) \right\}
              =
              \kappa( a )
              =
             \sum_{\alpha \in (a)} \kappa( \alpha )
              =
             \sum_{\alpha \in A( G )} \kappa( \alpha )
          \end{equation*}\par}\noindent%
          Then $G$ satisfies (B3) with respect to $\pinterp{-}$ and $( s, t )$.
    \end{itemize}
    Since $s$ and $t$ were arbitrary, then $\pBnd( G )$ holds.
\end{proof}

\begin{lemma}
    \label{Lemma:CtrlGate}
    If $G \in \Sigma_{\mathbb{Z}}( \mathcal{G}, \mathcal{H} )$ is unitary and $\pBnd( G )$ holds, then $\pBnd( C( G ) )$ holds.
\end{lemma}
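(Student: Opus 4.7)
The plan is to reduce the property $\pBnd(C(G))$ to the hypothesis $\pBnd(G)$ by exploiting the block-diagonal structure of $\pinterp{C(G)}$. By definition, if $\win(G) = \wout(G) = n$, then $\pinterp{C(G)} = I_{2^n} \oplus \pinterp{G}$, so each entry of $\pinterp{C(G)}$ is either (i) an entry of the identity block $I_{2^n}$, (ii) a zero from one of the off-diagonal blocks, or (iii) an entry of $\pinterp{G}$ shifted by $2^n$ in each index. Also, the coefficient sequence satisfies $A(C(G)) = A(G)$ by definition.

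The proof proceeds by fixing arbitrary indices $s, t \in [2^{n+1}]$ and splitting into the three cases above. In cases (i) and (ii), the entry $(\pinterp{C(G)})_{s,t}$ is a constant polynomial (either $0$, $1$, or $0$ again), so \cref{Lemma:ConstOp} applied to the circuit $C(G)$ immediately yields that (B1) through (B3) hold at $(s,t)$ with respect to $\pinterp{-}$. In case (iii), write $s' = s - 2^n$ and $t' = t - 2^n$, so $(\pinterp{C(G)})_{s,t} = (\pinterp{G})_{s',t'}$. Since $\pBnd(G)$ holds by hypothesis, the polynomial $(\pinterp{G})_{s',t'}$ satisfies (B1) through (B3) with respect to $G$; substituting $A(G) = A(C(G))$ into the degree sums then gives the same bounds with respect to $C(G)$.

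Since $s$ and $t$ were arbitrary, $C(G)$ satisfies (B1) through (B3) at every index pair, so $\pBnd(C(G))$ holds. The only step that requires any care is the bookkeeping for case (iii): one must check that the bounds on the positive and negative degrees, and on the total positive degree, really are identical for $G$ and $C(G)$, which is immediate from $A(C(G)) = A(G)$. No new inductive argument is needed here, since the inductive structure is being carried entirely by \cref{Prop:GateInd} in the containing proof of \cref{Thm:PolyMat}; this lemma is a routine case analysis.
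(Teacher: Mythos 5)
Your plan is correct and follows exactly the same route as the paper's proof: split by block (identity, off-diagonal zeros, and the $\pinterp{G}$ block), dispatch the first two cases via \cref{Lemma:ConstOp}, and in the third case shift indices by $2^n$ and transfer the bounds from $\pBnd(G)$ using $A(C(G)) = A(G)$. No gaps.
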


\begin{proof}
    Since $G$ is unitary, then there exists an $n \in \mathbb{N}$ such that $n = \win( G )$ and $n = \wout( G )$.
    Let $s \in [2^{n+1}]$ and $t \in [2^{n+1}]$.
    There are three cases to consider.
    \begin{itemize}
    \item Assume that $s, t \le 2^{n}$.
          Then by definition, $\left( \pinterp{C(G)} \right)_{s,t} = \left( I_{2^n} \oplus \pinterp{G} \right)_{s,t} = \left( I_{2^n} \right)_{s,t}$ is a constant polynomial.
          Then by \cref{Lemma:ConstOp}, $C(G)$ satisfies (B1) through to (B3) with respect to $\interp{-}_{*}$ and $( s, t )$.
    \item Assume that either $( s \le 2^{n} ) \land ( t > 2^{n} )$ or $( s > 2^{n} ) \land ( t \le 2^{n} )$.
          Then by definition, $\left( \pinterp{C(G)} \right)_{s,t} = \left( I_{2^n} \oplus \pinterp{G} \right)_{s,t} = 0$ is a constant polynomial.
          Then by \cref{Lemma:ConstOp}, $C(G)$ satisfies (B1) through to (B3) with respect to $\interp{-}_{*}$ and $( s, t )$.
    \item Assume that $s, t > 2^n$.
          Define $s' = s - 2^n$ and $t' = t - 2^n$.
          Then by definition, $\left( \pinterp{C(G)} \right)_{s,t} = \left( I_{2^n} \oplus \pinterp{G} \right)_{s,t} = \left( \pinterp{G} \right)_{s',t'}$.
          Since $\pBnd( G )$ holds by assumption, then in particular, $\left( \pinterp{G} \right)_{s',t'}$ satisfies equations (B1) through to (B3) with respect to $A( G )$.
          Since $\left( \pinterp{C(G)} \right)_{s,t} = \left( \pinterp{G} \right)_{s',t'}$ and $A( C( G ) ) = A( G )$, then $\left( \pinterp{C(G)} \right)_{s,t}$ satisfies the same equations with respect to $A( C( G ) )$.
          Then $C( G )$ satisfies (B1) through to (B3) with respect to $\pinterp{-}$ and $( s, t )$.
    \end{itemize}
    In each case, $G( C )$ satisfies (B1) through to (B3) with respect to $\pinterp{-}$ and $( s, t )$.
    Since $s$ and $t$ were arbitrary, then $\Bnd( C( G ) )$ holds.
\end{proof}

\begin{theorem}
    \label{Lemma:PolyGates}
    If $G \in \Sigma_{\mathbb{Z}}( \mathcal{G}, \mathcal{H} )$, then $\pBnd( G )$.
\end{theorem}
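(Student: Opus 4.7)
The statement is the conclusion that every gate in $\Sigma_{\mathbb{Z}}(\mathcal{G}, \mathcal{H})$, viewed as a singleton circuit, satisfies the coefficient bound predicate $\pBnd(-)$ with respect to the polynomial abstraction $\pinterp{-}$. The plan is to prove this by well-founded induction on the structure of gates, using \cref{Prop:GateInd}, and discharging each of the three cases by invoking one of the three preceding lemmas (\cref{Lemma:GenParamFree}, \cref{Lemma:BasicRot}, \cref{Lemma:CtrlGate}).

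More concretely, I would set $P(G) := \pBnd(G)$ and verify the three hypotheses of \cref{Prop:GateInd}. For \textbf{Base Case (1)}, let $G \in \mathcal{G}$; then by definition of $\pinterp{-}$ on primitive gates, $\pinterp{G} = G$ is a constant complex matrix, so \cref{Lemma:GenParamFree} applied with $M = G$ gives $\pBnd(G)$ immediately. For \textbf{Base Case (2)}, let $M \in \mathcal{H}$ and $f \in \mathcal{F}$ with integral coefficients; this is exactly the content of \cref{Lemma:BasicRot}, which gives $\pBnd(R_M(f))$. For \textbf{Control Induction}, assume $G \in \Sigma_{\mathbb{Z}}(\mathcal{G}, \mathcal{H})$ is unitary and that $\pBnd(G)$ already holds; then \cref{Lemma:CtrlGate} directly concludes $\pBnd(C(G))$.

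Since all three cases of \cref{Prop:GateInd} have been verified, the induction principle yields $\pBnd(G)$ for every $G \in \Sigma_{\mathbb{Z}}(\mathcal{G}, \mathcal{H})$, which is the statement. The main ``obstacle'' here is really no obstacle at all: all substantive work has been factored into the three preceding lemmas. The only mild subtlety worth flagging is that one must be careful that the restriction to \emph{integral} coefficients is preserved under the control constructor (which it is, trivially, since $A(C(G)) = A(G)$) and that \cref{Lemma:BasicRot} indeed requires integrality of the $a_j$ to match the $|a_j|$ bounds in (B1)--(B2); both are straightforward. Thus the proof reduces to a short appeal to \cref{Prop:GateInd} with three one-line case discharges.
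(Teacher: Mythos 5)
Your proposal is correct and mirrors the paper's own proof exactly: both apply \cref{Prop:GateInd} and discharge the three cases by \cref{Lemma:GenParamFree}, \cref{Lemma:BasicRot}, and \cref{Lemma:CtrlGate} respectively. The observation that integrality is preserved under the control constructor is a reasonable sanity check, though the paper treats it as implicit.
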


\begin{proof}
    The proof follows by \cref{Prop:GateInd}.
    \begin{itemize}
    \item \textbf{Base Case (1)}.
          Let $G \in \mathcal{G}$.
          Then by definition, $\pinterp{G} = G$ where $G$ is a complex matrix.
          Then by \cref{Lemma:GenParamFree}, $\pBnd( G )$.
    \item \textbf{Base Case (2)}.
          Let $M \in \mathcal{H}$ and $f \in \mathcal{F}$ have integral coefficients.
          Then $\pBnd( R_H( f ) )$ by \cref{Lemma:BasicRot}.
    \item \textbf{Control Induction}.
          Let $G \in \Sigma_{\mathbb{Z}}( \mathcal{G}, \mathcal{H} )$.
          Assume that $G$ is unitary gate and that $\pBnd( G )$ holds.
          Then $\pBnd( C( G ) )$ holds by \cref{Lemma:CtrlGate}.
    \end{itemize}
    Then by \cref{Prop:GateInd}, $\Bnd( - )$ holds for all elements of $\Sigma( \mathcal{G}, \mathcal{H} )$.
\end{proof}

\subsection{Establishing Degree Bounds for the Circuits}
\label{Append:Poly:Circ}

This section proves that every circuit in $\ZCirc( \mathcal{G}, \mathcal{H} )$ satisfies $\pBnd( - )$.
First, it is shown (\cref{Lemma:CompOp}) that given any two circuits $C_1$ and $C_2$ which satisfy $\pBnd( - )$, an arbitrary sum over products of the components in $\pinterp{C_1}$ and $\pinterp{C_2}$ will satisfy (B1) through to (B3) with respect to any composite of $C_1$ and $C_2$.
This lemma subsumes both sequential and parallel composition.
Using this lemma, it is shown that $C_2 \circ C_1$ (\cref{Lemma:SeqComp}) and $C_1 // C_2$ (\cref{Lemma:ParGate}) also satisfy $\pBnd( - )$.
Finally, these results are combined with \cref{Prop:CircInd} and \cref{Lemma:PolyGates} in \cref{Thm:PolyMat}, to show that every circuit in $\ZCirc( \mathcal{G}, \mathcal{H} )$ satisfies $\pBnd( - )$.

\begin{lemma}
    \label{Lemma:CompOp}
    Let $C_1, C_2 \in \ZCirc( \mathcal{G}, \mathcal{H} )$ with $n = \win( C_1 )$, $m = \wout( C_1 )$, $n' = \win( C_2 )$, and $m' = \wout( C_2 )$.
    If $\Bnd_{*}( C_1 )$ and $\Bnd_{*}( C_2 )$, then for each $J \subseteq [ n ] \times [ m ] \times [ n' ] \times [ m' ]$ the Laurent polynomial $f_J = \sum_{( s, t, s', t' ) \in J} \left( \interp{C_1}_{*} \right)_{s,t} \left( \interp{C_2}_{*} \right)_{s',t'}$ satisfies,
    \begin{itemize}
    \item (C1). $\deg_{z_j}^{+}( f ) \le \sum_{\alpha \in A( C_2 ) \cdot A( C_1 )} |\alpha_j|$ for each $j \in [k]$,
    \item (C2). $\deg_{z_j}^{-}( f ) \le \sum_{\alpha \in A( C_2 ) \cdot A( C_1 )} |\alpha_j|$ for each $j \in [k]$,
    \item (C3). $\deg^{+}( f ) \le \sum_{\alpha \in A( C_2 ) \cdot A( C_1 )} \kappa( \alpha )$.
    \end{itemize}
\end{lemma}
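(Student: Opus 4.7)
The plan is to split the argument into two steps: first bound each single product $(\interp{C_1}_{*})_{s,t}(\interp{C_2}_{*})_{s',t'}$ using $\Bnd_{*}(C_1)$ and $\Bnd_{*}(C_2)$ together with the multiplicativity of degrees, and then lift the bound to the sum $f_J$ using the subadditivity of degrees under addition. No explicit computation of $\interp{C_1}_{*}$ or $\interp{C_2}_{*}$ will be needed.

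For the single product, I would fix indices and write $p_1 = (\interp{C_1}_{*})_{s,t}$ and $p_2 = (\interp{C_2}_{*})_{s',t'}$. The additivity of degree for polynomials over an integral domain (here $\mathbb{C}$), extended to Laurent polynomials by clearing denominators as noted in \cref{Sect:Background:Polynomials}, gives the inequalities $\deg_{z_j}^{+}(p_1 p_2) \le \deg_{z_j}^{+}(p_1) + \deg_{z_j}^{+}(p_2)$, the analogous inequality for $\deg_{z_j}^{-}$, and $\deg^{+}(p_1 p_2) \le \deg^{+}(p_1) + \deg^{+}(p_2)$. Applying (B1)--(B3) for $C_1$ to $p_1$ and for $C_2$ to $p_2$, and observing that sums over the sequence concatenation $A(C_2) \cdot A(C_1)$ split as sums over $A(C_1)$ plus sums over $A(C_2)$ (for both $\alpha \mapsto |\alpha_j|$ and $\alpha \mapsto \kappa(\alpha)$), yields exactly the bounds (C1)--(C3) for each single product.

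To complete the proof, I would invoke the fact that for any finite collection of Laurent polynomials $h_1, \ldots, h_r$, the degree of $\sum_i h_i$ in each of the three senses is at most the maximum of the corresponding degrees of the $h_i$; when $J = \varnothing$ the sum $f_J$ equals $0$ with all degrees $-\infty$, trivially below every bound. Since every summand of $f_J$ already respects the fixed bounds given by (C1), (C2), (C3), their maxima respect those bounds as well, and so does $f_J$. The main point of care is the degree arithmetic for Laurent polynomials: the multiplicativity step should be justified uniformly by choosing monomial shifts $z^{a_i}$ so that each $z^{a_i} p_i$ is an ordinary polynomial, invoking the integral-domain additivity of degree, and shifting back. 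This is routine but needs to be explicit because $p_1$ and $p_2$ may individually have negative-degree monomials that partially cancel in $p_1 p_2$, which accounts for why only an inequality, and not an equality, holds.
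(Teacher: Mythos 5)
Your proposal is correct and takes essentially the same approach as the paper's proof. The paper formalizes the "degree of a sum is at most the maximum of the degrees" step by inducting on $|J|$ and peeling off one summand per step, whereas you invoke it directly as a single inequality; both rest on the identical pair of observations, namely subadditivity of $\deg^{\pm}_{z_j}$ and $\deg^{+}$ under products of Laurent polynomials and the concatenation-splitting of the sums over $A(C_2) \cdot A(C_1)$.
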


\begin{proof}
    Let $X = A( C_2 ) \circ A( C_1 )$.
    The proof follows by induction on the size of $J$.
    \begin{itemize}
    \item \textbf{Base Case}.
          Assume that $J = \varnothing$.
          Then $f_J$ is the zero polynomial.
          This case is follows by the same argument as \cref{Lemma:ConstOp}.
    \item \textbf{Inductive Hypothesis}.
          For some $r \in \mathbb{N}$, if $J \subseteq [n] \times [m] \times [n'] \times [m']$ and $|J| = r$, then $f_J$ satisfies (C1) through to (C3) with respect to $C_1$, $C_2$, and $J$.
    \item \textbf{Inductive Step}.
          Let $J \subseteq [n] \times [m] \times [n'] \times [m']$ and assume that $|J| = r + 1$.
          Fix some $( s, t, s', t' ) \in J$ and let $J' = J \setminus \{ ( s, t, s', t' ) \}$.
          Then $|J'| = |J| - 1 = r$.
          Then by the inductive hypothesis, $J'$ satisfies (C1) through to (C3) with respect to $\interp{-}_*$, $C_1$, and $C_2$.
          Let $g = \left( \interp{C_1}_* \right)_{s,t}$ and $h = \left( \interp{C_2}_* \right){s',t'}$.
          Then by definition, $f_J = f_{J'} + gh$.
          Since $\Bnd_{*}( C_1 )$ holds, then $C_1$ satisfies (B1) through to (B3) with respect to $\interp{-}_{*}$ and $( s, t )$.
          Since $\Bnd_{*}( C_2 )$ holds, then $C_2$ satisfies (B1) through to (B3) with respect to $\interp{-}_{*}$ and $( s', t' )$.
          It remains to be shown that $J$ satisfies (C1) through to (C3).
          \begin{itemize}
          \item Let $j \in [k]$.
                Since $C_1$ satisfies (B1) and (B2) with respect to $\interp{-}_*$ and $( s, t )$, then $\deg_{z_j}^{\pm}( g ) \le \sum_{\alpha \in A( G )} |\alpha_j|$.
                Since $C_2$ satisfies (B1) and (B2) with respect to $\interp{-}_{*}$ and $( s', t' )$, then $\deg_{z_j}^{\pm}( h ) \le \sum_{\alpha \in A( H )} |\alpha_j|$.
                Then the following inequality holds.
                {\small\begin{equation*}
                    \deg_{z_j}^{\pm}( gh )
                    \le
                    \deg_{z_j}^{\pm}( g ) + \deg_{z_j}^{\pm}( h )
                    \le
                    \sum_{\alpha \in A(C_1)} |\alpha_j| + \sum_{\alpha \in A(C_2)} |\alpha_j|
                    =
                    \sum_{\alpha \in X} |\alpha_j|
                \end{equation*}\par}\noindent%
                Next, since $J'$ satisfies (C1) and (C2) with respect to $\interp{-}_*$, $C_1$, and $C_2$, then $\deg( f_{J'} )_{z_j}^{\pm} \le \sum_{\alpha \in X} |\alpha_j|$.
                Then the following inequality holds.
                {\small\begin{equation*}
                    \deg_{z_j}^{\pm}( f_J )
                    \le
                    \max\left\{ \deg_{z_j}^{\pm}( f_{J'} ), \deg_{z_j}^{\pm}( gh ) \right\}
                    \le
                    \sum_{\alpha \in X} |\alpha_j|
                \end{equation*}\par}\noindent%
                Since $j$ was arbitrary, then $J$ satisfies (C1) and (C2) with respect to $\interp{-}_{*}$, $C_1$, and $C_2$.
          \item Since $C_1$ satisfies (B3) with respect to $\interp{-}_*$ and $( s, t )$, then $\deg^{+}( g ) \le \sum_{\alpha \in A( G )} \kappa( \alpha )$.
                Since $C_2$ satisfies (B3) with respect to $\interp{-}_*$ and $( s, t )$, then $\deg^{+}( h ) \le \sum_{\alpha \in A( H )} \kappa( \alpha )$.
                Then the following inequality holds.
                {\small\begin{equation*}
                    \deg^{+}( gh )
                    \le
                    \deg^{+}( g ) + \deg^{+}( h )
                    =
                    \sum_{\alpha \in A( C_1 )} \kappa( \alpha ) + \sum_{\alpha \in A( C_2 )} \kappa( \alpha )
                    =
                    \sum_{\alpha \in X} \kappa( \alpha )
                \end{equation*}\par}\noindent%
                Since $J'$ satisfies (C3) with respect to $\interp{-}_*$, $C_1$, and $C_2$, then $\deg^{+}( f ) \le \sum_{\alpha \in X} \kappa( \alpha )$.
                Then the following inequality holds.
                {\small\begin{equation*}
                    \deg^{+}( f_J )
                    \le
                    \max\left\{ \deg^{+}( f_{J'} ), \deg^{+}( gh ) \right\}
                    \le
                    \sum_{\alpha \in X} \kappa( \alpha )
                \end{equation*}\par}\noindent%
                Then $J$ satisfies (C3) with respect to $\interp{-}_{*}$, $C_1$, and $C_2$.
          \end{itemize}
          Then the inductive step holds.
    \end{itemize}
    Then by the principle of induction, for each choice of $J \subseteq [n] \times [m] \times [n'] \times [m']$, the Laurent polynomial $f_J$ satisfies (C1) through to (C4) with respect to $C_1$, $C_2$, and $J$.
\end{proof}

\begin{lemma}
    \label{Lemma:SeqComp}
    If $C_1, C_2 \in \ZCirc( \mathcal{G}, \mathcal{H} )$ such that $\win( C_2 ) = \wout ( C_1 )$ with $\pBnd( C_1 )$ and $\pBnd( C_2 )$, then $\pBnd( C_2 \circ C_1 )$.
\end{lemma}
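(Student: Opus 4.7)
The plan is to reduce the statement directly to \cref{Lemma:CompOp}. Because sequential composition in $\Circ( \mathcal{G}, \mathcal{H} )$ is interpreted by matrix multiplication, $\pinterp{C_2 \circ C_1} = \pinterp{C_2} \pinterp{C_1}$, and each entry of this product is already expressed as a sum of products of entries of $\pinterp{C_1}$ and $\pinterp{C_2}$, which is exactly the shape of the polynomial $f_J$ produced by \cref{Lemma:CompOp}. So the real content of the proof is packaged into \cref{Lemma:CompOp}, and the present lemma only needs to pick the right $J$.

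Concretely, I would set $n = \win( C_1 )$, $p = \wout( C_1 ) = \win( C_2 )$, and $m = \wout( C_2 )$, so that $\win( C_2 \circ C_1 ) = n$ and $\wout( C_2 \circ C_1 ) = m$. Fix arbitrary matrix indices $s$ and $u$ for the $(s,u)$-entry of $\pinterp{C_2 \circ C_1}$. Expanding the matrix product,
$$\left( \pinterp{C_2 \circ C_1} \right)_{s, u} = \sum_{t} \left( \pinterp{C_1} \right)_{t, u} \left( \pinterp{C_2} \right)_{s, t},$$
where $t$ ranges over the shared intermediate dimension of size $2^p$. Taking $J = \{ (t, u, s, t) : t \}$, the Laurent polynomial $f_J$ supplied by \cref{Lemma:CompOp} coincides exactly with this entry. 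Since $\pBnd( C_1 )$ and $\pBnd( C_2 )$ hold by hypothesis, \cref{Lemma:CompOp} immediately yields (C1)--(C3) for $f_J$, giving bounds in terms of $\sum_{\alpha \in A( C_2 ) \cdot A( C_1 )} | \alpha_j |$ and $\sum_{\alpha \in A( C_2 ) \cdot A( C_1 )} \kappa( \alpha )$.

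To conclude, I would invoke the inductive definition of $A( - )$: $A( C_2 \circ C_1 ) = A( C_2 ) \cdot A( C_1 )$. Thus the bounds from \cref{Lemma:CompOp} are precisely the clauses (B1)--(B3) of $\pBnd$ evaluated at position $(s, u)$, and since $s$ and $u$ were arbitrary, $\pBnd( C_2 \circ C_1 )$ follows. The main thing to be careful about is bookkeeping: both the order of factors in the matrix product must be matched to the order in which $(\pinterp{C_1})$ and $(\pinterp{C_2})$ appear in the definition of $f_J$, and the sequence concatenation $A( C_2 ) \cdot A( C_1 )$ must be aligned with the concatenation appearing in \cref{Lemma:CompOp}. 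Both are symmetric under summation over the coefficient sequence, so no genuine asymmetry obstructs the reduction.
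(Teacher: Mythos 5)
Your proposal is correct and follows essentially the same route as the paper's own proof: choose the set $J$ that realizes the matrix-product expansion of $(\pinterp{C_2 \circ C_1})_{s,t}$, invoke \cref{Lemma:CompOp} to obtain (C1)--(C3), and translate these into (B1)--(B3) via $A(C_2 \circ C_1) = A(C_2) \cdot A(C_1)$. Your remarks about index bookkeeping and the symmetry of the degree bounds under reordering of the coefficient sequence accurately capture the only subtlety, which the paper also handles by shuffling indices inside $J$.
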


\begin{proof}
    Let $s \in [2^n]$ and $t \in [2^m]$ where $n = \win( C_1 )$ and $m = \wout( C_2 )$.
    Moreover, let $\ell = \win( C_1 ) = \wout( C_2 )$.
    Define $J = \{ ( t, j, j, s ) : j \in [\ell] \}$.
    Since $\pBnd( C_1 )$ and $\pBnd( C_2 )$ hold, then by \cref{Lemma:CompOp}, $J$ satisfies (C1) through to (C3) with respect to $\pinterp{-}$, $C_1$, and $C_2$.
    By definition of $f_J$,  the following equation holds.
    {\small\begin{equation*}
        f_J
        =
        \sum_{(a,b,c,d) \in J} \left( \pinterp{C_2} \right)_{a,b} \left( \pinterp{C_1} \right)_{c,d}
        =
        \sum_{j \in [\ell]} \left( \pinterp{C_2} \right)_{t,j} \left( \pinterp{C_1} \right)_{j,s}
        =
        \left( \pinterp{C_2 \circ C_1} \right)_{s,t}
    \end{equation*}\par}\noindent%
    It remains to be shown that $C_2 \circ C_2$ satisfies (B1) to (B3) with respect to $\pinterp{-}$ and $( s, t )$.
    \begin{itemize}
    \item Since $A( C_2 \circ C_1 ) = A( C_2 ) \cdot A( C_1 )$ and $f_J = \left( \pinterp{C_2 \circ C_1} \right)_{s,t}$, then $J$ satisfies (C1) with respect to $\pinterp{-}$, $C_1$, and $C_2$, if and only if $C_2 \circ C_1$ satisfies (B1) with respect to $\pinterp{-}$ and $( s, t )$.
          Therefore, (B1) is satisfied.
    \item By the same argument, $C_2 \circ C_1$ satisfies (B2) with respect to $\pinterp{-}$ and $( s, t )$.
    \item By the same argument, $C_2 \circ C_1$ satisfies (B2) with respect to $\pinterp{-}$ and $( s, t )$.
    \end{itemize}
    Since $s$ and $t$ were arbitrary, then $\Bnd( C_2 \circ C_1 )$ holds.
\end{proof}

\begin{lemma}
    \label{Lemma:ParGate}
    If $C_1 \in \ZCirc( \mathcal{G} )$ satisfies $\pBnd( C_1 )$ and $C_2 \in \ZCirc( \mathcal{G} )$ satisfies $\pBnd( C_2 )$, then $\pBnd( C_1 // C_2 )$.
\end{lemma}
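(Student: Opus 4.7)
The plan is to reduce to~\cref{Lemma:CompOp} by exploiting the fact that each entry of a Kronecker tensor product is a single product of entries from the two factor matrices, so no genuine summation over intermediate indices arises in the parallel case (unlike the sequential case treated in~\cref{Lemma:SeqComp}).

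Write $n = \win(C_1)$, $m = \wout(C_1)$, $n' = \win(C_2)$, and $m' = \wout(C_2)$. First I would fix arbitrary indices $s \in [2^{n+n'}]$ and $t \in [2^{m+m'}]$ for the composite circuit. By the division algorithm, $s$ and $t$ decompose uniquely as $s = (s_1 - 1) 2^{n'} + s_2$ and $t = (t_1 - 1) 2^{m'} + t_2$ with $s_1 \in [2^n]$, $s_2 \in [2^{n'}]$, $t_1 \in [2^m]$, $t_2 \in [2^{m'}]$. The block definition of $\otimes$ from~\cref{Sect:Circuits} then gives $\left( \pinterp{C_1 // C_2} \right)_{s,t} = \left( \pinterp{C_1} \right)_{s_1, t_1} \left( \pinterp{C_2} \right)_{s_2, t_2}$.

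Next I would apply~\cref{Lemma:CompOp} to the singleton index set $J = \{(s_1, t_1, s_2, t_2)\}$. Since $\pBnd(C_1)$ and $\pBnd(C_2)$ hold by hypothesis, the lemma immediately yields that $f_J = \left( \pinterp{C_1} \right)_{s_1, t_1} \left( \pinterp{C_2} \right)_{s_2, t_2}$ satisfies (C1), (C2), and (C3) with respect to the concatenated coefficient sequence $A(C_1) \cdot A(C_2)$. Since $A(C_1 // C_2) = A(C_1) \cdot A(C_2)$ by the inductive definition of $A(-)$, these bounds coincide exactly with (B1), (B2), and (B3) for $C_1 // C_2$ at indices $(s, t)$. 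As $(s, t)$ was arbitrary, $\pBnd(C_1 // C_2)$ holds.

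The only real obstacle is the bookkeeping required to translate between a flat index $(s, t)$ on the tensored matrix and the pair of indices $((s_1, t_1), (s_2, t_2))$ on the factor matrices; once this correspondence is laid out, no new analytic content is needed, and in fact this case is strictly easier than~\cref{Lemma:SeqComp} because the sum appearing inside~\cref{Lemma:CompOp} collapses to a single term.
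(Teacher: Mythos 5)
Your proof is correct and takes essentially the same approach as the paper's: both reduce to \cref{Lemma:CompOp} applied to a singleton index set $J$, using the observation that each entry of the Kronecker product is a single product of entries from the two factors, and then identify the resulting (C1)--(C3) bounds with (B1)--(B3) via $A(C_1 // C_2) = A(C_1) \cdot A(C_2)$. You are slightly more explicit about the index decomposition than the paper (which just asserts the existence of suitable $q, q', r, r'$), but the substance is identical; the ordering discrepancy between $A(C_1) \cdot A(C_2)$ and the $A(C_2) \cdot A(C_1)$ appearing in \cref{Lemma:CompOp} is harmless since the bounds are permutation-invariant sums.
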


\begin{proof}
    Let $s \in [2^{n+n'}]$ and $t \in [2^{m+m'}]$ where $n = \win( C_1 )$, $n' = \win( C_2 )$, $m = \win( C_1 )$, and $m' = \wout( C_2 )$.
    Then by the definition of $\otimes$, there exists indices $q, q', r, r' \in \mathbb{N}$ such that $\left( \pinterp{C_1} \otimes \pinterp{C_2} \right)_{s,t} = \left( \pinterp{C_1} \right)_{q,r} \left( \pinterp{C_2} \right)_{q',r'}$.
    Since $\pBnd( C_1 )$ and $\pBnd( C_2 )$ hold, then by \cref{Lemma:CompOp}, $J$ satisfies (C1) through to (C3) with respect to $\pinterp{-}$, $C_2$, and $C_1$.
    By definition of $f_J$,  the following equation holds.
    {\small\begin{equation*}
        f_J
        =
        \sum_{(a,c,b,d) \in J} \left( \pinterp{C_2} \right)_{a,b} \left( \pinterp{C_1} \right)_{c,d}
        =
        \left( \pinterp{C_2} \right)_{q,r} \left( \pinterp{C_1} \right)_{q',r'}
        =
        \pinterp{C_1 // C_2}
    \end{equation*}\par}\noindent%
    It remains to be shown that $C_1 // C_2$ satisfies (B1) to (B3) with respect to $\pinterp{-}$ and $( s, t )$.
    \begin{itemize}
    \item Since $A( C_1 // C_2 ) = A( C_1 ) \cdot A( C_2 )$ and $f_J = \left( \pinterp{C_1 // C_2} \right)_{s,t}$, then $J$ satisfies (C1) with respect to $\pinterp{-}$, $C_2$, and $C_1$, if and only if $C_1 // C_2$ satisfies (B1) with respect to $\pinterp{-}$ and $( s, t )$.
          Therefore, (B1) is satisfied.
    \item By the same argument, $C_1 // C_2$ satisfies (B2) with respect to $\pinterp{-}$ and $( s, t )$.
    \item By the same argument, $C_1 // C_2$ satisfies (B3) with respect to $\pinterp{-}$ and $( s, t )$.
    \end{itemize}
    Since $s$ and $t$ were arbitrary, then $\Bnd( C_2 // C_1 )$ holds.
\end{proof}

\circbnd*
\begin{proof}
    The proof follows by \cref{Prop:CircInd}.
    \begin{itemize}
    \item \textbf{Base Case (1)}.
          Since $\interp{\epsilon} = I_2$, then $\pBnd( \epsilon )$ by \cref{Lemma:GenParamFree}.
    \item \textbf{Base Case (2)}.
          If $G \in \Sigma_{\mathbb{Z}}( \mathcal{G}, \mathcal{H} )$, then $\pBnd( G )$ by \cref{Lemma:PolyGates}.
    \item \textbf{Parallel Induction}.
          Let $C_1, C_2 \in \ZCirc( \mathcal{G}, \mathcal{H} )$.
          Assume that both $\pBnd( C_1 )$ and $\pBnd( C_2 )$ hold.
          Then $\pBnd( C_1 // C_2 )$ holds by \cref{Lemma:ParGate}.
    \item \textbf{Sequential Induction}.
          Let $C_1, C_2 \in \ZCirc( \mathcal{G}, \mathcal{H} )$.
          Assume that both $\pBnd( C_1 )$ and $\pBnd( C_2 )$ hold with $\win( C_2 ) = \wout( C_1 )$.
          Then $\pBnd( C_2 \circ C_1 )$ holds by \cref{Lemma:SeqComp}.
    \end{itemize}
    Then by \cref{Prop:CircInd}, $\pBnd( - )$ holds for all elements of $\ZCirc( \mathcal{G}, \mathcal{H} )$.
\end{proof}

\section{Proving the Quantifier Elimination Scheme}
\label{Appendix:Elim}

\diffpoly*
\begin{proof}
    Let $s \in [2^n]$ and $t \in [2^m]$.
    Since $G \in \ZCirc( \mathcal{G}, \mathcal{H} )$, then $\Bnd( G )$ holds by \cref{Thm:PolyMat}.
    Then there exists an $f \in \mathbb{C}[ x_1, x_1^{-1}, \ldots, x_k, x_k^{-1} ]$ such that $f$ satisfies (B1) through to (B4) with respect to $G$ and $( s, t )$.
    Since $H \in \ZCirc( \mathcal{G}, \mathcal{H} )$, then $\Bnd( H )$ holds by \cref{Thm:PolyMat}.
    Then there exists $g \in \mathbb{C}[ x_1, x_1^{-1}, \ldots, x_k, x_k^{-1} ]$ such that $g$ satisfies (B1) through to (B4) with respect to $H$ and $( s, t )$.
    First, the polynomial $h$ is constructed by clearing all denominators of $f - g$ with the term $\prod_{j=1}^k (x_j)^{\beta_j}$.
    Let $j \in [k]$.
    Since $f$ satisfies (B2) with respect to $G$ and $( s, t )$, then $\deg_{x_j}^{-}( f ) \le \sum_{\alpha \in C( G )} |\alpha_j|$.
    Likewise, since $g$ satisfies (B2) with respect to $H$ and $( s, t )$, then $\deg_{x_j}^{-}( g ) \le \sum_{\alpha \in C( H )} |\alpha_j|$.
    It follows that,
    {\small\begin{equation*}
        \deg_{x_j}^{-}\left( \left( \prod_{j=1}^k (x_j)^{\lambda_j} \right) (f - g) \right)
        =
        \max\{ \deg_{x_j}^{-}( f ), \deg_{x_j}^{-}( g ) \} - \lambda_j
        \le
        \lambda_j - \lambda_j
        =
        0
    \end{equation*}\par}\noindent%
    Since $j$ was arbitrary, then $h \in \mathbb{C}[ x_1, \ldots, x_k ]$ where,
    {\small\begin{equation*}
        h( x_1, \ldots, x_k )
        =
        \left( \prod_{j=1}^k (x_j)^{\lambda_j} \right)( f( x_1, \ldots, x_k ) - g( x_1, \ldots, x_k ) ).
    \end{equation*}\par}\noindent%
    This completes the construction of $h$.
    It remains to be shown that $h$ satisfies (D1) through to (D3) with respect to $G$, $H$, and $( s, t )$.
    \begin{enumerate}
    \item Let $j \in [k]$.
          Since $\deg_{x_j}( h ) = \deg_{x_j}^{+}( h )$,
          {\small\begin{equation*}
            \deg_{x_j}( h )
            \le
            \deg_{x_j}\left( \prod_{k=1}^{k} ( x_j )^{\lambda_j} \right) + \max\{ \deg_{x_j}^{+}( f ), \deg_{x_j}^{+}( g ) \} \le \lambda_j + \lambda_j.
          \end{equation*}\par}\noindent%
          Since $j$ was arbitrary, then $\deg_{x_j}( h ) \le 2 \lambda_j$ for each $j \in [k]$.
          Then $h$ satisfies (D1) with respect to $G$, $H$, and $( s, t )$.
    \item Since $\deg( h ) = \deg^{+}( h )$,
          {\small\begin{equation*}
            \deg( h )
            \le
            \deg^{+}\left( \prod_{j=1}^{k} ( x_j )^{\lambda_j} \right) + \max\{ \deg^{+}( f ), \deg^{+}( g ) \}.
          \end{equation*}\par}\noindent%
          Let $d = \max\{ \deg^{+}( f ), \deg^{+}( g ) \}$/
          Since $f$ satisfies (B3) with respect to $G$ and $( s, t )$, then $\deg^{+}( f ) \le \sum_{\alpha \in C( G )} \kappa( \alpha )$.
          Since $g$ satisfies (B3) with respect to $H$ and $( s, t )$, then $\deg^{+}( g ) \le \sum_{\alpha \in C( H )} \kappa( \alpha )$.
          Then by the monotonicity of $\max$, $d \le \max\{ \sum_{\alpha \in A(C)} \kappa( \alpha ) : C \in \{ G, H \} \}$.
          It follows that,
          {\small\begin{equation*}
            \deg( h )
            \le
            \deg^{+}\left( \prod_{j=1}^{k} ( x_j )^{\lambda_j} \right) + d
            \le
            \sum_{j=1}^{k} \lambda_j + \max\left\{ \sum_{\alpha \in A(C)} \kappa( \alpha ) : C \in \{ G, H \} \right\}.
          \end{equation*}\par}\noindent%
          Then $h$ satisfies (D2) with respect to $G$, $H$, and $( s, t )$.
    \item Write $z_j = \exp( -i\theta_j / 2 )$ for each $j \in [k]$.
          Since $f$ satisfies (B4) with respect to $G$ and $( s, t )$, then $\interp{G}_{s,t}( \theta ) = f( z_1, \ldots, z_k )$.
          Since $g$ satisfies (B4) with respect to $H$ and $( s, t )$, then $\interp{H}_{s,t}( \theta ) = g( z_1, \ldots, z_k )$.
          Then,
          {\small\begin{equation*}
            h( z_1, \ldots, z_k )
            =
            \left( \prod_{j=1}^k (z_j)^{\lambda_j} \right)( \interp{G}_{s,t}( \theta ) - \interp{H}_{s,t}( \theta ) )
          \end{equation*}\par}\noindent%
          Since $\exp( i- )$ does not have any zeros on $\mathbb{R}$, then, $\prod_{j=1}^k (z_j)^{\lambda_j} \ne 0$.
          This means that $( \interp{G} - \interp{H} )_{s,t}( \theta ) = 0$ if and only if $h( z_1, \ldots, z_k ) = 0$.
          Then $h$ satisfies (D3) with respect to $G$, $H$, and $( s, t )$.
    \end{enumerate}
    Since $s$ and $t$ were arbitrary, then the proof is complete.
    \qed
\end{proof}

\exacteq*
\begin{proof}
    If $\interp{G} = \interp{H}$, then $\interp{G}(v) = \interp{H}(v)$ for all $v \in X_1 \times \cdots \times X_k$ by definition of function equality.
    Assume instead that $\interp{G} \ne \interp{H}$.
    Then $\interp{G} - \interp{H} \ne 0$.
    Then there exists a pair of indices $s \in [2^n]$ and $t \in [2^m]$ such that $(\interp{G} - \interp{H})_{s,t} \ne 0$.
    Then by \cref{Cor:PolyDiff}, there exists a polynomial $f \in \mathbb{C}[x_1, \ldots, x_k]$ which satisfy (D1) through to (D3) with respect to $G$, $H$, and $( s, t )$.
    Since $f$ satisfies (D3) with respect to $G$, $H$, and $( s, t )$, then $f \ne 0$.
    For each $j \in [k]$, define a set,
    {\small\begin{equation*}
        X^*_j = \{ \exp( -ix/2 ) \mid x \in X_j \}.
    \end{equation*}\par}\noindent%
    Since $x \mapsto \exp( -ix / 2 )$ is bijective on $[0, 4\pi)$, $|X_j^*| = |X_j| > 2\lambda_j$ for each $j \in [k]$.
    Since $f$ satisfies (D1) with respect to $G$, $H$, and $( s, t )$, then $\deg_{x_j}( f ) < |X_j^*|$ for each $j \in [k]$.
    Then by \cref{Thm:Nullstellensatz}, there exists a $v^* \in X_1^* \times \cdots \times X_k^*$ such that $f( v^* ) \ne 0$.
    Then there exists some $v \in X_1 \times \cdots \times X_k$ such that $v_j^* = \exp( -iv_j/2 )$ for each $j \in [k]$.
    Since $f$ satisfies (D3) with respect to $G$, $H$, and $( s, t )$, then $( \interp{G} - \interp{H} )_{s,t}( v ) \ne 0$.
    Then $(\interp{G} - \interp{H})( v ) \ne 0$.
    Then $\interp{G}( v ) \ne \interp{H}( v )$.
    Then the assumption $\interp{G} \ne \interp{H}$ implies that there exists a $v \in X_1 \cdots X_k$ such that $\interp{G}( v ) \ne \interp{H}( v )$.
    In conclusion, $\interp{G} = \interp{H}$ if and only if $\interp{G}( v ) = \interp{H}( v )$ for all $v \in X_1 \times \cdots \times X_k$.
\end{proof}

\probeq*
\begin{proof}
    Since $\interp{C_1} \ne \interp{C_2}$, then $\interp{C_1} - \interp{C_2} \ne 0$.
    Then there exists a pair of indices $j \in [2^n]$ and $\ell \in [2^m]$ such that $(\interp{C_1} - \interp{C_2})_{j,\ell} \ne 0$.
    Then by \cref{Cor:PolyDiff}, there exists a polynomial $f \in \mathbb{C}[x_1, \ldots, x_k]$ such that (D1) through to (D3) hold with respect to $C_1$, $C_2$, and $( j, \ell )$.
    Since $f$ satisfies (D2) with respect to $C_1$, $C_2$, and $( j, \ell )$, then $\deg( f ) \le d$.
    Since $x \mapsto \exp( -ix/2 )$ is a bijection on $[0, 2\pi)$ and bijections preserve discrete uniform distributions, then $\exp( -is_1/2 ), \ldots, \exp( -is_k/2 )$ are independent random variables selected uniformly from,
    {\small\begin{equation*}
        S^* = \{ \exp( -is/2 ) \mid s \in S \},
    \end{equation*}\par}\noindent%
    with $|S^*| = |S|$.
    Let $E_1$ denote the event $f( \exp( -is_1/2 ), \ldots, \exp( -is_k/2 ) ) = 0$ and $E_2$ denote the event $\interp{C_1}( s_1, \ldots, s_k ) = \interp{C_2}( s_1, \ldots, s_k )$.
    If $E_2$ occurs, then in particular $( \interp{C_1}( s_1, \ldots, s_k ) )_{j,\ell} = ( \interp{C_2}( s_1, \ldots, s_k ) )_{j,\ell}$.
    Since $f$ satisfies (D3) with respect to $C_1$, $C_2$, and $( j, \ell )$, then $f( \exp( -is_1/2 ), \ldots, \exp( -is_k/2 ) ) = 0$.
    Then $E_2 \subseteq E_1$.
    Then by the monotonicity of probability, $\Pr( E_ 2 ) \le \Pr( E_1 )$.
    Then, it suffices to show that $\Pr( E_1 ) \le d / |S|$.
    Since $f$ satisfies (D3) with respect to $C_1$, $C_2$, and $( j, \ell )$, and since $\interp{C_1}_{j,\ell} \ne \interp{C_2}_{j,\ell}$, then $f \ne 0$.
    Then by \cref{Thm:DLSZ}, $\Pr( E_1 ) ) \le d / |X|$.
    Therefore, $\Pr( \interp{C_1}( s_1, \ldots, s_k ) = \interp{C_2}( s_1, \ldots, s_k) ) \le d / |S|$.
\end{proof}

\subsection{Decidability for Integral Cyclotomic Circuits}

This section proves that parameterized equivalence checking is decidable for integral circuits with $\mathcal{G}$ and $\mathcal{H}$ consisting of cyclotomic circuits.
This follows from the fact that evaluating such a circuit at a rational multiple of $\pi$ yields a cyclotomic matrix.

\begin{theorem}
    \label{Thm:Cyclo}
    If $\mathcal{G}$ and $\mathcal{H}$ consists of matrices over the universal cyclotomic field and $C_1 \in \Circ( \mathcal{G}, \mathcal{H} )$, then $\interp{C_1}( \theta )$ is a matrix over the universal cyclotomic field for each $\theta \in ( \mathbb{Q} \pi )^k$.
\end{theorem}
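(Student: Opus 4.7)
My plan is to proceed by structural induction using \cref{Prop:CircInd}, reducing to a gate-level claim that I would in turn establish via \cref{Prop:GateInd}. Let $\mathbb{U}$ denote the universal cyclotomic field; the goal becomes showing that for every $C \in \Circ(\mathcal{G},\mathcal{H})$ and every $\theta \in (\mathbb{Q}\pi)^k$, the entries of $\interp{C}(\theta)$ lie in $\mathbb{U}$.

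The first step is to handle $\Sigma(\mathcal{G},\mathcal{H})$ via \cref{Prop:GateInd}. The base case $G \in \mathcal{G}$ is immediate since $\interp{G}(\theta) = G$ is over $\mathbb{U}$ by hypothesis. For a rotation $R_M(f)$ with $f(\theta) = a_1\theta_1 + \cdots + a_k\theta_k + q\pi$ and $a_j,q \in \mathbb{Q}$, substituting $\theta \in (\mathbb{Q}\pi)^k$ yields $-f(\theta)/2 \in \mathbb{Q}\pi$, so there exist $p \in \mathbb{Z}$ and $r \in \mathbb{N}$ with $-f(\theta)/2 = (p/r)\pi$. The Euler identities recalled in \cref{Sect:Background:Algebraic} then give
\[
    \cos\!\left(\tfrac{p}{r}\pi\right) = \tfrac{1}{2}\bigl(\zeta_{2r}^{p} + \zeta_{2r}^{-p}\bigr), \qquad \sin\!\left(\tfrac{p}{r}\pi\right) = \tfrac{1}{2i}\bigl(\zeta_{2r}^{p} - \zeta_{2r}^{-p}\bigr),
\]
both of which lie in $\mathbb{U}$ because $\zeta_{2r}, i = \zeta_4 \in \mathbb{U}$ and $\mathbb{U}$ is a field. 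Since $M$ has entries in $\mathbb{U}$ by hypothesis, the formula $\interp{R_M(f)}(\theta) = \cos(-f(\theta)/2)\,I + i\sin(-f(\theta)/2)\,M$ stays in $\mathbb{U}$. The control step follows because $\interp{C(G)}(\theta) = I_{2^n} \oplus \interp{G}(\theta)$, whose identity block and zero off-diagonal blocks are trivially over $\mathbb{U}$, while $\interp{G}(\theta)$ is over $\mathbb{U}$ by the inductive hypothesis.

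Lifting from gates to circuits via \cref{Prop:CircInd} is then routine: the base case $\interp{\epsilon} = I_2$ is cyclotomic; singleton gates are handled by the previous step; and both the Kronecker product $\interp{C_1 \mathbin{/\!/} C_2}(\theta) = \interp{C_1}(\theta) \otimes \interp{C_2}(\theta)$ and the matrix product $\interp{C_2 \circ C_1}(\theta) = \interp{C_2}(\theta)\,\interp{C_1}(\theta)$ preserve the property of having entries in a field. The only genuinely substantive step is the rotation case, where one must observe that substituting rational multiples of $\pi$ into any member of $\mathcal{F}$ produces angles of the form $(p/r)\pi$ whose sine and cosine lie in $\mathbb{U}$; this is precisely why $\mathcal{F}$ was defined to include the affine $q\pi$ translations with $q \in \mathbb{Q}$, and why the target field must be the full universal cyclotomic field rather than a fixed $\mathbb{Q}(\zeta_n)$.
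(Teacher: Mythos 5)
Your proof is correct and takes essentially the same approach as the paper: both proceed by \cref{Prop:GateInd} to establish the claim for gates (with the rotation case being the only substantive step, relying on $\sin$ and $\cos$ of rational multiples of $\pi$ being cyclotomic) and then lift to circuits via \cref{Prop:CircInd} using closure of $\mathbb{U}$-matrices under direct sum, Kronecker product, and matrix product. Your version is in fact slightly more explicit than the paper's about the Euler-formula computation showing $\cos\bigl((p/r)\pi\bigr),\sin\bigl((p/r)\pi\bigr)\in\mathbb{Q}(\zeta_{2r},i)\subseteq\mathbb{U}$, but the argument is identical in structure.
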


\begin{proof}
    Let $\zpred( - )$ denote the predicate on $\ZCirc( \mathcal{G}, \mathcal{H} )$ such that $\zpred( C )$ if and only if $\interp{C}( \theta )$ is a matrix over the universal cyclotomic field for each $\theta \in ( \mathbb{Q} \pi )^k$.
    First, the claim is proven for singleton circuits using~\cref{Prop:GateInd}.
    \begin{itemize}
    \item \textbf{Base Case (1)}.
          Let $G \in \mathcal{G}$.
          Let $\theta \in ( \mathcal{Q} \pi )^k$.
          Then $\interp{G}( \theta ) = G$ with $G$ a matrix over the universal cyclotomic field by assumption.
          Since $\theta$ was arbitrary, then $\zpred( G )$.
    \item \textbf{Base Case (2)}.
          Let $M \in \mathcal{H}$ and $p \in \mathcal{F}$.
          Let $\theta \in ( \mathcal{Q} \pi )^k$.
          Since $p( \theta ) = \alpha_1 \theta_1 + \cdots \alpha_k \theta_k + q \pi$ for some $\alpha \in \mathbb{Z}^k$ and $q \in \mathbb{Q}$, then $p( \theta ) \in \mathbb{Q} \pi$.
          Since $p( \theta )$ is a rational multiple of $\pi$, then $\sin( \theta )$ and $\cos( \theta )$ are cyclotomic numbers.
          Recall that $M$ and $I$ are matrices over the universal cyclotomic field.
          Since matrices rings are closed under additional and scalar multiplication, then $\interp{R_M( p )} = \cos( p( \theta ) ) I + i \sin( p( \theta ) ) M$ is a matrix over the universal cyclotomic field.
          Since $\theta$ was arbitrary, then $\zpred( R_M( p ) )$.
    \item \textbf{Control Induction}.
          Let $G \in \Sigma( \mathcal{G}, \mathcal{H} )$.
          Assume that $G$ is a unitary gate on $n$ wires and that $\zpred( G )$ holds.
          Let $\theta \in ( \mathcal{Q} \pi )^k$.
          Since $\zpred( G )$ holds, then $\interp{G}( \theta )$ is a matrix over the universal cyclotomic field.
          Since $I_{2^n}$ is a matrix over the universal cyclotomic field, and the direct sum of two matrices over the same field yield a matrix over the same field, then $\interp{C( G )} = I_{2^n} \oplus \interp{G}$ is a matrix over the universal cyclotomic field.
          Since $\theta$ was arbitrary, then $\zpred( C( p ) )$.
    \end{itemize}
    Then by \cref{Prop:GateInd}, $\zpred( G )$ for all $G \in \Sigma( \mathcal{G}, \mathcal{H} )$.
    Next, \cref{Prop:CircInd} is used to prove the claim for all circuits.
    \begin{itemize}
    \item \textbf{Base Case (1)}.
          Let $\theta \in ( \mathcal{Q} \pi )^k$.
          Since the identity matrix is  amatrix over the universal cyclotomic field, then $\interp{\epsilon}( \theta ) = I_2$ is a matrix over the universal cyclotomic field.
          Since $\theta$ was arbitrary, then $G( \epsilon )$.
    \item \textbf{Base Case (2)}.
          If $C \in \Sigma( \mathcal{G}, \mathcal{H} )$, then $\zpred( C )$ holds by the first sub-proof.
    \item \textbf{Parallel Induction}.
          Let $G \in \Sigma( \mathcal{G}, \mathcal{H} )$ and $H \in \Sigma( \mathcal{G}, \mathcal{H} )$.
          Assume that $\zpred( C_1 )$ and $\zpred( C_2 )$ holds.
          Let $\theta \in ( \mathcal{Q} \pi )^k$.
          Since $\zpred( C_1 )$ holds, then $\interp{C_1}( \theta )$ is a matrix over the universal cyclotomic field.
          Since $\zpred( C_2 )$ holds, then $\interp{C_2}( \theta )$ is a matrix over the universal cyclotomic field.
          Since the tensor produce of two matrices over the same field yield a matrix over the same field, then $\interp{C_1 // C_2}( \theta ) = \interp{C_1}( \theta ) \otimes \interp{C_2}( \theta )$ is a matrix over the universal cyclotomic field.
          Since $\theta$ was arbitrary, then $\zpred( C_1 // C_2 )$ holds.
    \item \textbf{Sequential Induction}.
          Let $C_1 \in \Sigma( \mathcal{G}, \mathcal{H} )$ and $C_2 \in \Sigma( \mathcal{G}, \mathcal{H} )$ with $C_1$ and $C_2$ composable.
          Assume that $\zpred( C_1 )$ and $\zpred( C_2 )$ holds.
          Let $\theta \in ( \mathcal{Q} \pi )^k$.
          Since $\zpred( C_1 )$ holds, then $\interp{C_1}( \theta )$ is injective.
          Since $\zpred( C_2 )$ holds, then $\interp{C_2}( \theta )$ is injective.
          Since the tensor produce of two matrices over the same field yield a matrix over the same field, then $\interp{C_1 \circ C_2}( \theta ) = \interp{C_1}( \theta )  \interp{C_2}( \theta )$ is a matrix over the universal cyclotomic field.
          Since $\theta$ was arbitrary, then $\zpred( C_1 \circ C_2 )$ holds.
    \end{itemize}
    Then by \cref{Prop:CircInd}, $\zpred( C )$ for all $C \in \Sigma( \mathcal{G}, \mathcal{H} )$.
\end{proof}

\decidable*
\begin{proof}
    By \cref{Thm:QuantElim}, parameterized equivalence can be decided by comparing matrices obtained from each parameter in $S_1 \times S_2 \times \cdots \times S_k$.
    If $S_1 \times S_2 \times \cdots \times S_k$ consists of rational multiples of $\pi$, then by \cref{Thm:Cyclo}, each matrix will be over the universal cyclotomic field.
    Since the universal cyclotomic field is computable and has decidable equality, then this gives a decision procedure for the parameterized equivalence checking problem.
\end{proof}

\section{Circuit Reparameterization}
\label{Append:Reparam}

The section establishes all theorems in \cref{Sect:Implementation:Reparam}.
\cref{Lemma:AlphaTx} is introduced to relate the premonoidal structure of the parameter sequence to the syntactic transformation.

\biject*
\begin{proof}
    This proof has two directions.
    \begin{itemize}
    \item ($\Rightarrow$).
          Assume that $\interp{C_1} = \interp{C_2}$.
          Then $\interp{C_1} \circ f = \interp{C_2} \circ f$.
    \item ($\Leftarrow$).
          Assume that $\interp{C_1} \circ f = \interp{C_2} \circ f$.
          Let $x \in \mathbb{R}^k$.
          Evaluating $\interp{C_1} \circ f$ and $\interp{C_2} \circ f$ at $f^{-1}( x )$, it follows that $\interp{C_1}( x ) = ( \interp{C_1} \circ f )( f^{-1}( x ) ) = ( \interp{C_2} \circ f )( f^{-1}( x ) ) = \interp{C_2}( x )$.
          Since $x$ was arbitrary, then $\interp{C_1} = \interp{C_2}$.
    \end{itemize}
    In conclusion, $\interp{C_1} = \interp{C_2}$ if and only if $\interp{C_1} \circ f = \interp{C_2} \circ f$.
\end{proof}

\synparam*
\begin{proof}
    Let $g( \theta ) = ( \theta_1 / v_1, \theta_2 / v_2, \ldots, \theta_k / v_k )$.
    This is well-defined, since $v_j \ne 0$ for each $j \in [k]$.
    Clearly $f(g(\theta)) = \theta = g(f(\theta))$.
    Then $f$ is a bijection.
    It remains to be shown that $F_v$ is a syntactic reparameterization with respect to $f$.
    This follows by induction on the structure of $\Circ( \mathcal{G}, \mathcal{H} )$.
    First, the claim is proven for singleton circuits using~\cref{Prop:GateInd}.
    \begin{itemize}
    \item \textbf{Base Case (1)}.
          Let $G \in \mathcal{G}$.
          Then $F_v( G ) = G$ and $\interp{G}( \theta ) = G$ by definition.
          Let $x \in \mathbb{R}^k$.
          Then $\interp{F_v(G)}( x ) = \interp{G}( x ) = G = \interp{G}( f( x ) ) = (\interp{G} \circ f)( x )$.
          Since $x$ was arbitrary, then $\interp{F_v(G)} = \interp{G} \circ f$.
    \item \textbf{Base Case (2)}.
          Let $M \in \mathcal{H}$ and $p \in \mathcal{F}$.
          Then there exists coefficients $a \in \mathbb{Q}^k$ and $r \in \mathbb{R}$ such that $p(\theta) = a_1 \theta_1 + \cdots + a_k \theta_k + r$.
          Then $F_v( R_M( p ) ) = R_M( q )$ where $q( \theta ) = (v_1 a_1) \theta_1 + \cdots + (v_k a_k) \theta_k + r$.
          Then the following equation holds.
          {\small\begin{equation*}
              p( f( \theta ) )
              =
              a_1 (v_1 \theta_1) + \cdots + a_k (v_k \theta_k) + r
              =
              v_1 (a_1 \theta_1) + \cdots + v_k (a_k \theta_k) + r
              =
              q( \theta )
          \end{equation*}\par}\noindent%
          Then $\interp{F_v( R_M( p ) )} = \interp{R_M( q )} = \cos( -p( f( \theta ) / 2 ) I + i \sin( -p( f( \theta ) ) / 2 ) M = \interp{G} \circ f$.
    \item \textbf{Control Induction}.
          Let $G \in \Sigma( \mathcal{G}, \mathcal{H} )$.
          Assume that $G$ is unitary and $\interp{F_v(G)} = \interp{G} \circ f$. Then $\interp{F_v(C(G))} = \interp{C(F_v(G))} = I \oplus \interp{F_v(G)} = I \oplus (\interp{G} \circ f) = ( I \oplus \interp{G} ) \circ f = \interp{C(G)} \circ f$.
    \end{itemize}
    Then by~\cref{Prop:GateInd}, $\interp{F_v( G )} = \interp{G} \circ f$, for all $G \in \Sigma( \mathcal{G}, \mathcal{H} )$.
    Next, \cref{Prop:CircInd} is used to prove the claim for all circuits.
    \begin{itemize}
    \item \textbf{Base Case (1)}.
          By definition of $\epsilon$,  $F_v( \epsilon ) = \epsilon$ and $\interp{\epsilon}( \theta ) = I_2$.
          Let $x \in \mathbb{R}^k$.
          Then $\interp{\epsilon}( x ) = I = \interp{\epsilon}( f( x ) ) = ( \interp{\epsilon} \circ f )( x )$.
          Since $x$ was arbitrary, then $\interp{F_v( G )} = \interp{G} \circ f$.
    \item \textbf{Base Case (2)}.
          If $C \in \Sigma( \mathcal{G}, \mathcal{H} )$, then $\interp{F_v( C ) } = \interp{C} \circ f$ by the first sub-proof.
    \item \textbf{Parallel Induction}.
          Let $C_1 \in \Circ( \mathcal{G}, \mathcal{H} )$ and $C_2 \in \Circ( \mathcal{G}, \mathcal{H} )$.
          Assume that $\interp{F_v( C_1 )} = \interp{C_1} \circ f$ and $\interp{F_v( C_2 )} = \interp{C_2} \circ f$.
          Then for each $\theta \in \mathbb{R}^k$, the following equation holds
          {\small\begin{equation*}
              \interp{F_v( C_1 )}( \theta ) \otimes \interp{F_v( C_2 )}( \theta )
              =
              \interp{C_1}( f( \theta ) ) \otimes \interp{C_2}( f( \theta ) )
              =
              ( \interp{C_1} \otimes \interp{C_2} )( f( \theta ) )
          \end{equation*}\par}\noindent%
          Since $\theta$ was arbitrary, then $\interp{F_v( C_1 )} \otimes \interp{F_v( C_2 )} = (\interp{C_1} \otimes \interp{C_2}) \circ f$ is true for all $\theta \in \mathbb{R}^k$.
          Then the following equation holds.
          {\small\begin{equation*}
              \interp{F_v( C_1 // C_2 )}
              =
              \interp{F_v( C_1 ) // F_v( C_2 )}
              =
              \interp{F_v( C_1 )} \otimes \interp{F_v( C_2 )}
              =
              (\interp{C_1} \otimes \interp{C_2}) \circ f
              =
              \interp{C_1 // C_2} \circ f
          \end{equation*}\par}\noindent%
    \item \textbf{Sequential Induction}.
          Let $C_1 \in \Circ( \mathcal{G}, \mathcal{H} )$ and $C_2 \in \Circ( \mathcal{G}, \mathcal{H} )$ with $\wout( C_1 ) = \win( C_2 )$.
          Assume that $\interp{F_v( C_1 )} = \interp{C_1} \circ f$ and $\interp{F_v( C_2 )} = \interp{C_2} \circ f$.
          Then for each $\theta \in \mathbb{R}^k$, the following equation holds.
          {\small\begin{equation*}
              \interp{F_v( C_2 )}( \theta ) \interp{F_v( C_1 )}( \theta )
              =
              \interp{C_2}( f( \theta ) ) \interp{C_1}( f( \theta ) )
              =
              ( \interp{C_2} \interp{C_1} )( f( \theta ) )
          \end{equation*}\par}\noindent%
          Since $\theta$ was arbitrary, then $\interp{F_v( C_2 )} \interp{F_v( C_1 )} = (\interp{C_12} \interp{C_1}) \circ f$ is true for all $\theta \in \mathbb{R}^k$.
          Then the following equation holds.
          {\small\begin{equation*}
              \interp{F_v( C_2 \circ C_1 )}
              =
              \interp{F_v( C_2 ) \circ F_v( C_1 )}
              =
              \interp{F_v( C_2 )} \interp{F_v( C_1 )}
              =
              (\interp{C_2} \interp{C_1}) \circ f
              =
              \interp{C_2 \circ C_1} \circ f
          \end{equation*}\par}\noindent%
    \end{itemize}
    Then by~\cref{Prop:CircInd}, $\interp{F_v( C )} = \interp{C} \circ f$ for all $C \in \Circ( \mathcal{G}, \mathcal{H} )$.
    Therefore, $F_v$ is a syntactic reparameterization with respect to $f$.
\end{proof}

\begin{lemma}
    \label{Lemma:AlphaTx}
    Let $C \in \Circ( \mathcal{C}, \mathcal{H} )$ and $v \in ( \mathbb{Q}^{\times } )^k$.
    Then $|A( C ) | = |A( F_v( C )|$.
    Moreover, if $n = |A( C )|$, $j \in [n]$ and $\ell \in [k]$, then $( A( F_v( C ) )_j )_\ell = v_\ell ( A ( C )_j )_\ell$.
\end{lemma}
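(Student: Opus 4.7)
The plan is to prove both claims simultaneously by structural induction, first on gates via \cref{Prop:GateInd}, then extending to all circuits via \cref{Prop:CircInd}. Throughout, I will bundle the two assertions into a single predicate $P(C)$: that $|A(F_v(C))| = |A(C)|$, and that for every valid index pair $(j,\ell)$, the equation $(A(F_v(C))_j)_\ell = v_\ell (A(C)_j)_\ell$ holds. The overall strategy is to unfold the definitions of $A(-)$ and $F_v(-)$ in each inductive case, observing that both functions act compositionally with respect to sequence concatenation, so invariants on each factor lift to invariants on the composite.

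For the gate induction, the case $G \in \mathcal{G}$ is immediate: $A(G) = \epsilon$ and $F_v(G) = G$ so $A(F_v(G)) = \epsilon$, and the indexed claim holds vacuously. For the rotation case with $f(\theta) = a_1\theta_1 + \cdots + a_k\theta_k + q$, one reads off $A(R_M(f)) = ((a_1,\ldots,a_k))$ and $F_v(R_M(f)) = R_M(g)$ where $g(\theta) = (v_1 a_1)\theta_1 + \cdots + (v_k a_k)\theta_k + q$, giving $A(F_v(R_M(f))) = ((v_1 a_1,\ldots, v_k a_k))$; both invariants hold by inspection. For control, $F_v(C(G)) = C(F_v(G))$ and $A(C(G)) = A(G)$, so the claim for $C(G)$ reduces directly to the inductive hypothesis on $G$.

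For the circuit induction, the empty-wire case is immediate since $A(\epsilon) = \epsilon$ and $F_v(\epsilon) = \epsilon$, and the gate case is the preceding sub-proof. For each composition operator $\square \in \{//, \circ\}$, the definitions yield $F_v(C_1 \,\square\, C_2) = F_v(C_1) \,\square\, F_v(C_2)$ and $A$ applied to either side concatenates the $A$-values of the sub-circuits (with the convention that $A(C_2 \circ C_1) = A(C_2) \cdot A(C_1)$). Length preservation follows since $|A(F_v(C_1)) \cdot A(F_v(C_2))| = |A(F_v(C_1))| + |A(F_v(C_2))|$ and the inductive hypothesis gives equality on each summand. For the indexed scaling property, split on whether the index $j$ lands in the prefix coming from the first factor or in the suffix coming from the second factor, and invoke the appropriate inductive hypothesis after translating $j$ within that factor's range.

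The main obstacle is not mathematical depth but bookkeeping: one must take care with the fact that sequential composition reverses the order inside $A(-)$ (\emph{i.e.}, $A(C_2 \circ C_1) = A(C_2) \cdot A(C_1)$ while $F_v(C_2 \circ C_1) = F_v(C_2) \circ F_v(C_1)$), and that the index $j$ in the concatenated sequence must be correctly re-based when appealing to the inductive hypothesis on the chosen sub-circuit. Once this translation is handled explicitly in the parallel and sequential induction steps, each case collapses to a direct unfolding of definitions, and \cref{Prop:CircInd} delivers the conclusion for all $C \in \Circ(\mathcal{G},\mathcal{H})$.
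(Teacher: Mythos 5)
Your proposal is correct and follows essentially the same structure as the paper's proof: bundle (R1) and (R2) into a single predicate, establish it on gates via \cref{Prop:GateInd}, then lift to all circuits via \cref{Prop:CircInd}, splitting on whether the index $j$ falls in the first or second factor's contribution to $A(-)$ in the composition cases. The orientation concern you flag for sequential composition is in fact benign, since $A(C_2 \circ C_1) = A(C_2) \cdot A(C_1)$ and $A(F_v(C_2) \circ F_v(C_1)) = A(F_v(C_2)) \cdot A(F_v(C_1))$ both place the $C_2$-part first, so the re-indexing is identical to the parallel case — which is exactly why the paper dispatches that case as ``the same argument with $(//)$ replaced by $(\circ)$.''
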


\begin{proof}
    Let $\zpred( - )$ denote the predicate on $\Circ( \mathcal{G}, \mathcal{H} )$ such that $\zpred( C )$ if and only if the following properties hold.
    \begin{itemize}
        \item (R1). $|A( C )| = |A( F_v( C )|$.
        \item (R2). If $n = |A( C )|$, $j \in [n]$ and $\ell \in [k]$, then $( A( F_v( C ) )_j )_\ell = v_\ell ( A ( C )_j )_\ell$.
    \end{itemize}
    The proof follows by induction on the structure of $\Circ( \mathcal{G}, \mathcal{H} )$.
    First, the claim is proven for singleton circuits using~\cref{Prop:GateInd}.
    \begin{itemize}
    \item \textbf{Base Case (1)}.
          Let $G \in \mathcal{G}$.
          Since $F_v( G ) = G$, then $|A( F_v( G ) )| = |A( G )|$.
          Then $G$ satisfies (R1).
          Since $|A( G )| = 0$, then (R2) is vacuously true for $G$.
          Then $\zpred( G )$ holds.
    \item \textbf{Base Case (2)}.
          Let $M \in \mathcal{H}$ and $p \in \mathcal{F}$.
          Then there exists coefficients $a \in \mathbb{Q}^k$ and $r \in \mathbb{Q}$ such that $p( \theta ) = a_1 \theta_1 + a_2 \theta_2 + \cdots + a_k \theta_k + r$.
          Then $F_v( R_M( p ) ) = R_M( q )$ where $q( \theta ) = (v_1 a_1) \theta_1 + (v_2 a_2) \theta_2 + \cdots + (v_k a_k) \theta_k + r$.
          Then $|A( R_M( p ) )| = 1 = |A( R_M( q ) )|$ and $R_M( p )$ satisfies (R1).
          Next, let $n = |A( G )|$, $j \in [n]$, and $\ell \in [k]$.
          Since $n = 1$, then $j = 1$ and the following equation holds.
          {\small\begin{equation*}
            ( A( F_v( R_M( p ) ) )_j )_\ell
            =
            ( A( R_M( p ) )_1 )_\ell
            =
            v_\ell a_\ell
            =
            v_\ell ( A( R_M( p ) )_1 )_\ell
            =
            v_\ell ( A( R_M( p ) )_j )_\ell
          \end{equation*}\par}\noindent%
          Since $n = 1$, $j$ and $\ell$ were arbitrary, then $R_M( p )$ satisfies (R2).
          Then $\zpred( R_M( p ) )$ holds.
    \item \textbf{Control Induction}.
          Let $G \in \Sigma( \mathcal{G}, \mathcal{H} )$.
          Assume that $G$ is a unitary gate and that $\zpred( G )$ holds.
          Since $G$ satisfies (R1), then the following equation holds.
          {\small\begin{equation*}
            |A( F_v( C( G ) ) )|
            =
            |A( C( F_v( G ) ) )|
            =
            |A( F_v( G ) )|
            =
            |A( G )|
            =
            |A( C ( G ) )|
          \end{equation*}\par}\noindent%
          Then $C( G )$ satisfies (R1).
          Next, let $n = |A( G )|$, $j \in [n]$, and $\ell \in [k]$.
          Since $C( G )$ satisfies R(2), then the following equation holds.
          {\small\begin{equation*}
            (A(F_v(C(G)))_j)_\ell
            =
            (A(F_v(G))_j)_\ell
            =
            v_\ell ( A( G )_j )_\ell
            =
            v_\ell ( A( C( G ) )_j )_\ell.
          \end{equation*}\par}\noindent%
          Since $j$ and $k$ were arbitrary, then $C( G )$ satisfies (R2) as well.
          Then $\zpred( C( G ) )$ holds.
    \end{itemize}
    Then by~\cref{Prop:GateInd}, $\zpred( G )$, for all $G \in \Sigma( \mathcal{G}, \mathcal{H} )$.
    Next, \cref{Prop:CircInd} is used to prove the claim for all circuits.
    \begin{itemize}
    \item \textbf{Base Case (1)}.
          Since $F_v( \epsilon ) = \epsilon$, then $|A( F_v( \epsilon ) )| = |A( C )|$.
          Then $\epsilon$ satisfies (R1).
          Since $|A ( \epsilon )| = 0$, then (R2) is vacuously true for $\epsilon$.
          Then $\zpred( \epsilon )$ holds.
    \item \textbf{Base Case (2)}.
          If $C \in \Sigma( \mathcal{G}, \mathcal{H} )$, then $\zpred( C )$ holds by the first sub-proof.
    \item \textbf{Parallel Induction}.
          Let $C_1 \in \Circ( \mathcal{G}, \mathcal{H} )$ and $C_2 \in \Circ( \mathcal{G}, \mathcal{H} )$.
          Assume that $\zpred( C_1 )$ and $\zpred( C_2 )$ hold.
          Then by (R1), $|A( C_1 )| = |A( F_v( C_1 ) )|$ and $|A( C_2 )| = |A( F_v( C_2 ) )|$.
          Starting from $|A( C_1 // C_2 )|$,
          {\small\begin{equation*}
            |A( C_1 // C_2 )|
            =
            |A( C_1 ) \cdot A( C_2 )|
            =
            |A( C_1 )| + |A( C_2 )|
            =
            |A( F_v( C_1 ) )| + |A( F_v ( C_2 ) )|.
          \end{equation*}\par}\noindent%
          Likewise, starting from $|A( F_v( C_1 // C_2 ))| = |A( F_v( C_1 ) // F_v( C_2 ))|$,
          {\small\begin{equation*}
            |A( F_v( C_1 ) // F_v( C_2 ))|
            =
            |A( F_v( C_1 ) ) \cdot A( F_v( C_2 ) )|
            =
            |A( F_v( C_1 ) )| + |A( F_v ( C_2 ) )|.
          \end{equation*}\par}\noindent%
          Then by transitivity, $ |A( F_v( C_1 // C_2 ))| = |A( C_1 // C_2 )|$ and $C_1 // C_2$ satisfies (R1).
          Next, let $|A( C_1 // C_2 )| = n$, $j \in [n]$, and $\ell \in [k]$.
          There are two cases to consider.
          \begin{enumerate}
          \item Assume $j \le |A( C_1 )|$.
                Then by indexing, $A( C_1 // C_2 )_j = ( A( C_1 ) \cdot A( C_2 ) )_j = A( C_1 )_j$ and $A( F_v( C_1 // C_2 ) )_j = ( A( F_v( C_1 ) ) \cdot A( F_v( C_2 ) ) )_j = A( F_v( C_1 ) )_j$.
                Since $C_1$ satisfies (R2), then $A( F_v( C_1 ) )_j = v_\ell ( A( C_1 )_j )_\ell$.
                By equality, $( A( F_v( C_1 // C_2 ) )_j )_\ell = v_\ell ( A( C_1 // C_2 )_j )_\ell$.
          \item Assume that $j > |A( C_1 )|$ and define $i = j - |A( C_1 )|$.
                Then by indexing, $A( C_1 // C_2 )_j = A( C_2 )_i$ and $A( F_v( C_1 // C_2 ) )_j = A( C_2 )_i$.
                Since $|A( C_1 // C_2 )| = |A( C_1 )| + |A( C_2 )|$, then $1 \le i \le |A( C_2 )|$.
                Since $C_2$ satisfies (R2), then $A( F_v( C_2 ) )_j = v_\ell ( A( C_2 )_j )_\ell$.
                It follows by equality that $( A( F_v( C_1 // C_2 ) )_j )_\ell = v_\ell ( A( C_1 // C_2 )_j )_\ell$.
          \end{enumerate}
          In either case, $( A( F_v( C_1 // C_2 ) )_j )_\ell = v_\ell ( A( C1 // C_2 )_j )_\ell$.
          Since $j$ and $\ell$ were arbitrary, then $C_1 // C_2$ satisfies (R2).
          Then $\zpred( C_1 // C_2 )$ holds.
    \item \textbf{Sequential Induction}.
          This case follows by the same argument, with all occurrences of the $( // )$ connective replaced by $( \circ )$.
    \end{itemize}
    Then by~\cref{Prop:CircInd}, $\zpred( C )$ for all $C \in \Circ( \mathcal{G}, \mathcal{H} )$.
\end{proof}

\fracexact*
\begin{proof}
    Assume for the intent of contradiction that $F_v( C_1 ) \not \in \ZCirc( \mathcal{G}, \mathcal{H} )$.
    Then by the definition of $\ZCirc( \mathcal{G}, \mathcal{H} )$, $A( F_v( C_1 ) ) \not \in ( \mathbb{Z}^k )^*$.
    Let $n = |A( F_v( C_1 ) )|$.
    Then there exists a $j \in [n]$ such that $A( F_v( C_1 ) )_j \not \in \mathbb{Z}^k$.
    Let $\alpha = A( F_v( C_1 ) )_j$.
    Then there exists an $\ell \in [k]$ such that $\alpha_\ell \not \in \mathbb{Z}$.
    Then $\denom( \alpha_\ell ) \ne 1$.
    Let $d = \denom( \alpha_\ell )$ and $x$ be the numerator such that $\alpha_\ell = x / d$.
    Let $\beta = A( C_1 )_j$.
    Then $x / d = v_\ell \beta_\ell$ by \cref{Lemma:AlphaTx}.
    Define the set,
    {\small\begin{equation*}
        X_\ell = \{ \denom( \alpha_\ell ) : \alpha \in A( C_1 ) \cdot A( C_2 ) \}
    \end{equation*}\par}\noindent%
    Then by definition, $\denom( \beta_\ell ) \in X_\ell$ and $v_\ell = \lcm( X_\ell )$.
    Let $d' = \denom( \beta_\ell )$ and $y$ be the numerator such that $\beta_\ell = y / d'$.
    Since $v_\ell$ is the least common multiple of the elements in $X_\ell$ and $d' \in X_\ell$, then $d' \mid v_\ell$.
    Then there exists a quotient $q \in \mathbb{Z}$ such that $q d' = v_\ell$.
    Then $x / d = v_\ell \beta_\ell = v_\ell ( y / d' ) = ( q d' ) ( y / d' ) = qy \in \mathbb{Z}$.
    In other words, $d = 1$.
    However, $d \ne 1$ by assumption.
    Then by contradiction $F_v( C_1 ) \in \ZCirc( \mathcal{G}, \mathcal{H} )$.
    By a symmetric argument, $F_v( C_2 ) \in \ZCirc( \mathcal{G}, \mathcal{H} )$.
    In remains to be shown that $\interp{C_1} = \interp{C_2}$ if and only if $\interp{F_v( C_1 )} = \interp{F_v( C_2 )}$.
    By \cref{Thm:FracExact}, there exists a bijection $f$ such that $\interp{F_v( C_1 )} = \interp{F_v( C_2 )}$ if and only if $\interp{C_1} \circ f = \interp{C_2} \circ f$.
    By \cref{Lemma:Biject}, $\interp{C_1} \circ f = \interp{C_2} \circ f$ if and only if $\interp{C_1} = \interp{C_2}$.
    This completes the proof.
\end{proof}

\section{Global Affine Linear Phase Inference}
\label{Appendix:Phase}

This section considers the problem of determining parameterized equivalence up to affine rational linear global phase, under the assumptions of~\cref{Sect:Implementation:Param}.
These assumptions subsume circuits over the Clifford+$T$ gate set, with arbitrary Pauli-rotations and state preparation.
This generalizes the gate sets considered in prior work, and moreover, is the first result of decidability for this problem.

Assume that $C_1 \in \ZCirc( \mathcal{G}, \mathcal{H} )$ and $C_2 \in \ZCirc( \mathcal{G}, \mathcal{H} )$ are two circuits which differ by an affine rational linear global phase.
This means that there exists an $\alpha \in \mathbb{Q}^k$ and a $\beta \in \mathbb{R}$ such that $\interp{C_2}( \theta ) = e^{i(\alpha_1\theta_1 + \cdots + \alpha_k\theta_k + \beta)} \interp{C_1}( \theta )$ for each $\theta \in \mathbb{R}^k$.
The goal of this section is to find an algorithm which, given $C_1$ and $C_2$, can solve for $\alpha_1$ through to $\alpha_k$.
In the case where $C_1$ and $C_2$ are not equivalent up to affine linear global phase, the algorithm should still terminate, but may return anything, since affine linear global phase can be validated easily through circuit instrumentation (we will see that $\alpha_1$ through to $\alpha_k$ are integral).

One approach to this problem is to note that $x \to e^{ix\pi}$ is injective on any not strictly closed interval of length $2\pi$.
This means that if $e^{i\alpha_j\pi}$ could be isolated, then it would be possible to compute its inverse.
However, $\alpha_j$ can be arbitrarily large, so it is unclear on which interval this inverse should be computed.
An alternative approach is to find some $b_j \in \mathbb{N}$ satisfying $b_j > \alpha_j$, so that $\alpha_j / b_j \in (-\pi, \pi)$.
Then $(\alpha_j / b_j) \pi \in (-1, 1)$, and consequently, it would be possible to isolate $e^{i(\alpha_j/b_j)\pi}$.

This approach can be decomposed into three sub-problems.
First, it must be shown how to compute an $b_j \in \mathbb{N}$ such that $\alpha_j / b_j \in (-1, 1)$.
It will follow from this analysis that $\alpha_j \in \mathbb{Z}$.
Second, it must be shown how to isolate $z = e^{i(\alpha_j/b_j)\pi}$.
This can be done via arithmetic operations on the matrix entries of $\interp{C_1}( \theta )$ and $\interp{C_2}( \theta )$, where $\theta $ is some rational multiple of $\pi$.
Since parameterized circuits over the universal cyclotomic field stay within the cyclotomic field when evaluated at rational points, then $z$ must live within the universal cyclotomic field as well.
Moreover, this is computable.
Motivated by this observation we then show how to compute the inverse to $x \mapsto e^{ix\pi}$ for $x \in (-1, 1) \cap \mathbb{Q}$.

\subsection{Normalizing the Coefficients}

The goal of this section is to find some $b_j \in \mathbb{N}$ such that $\alpha_j / b_j \in (-1, 1)$.
As in the proof of~\cref{Thm:QuantElim}, the idea is to inspect the polynomial abstraction $\pinterp{-}$ from \cref{Sect:Equiv:Poly}, to restrict the set of possible values for $\alpha_j$.
It will first be shown that $\alpha_j$ must be an integer.
Using this information, the degree bound on $\pinterp{-}$ from~\cref{Thm:PolyMat} will be used to show that $|\alpha_j| < 2\lambda_j$.
Intuitively, the frequency of $\interp{-}$ in the $j$-th component bounds $\alpha_j$.

To this end, $\interp{-}$ must be characterized when restricted to change in the $j$-th component.
Let $e_\ell$ denote the $\ell$-th standard basis vector for $\mathbb{R}^k$ and $c_\ell$ denote the $\ell$-th standard basis vector for $\mathbb{C}^k$.
Fix some starting point $\widehat{\theta} \in \mathbb{R}^k$.
If the $j$-th parameter changes by $x \in \mathbb{R}$, then the new parameter will be $\widehat{\theta} + x e_j$.
It turns out that regardless of the starting point, the function $x \mapsto \interp{C}(\widehat{\theta} + x e_j)$ is periodic with period at most $4\pi$.
As a convenience of notation, define the following two maps.
\begin{itemize}
\item $\omega_1: \mathbb{R} \to \mathbb{R}^k$ such that $\omega_1( x ) = \widehat{\theta} + x e_j$.
\item $\omega_2: \mathbb{C} \to \mathbb{R}^k$ such that $\omega_2( z ) = \sum_{\ell \in [k] \setminus \{ j \}} \left( \exp( -i\widehat{\theta}_\ell/2 ) c_\ell \right) + z c_j$
\end{itemize}
Then $\interp{C}(\widehat{\theta} + x e_j) = \interp{C}( \omega_1( x ) )$.
Moreover, $\interp{C}( \omega_1( x ) ) = \pinterp{C}( \omega_2( \exp( -ix/2 ) ) )$ by \cref{Thm:PolyAbstract}.
It is easy to see that $\pinterp{C} \circ \omega_2$ is a matrix of single variable Laurent polynomials, since $\omega_2$ fixes all other components.
Using $\pinterp{C} \circ \omega_2$, it is straight-forward to prove that $\interp{C} \circ \omega_1$ is periodic.

\begin{lemma}
    \label{Lem:Periodicity}
    If $C \in \ZCirc( \mathcal{G}, \mathcal{H} )$, then for each $\widehat{\theta} \in \mathbb{R}^k$ and $j \in [j]$, the function $\interp{C} \circ \omega_1$ is periodic with period at most $4\pi$.
\end{lemma}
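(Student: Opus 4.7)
The plan is to reduce periodicity in $x$ to the periodicity of the single scalar function $x \mapsto e^{-ix/2}$, using the polynomial abstraction as the bridge between the analytic semantics $\interp{-}$ and the algebraic semantics $\pinterp{-}$.

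First, I would appeal to \cref{Thm:PolyAbstract} to rewrite $\interp{C}$ in terms of $\pinterp{C}$: for every $\theta \in \mathbb{R}^k$,
\[
    \interp{C}(\theta) \;=\; \pinterp{C}\bigl(e^{-i\theta_1/2}, \ldots, e^{-i\theta_k/2}\bigr).
\]
Substituting $\theta = \omega_1(x) = \widehat{\theta} + x e_j$, the $\ell$-th coordinate of the argument is $e^{-i\widehat{\theta}_\ell/2}$ for $\ell \ne j$ and $e^{-i(\widehat{\theta}_j + x)/2} = e^{-i\widehat{\theta}_j/2} \cdot e^{-ix/2}$ for $\ell = j$. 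Every coordinate except the $j$-th is therefore a constant in $x$, so after collecting these constants into the fixed vector prescribed by $\omega_2$ (up to absorbing the factor $e^{-i\widehat{\theta}_j/2}$ into $z$), the composite $\interp{C} \circ \omega_1$ depends on $x$ only through the scalar $e^{-ix/2}$.

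Next, I would observe the elementary identity $e^{-i(x+4\pi)/2} = e^{-ix/2} \cdot e^{-2\pi i} = e^{-ix/2}$. Plugging in $x + 4\pi$ in place of $x$ above, every coordinate of the argument to $\pinterp{C}$ is unchanged, hence
\[
    \interp{C}\bigl(\omega_1(x+4\pi)\bigr) \;=\; \pinterp{C}\bigl(\omega_2(e^{-i(x+4\pi)/2})\bigr) \;=\; \pinterp{C}\bigl(\omega_2(e^{-ix/2})\bigr) \;=\; \interp{C}\bigl(\omega_1(x)\bigr)
\]
for all $x \in \mathbb{R}$. Thus $\interp{C} \circ \omega_1$ is periodic, with period dividing $4\pi$.

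There is no genuine obstacle here: $\pinterp{C}$ is a matrix of Laurent polynomials by \cref{Thm:PolyAbstract}, which ensures that pointwise evaluation is legitimate for every nonzero complex value of the $j$-th slot (in particular for $e^{-ix/2}$, which lies on the unit circle and is never zero). The only mild care required is bookkeeping to confirm that the substitution $z = e^{-ix/2}$ produces exactly the input vector $\omega_2(e^{-ix/2})$ prescribed in the statement, after the constant $e^{-i\widehat{\theta}_j/2}$ has been folded into the definition of $\omega_2$; this is a purely notational verification and does not affect the periodicity conclusion.
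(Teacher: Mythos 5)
Your proof is correct and follows the paper's own argument almost exactly: both use Theorem~\ref{Thm:PolyAbstract} to express $\interp{C}\circ\omega_1$ as $\pinterp{C}$ evaluated at a point whose only $x$-dependent coordinate is $e^{-ix/2}$, then conclude from the $4\pi$-periodicity of $x\mapsto e^{-ix/2}$. Your observation about the factor $e^{-i\widehat{\theta}_j/2}$ needing to be folded into the $j$-th slot is a small notational point the paper glosses over; it does not change the argument, and otherwise the two proofs coincide (the paper just carries it out componentwise on $\interp{C}_{s,t}$).
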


\begin{proof}
    Let $j \in [k]$.
    Then let $s \in [2^n]$ and $t \in [2^m]$ where $n = \win( C )$ and $m \in \wout( C )$.
    Define $F = \interp{C}_{s,t}$ and $f = \left( \pinterp{C} \right)_{s,t}$.
    Then by \cref{Thm:PolyAbstract}, $F( \omega_1( x ) ) = f( \omega_2( \exp( -x/2 ) ) )$.
    Let $x \in \mathbb{R}$ and $\ell \in \mathbb{N}$.
    Then the following equation holds.
    {\small\begin{equation*}
        F( x )
        =
        f( \exp( -ix/2 ) )
        =
        f( \exp( -ix/2 - 2\pi\ell ) )
        =
        f( \exp( -i(x+4\pi\ell)/2 ) )
        =
        F( x + 4\pi\ell )
    \end{equation*}\par}\noindent%
    Since $x$ and $\ell$ were arbitrary, then by definition, $F \circ \omega_1$ is periodic with period at most $4\pi$.
    Since $s$ and $t$ were arbitrary as well, then $\interp{C} \circ \omega_1$ is periodic with period at most $4\pi$.
    Since $j$ was arbitrary as well, then the proof is complete.
\end{proof}

In the case where $C_1$ and $C_2$ do differ by an affine linear global phase, it then follows that $e^{-ip(\theta)/2} \interp{C_1}( \theta )$ is a periodic function, since $\interp{C_2}( \theta )$ is a already known to be periodic.
Since $\interp{C_1}( \theta )$ is already known to be periodic, then this claim holds if and only if $\exp( -i\theta_j/2 )$ is periodic as well.
It is well known that $\exp( -i\theta_j/2 )$ is periodic if and only if $\alpha_j$ is rational.
Moreover, the period of $4\pi$ enforces that $\alpha_j$ is in fact an integer.

\begin{theorem}
    \label{Thm:AlphaBnd}
    If $C_1 \in \ZCirc( \mathcal{G}, \mathcal{H} )$ and $C_2 \in \ZCirc( \mathcal{G}, \mathcal{H} )$ satisfy $\interp{C_2}( \theta ) = e^{-ip(\theta)/2} \interp{C_1}( \theta )$ where $p(\theta) = \alpha_1 \theta_1 + \cdots + \alpha_k \theta_k + \beta$ for some $\alpha \in \mathbb{R}^k$ and $\beta \in \mathbb{R}$, then $\alpha \in \mathbb{Z}^k$.
\end{theorem}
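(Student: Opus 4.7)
The plan is to exploit the $4\pi$-periodicity established in \cref{Lem:Periodicity} against the affine-linear exponential factor. First I would fix an index $j \in [k]$ and an arbitrary starting point $\widehat{\theta} \in \mathbb{R}^k$, and consider the single-variable slices $F_1(x) = \interp{C_1}(\omega_1(x))$ and $F_2(x) = \interp{C_2}(\omega_1(x))$, where $\omega_1$ varies only the $j$-th coordinate. By \cref{Lem:Periodicity}, both $F_1$ and $F_2$ are periodic with period dividing $4\pi$.

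Along this slice, $p(\omega_1(x))$ collapses to $\alpha_j x + \gamma$ for a constant $\gamma$ depending on $\widehat{\theta}$, $\alpha$, and $\beta$, so the global-phase hypothesis becomes $F_2(x) = e^{-i(\alpha_j x + \gamma)/2} F_1(x)$. Substituting $x \mapsto x + 4\pi$ on both sides and using the periodicity of $F_1$ and $F_2$, I obtain
\[
F_2(x) \;=\; F_2(x+4\pi) \;=\; e^{-i2\pi\alpha_j}\, e^{-i(\alpha_j x + \gamma)/2}\, F_1(x) \;=\; e^{-i2\pi\alpha_j}\, F_2(x).
\]
Hence $(1 - e^{-i2\pi\alpha_j})\, F_2(x) = 0$ for all $x$, and provided $F_2$ is not identically zero I conclude $e^{-i2\pi\alpha_j} = 1$, i.e., $\alpha_j \in \mathbb{Z}$. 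Since $j$ was arbitrary, $\alpha \in \mathbb{Z}^k$.

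The one subtlety, and the main obstacle, is arranging that $F_2$ is not identically zero. This reduces to choosing $\widehat{\theta}$ such that $\interp{C_1}(\widehat{\theta}) \ne 0$, since the exponential factor never vanishes and therefore $F_2(\widehat{\theta}_j)$ inherits non-vanishing from $F_1(\widehat{\theta}_j)$. Under the standing assumptions of \cref{Sect:Implementation:Param}---matrices in $\mathcal{G}$ are injective and the rotation gates generated by $\mathcal{H}$ are unitary---a short structural induction via \cref{Prop:CircInd} shows that $\interp{C_1}$ is not the zero function, so such a $\widehat{\theta}$ exists. The periodicity manipulation itself is essentially mechanical once \cref{Lem:Periodicity} is in hand; isolating the non-vanishing point is the only place where the injectivity hypothesis on $\mathcal{G}$ is actually used.
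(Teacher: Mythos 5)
Your core argument is the same as the paper's: slice $\interp{C_1}$ and $\interp{C_2}$ along the $j$-th coordinate using $\omega_1$, invoke the $4\pi$-periodicity from \cref{Lem:Periodicity}, substitute $x \mapsto x + 4\pi$ into the global-phase identity, cancel the shared factor, and conclude $e^{-i2\pi\alpha_j} = 1$, hence $\alpha_j \in \mathbb{Z}$. The algebraic manipulation is identical in substance; you write it as $(1 - e^{-i2\pi\alpha_j})F_2(x) = 0$ and the paper equivalently evaluates at $\omega_1(0)$ versus $\omega_1(4\pi)$ and cancels $\gamma F(\omega_1(0)) \ne 0$.

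The one point of genuine divergence is how each proof secures the non-vanishing needed to cancel. You lean on the standing injectivity assumption on $\mathcal{G}$ from \cref{Sect:Implementation:Param} (formalized as \cref{Thm:InjGate}) to guarantee $\interp{C_1}$ is never zero, which in particular gives a $\widehat{\theta}$ with $\interp{C_1}(\widehat{\theta}) \ne 0$. The paper instead avoids invoking injectivity here: it splits into cases, handling $\interp{C_1} \equiv 0$ by observing that a trivial integral choice of $(\alpha, \beta)$ satisfies the hypothesis, and otherwise picking a component $(s,t)$ and a point $\widehat{\theta}$ where $\interp{C_1}_{s,t}(\widehat{\theta}) \ne 0$. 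Your approach is slightly cleaner in the setting where the theorem is actually used (all of \cref{Appendix:Phase} assumes injectivity anyway), but it proves a marginally narrower statement, since the theorem as written does not hypothesize injectivity; the paper's case split is what lets the claim stand without that extra assumption. Either way, the periodicity argument that does the real work is the same, and your proof is correct under the section's standing assumptions.
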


\begin{proof}
    Since $\interp{C_1}$ and $\interp{C_2}$ are equal up to a scalar function, then $\interp{C_1}$ and $\interp{C_2}$ have the same dimension.
    If $\interp{C_1}$ is identically zero, then $\interp{C_1}(\theta) = \mathbf{0}$ and $\interp{C_2}(\theta) = e^{ip(\theta)/2} \interp{C_1} = \mathbf{0}$.
    Then, $\alpha_j = 0$ and $\beta = 0$ is a valid solution.
    Assume instead that $\interp{C_1}( \theta )$ is not identically zero.
    Then there exists some component $( s, t )$ such that $\interp{C_1}_{s,t}$ is not identically zero.
    Define the functions $F = \interp{C_1}_{s,t}$ and $G = \interp{C_2}_{s,t}$.
    Since $\interp{C_1}_{s,t}$ is not identically zero, then there exists some $\widehat{\theta} \in \mathbb{R}^k$ such that $F( \widehat{\theta} ) \ne 0$.
    Define $\gamma = \exp( -ip( \widehat{\theta} ) / 2 )$.
    By \cref{Lem:Periodicity}, both $F$ and $G$ have period at most $4\pi$.
    Then the following equation holds.
    {\small\begin{equation*}
        \gamma F( \omega_1( 0 ) )
        =
        G( \omega_1( 0 ) )
        =
        G( \omega_1( 4\pi ) )
        =
        \gamma \exp( -i(4\pi)\alpha_j/2 ) F( \omega_1( 4\pi ) )
        =
        \gamma \exp( -i(2\pi)\alpha_j ) F( \omega_1( 0 ) )
    \end{equation*}\par}\noindent%
    Since $\gamma$ is an exponential, then $\gamma \ne 0$.
    Moreover, $F( \omega_1( 0 ) ) = F( \widehat{\theta} ) \ne 0$ by assumption.
    Then $1 = \exp( -i(2\pi)\alpha_j )$.
    Then there exists some $\ell \in \mathbb{Z}$ such that $-(2\pi)\alpha_j = 2\pi \ell$.
    Then $\theta_j = -\ell \in \mathbb{Z}$.
    Since $j$ was arbitrary, then $\alpha \in \mathbb{Z}$.
\end{proof}

Since $\alpha_j \in \mathbb{Z}$, then scaling $\pinterp{C_1} \circ \omega_2$ by $e^{i\beta} ( z_j )^{\alpha_j}$ yields a new Laurent polynomial which agrees with $\interp{C_2} \circ \omega_1$ on the unit circle.
Since the number of points on the unit circle is infinite, then it follows from~\cref{Thm:Nullstellensatz} that these two polynomials are identically equal.
This means that the degree bounds for $\pinterp{C_1}$ are also bounds on $|\alpha_j|$.
It follows that $|\alpha_j| \le 2 \lambda_j$.
Consequently, $\alpha_j / b_j \in (-1, 1)$ when $b_j = 2\lambda_j$.

\begin{corollary}
    \label{Cor:AlphaBnd}
    Assume that $C_1 \in \ZCirc( \mathcal{G}, \mathcal{H} )$ and $C_1 \in \ZCirc( \mathcal{G}, \mathcal{H} )$ satisfy the equation $\interp{C_2}( \theta ) = e^{-ip(\theta)/2} \interp{C_1}( \theta )$ where $p(\theta) = \alpha_1 \theta_1 + \cdots + \alpha_k \theta_k + \beta$ for some $\alpha \in \mathbb{R}^k$ and $\beta \in \mathbb{R}$.
    Then $\alpha_j \in \mathbb{Z}$ and $|\alpha_j| \le 2\lambda_j$ for each $j \in [k]$.
\end{corollary}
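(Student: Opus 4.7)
\emph{Proof plan.} The integrality claim $\alpha \in \mathbb{Z}^k$ is immediate from \cref{Thm:AlphaBnd}, so the real work is to establish the bound $|\alpha_j| \le 2\lambda_j$ for each $j \in [k]$. If $\interp{C_1}$ is identically zero then so is $\interp{C_2}$, and we may replace $\alpha$ by $0$; otherwise fix $j \in [k]$, choose a matrix entry $(s,t)$ and a base point $\widehat{\theta} \in \mathbb{R}^k$ with $\interp{C_1}_{s,t}(\widehat{\theta}) \ne 0$, and introduce the single-variable Laurent polynomials $f(z) = (\pinterp{C_1})_{s,t}(\omega_2(z))$ and $g(z) = (\pinterp{C_2})_{s,t}(\omega_2(z))$, using the slicing maps $\omega_1, \omega_2$ defined in the proof of \cref{Lem:Periodicity}.

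Next I will transport the global-phase equation along the $j$-th slice. Since $p(\omega_1(x)) = p(\widehat{\theta}) + \alpha_j x$, setting $\gamma = e^{-ip(\widehat{\theta})/2}$ yields $\interp{C_2}_{s,t}(\omega_1(x)) = \gamma \, e^{-i\alpha_j x / 2} \, \interp{C_1}_{s,t}(\omega_1(x))$ for all $x \in \mathbb{R}$. Substituting $z = e^{-ix/2}$ and using $\alpha_j \in \mathbb{Z}$ (so that $z^{\alpha_j}$ is itself a Laurent monomial), \cref{Thm:PolyAbstract} gives $g(z) = \gamma z^{\alpha_j} f(z)$ for every $z$ on the complex unit circle. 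Multiplying through by a sufficiently large positive power of $z$ clears all negative exponents on both sides, leaving an ordinary univariate polynomial with infinitely many zeros; hence $g = \gamma z^{\alpha_j} f$ as Laurent polynomials in $z$.

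Finally I compare supports. By \cref{Thm:PolyMat}, both $f$ and $g$ have all exponents in $[-\lambda_j, \lambda_j]$, since $\sum_{a \in A(C_\ell)} |a_j| \le \lambda_j$ for $\ell \in \{1,2\}$. Because $f(e^{-i\widehat{\theta}_j/2}) = \interp{C_1}_{s,t}(\widehat{\theta}) \ne 0$, the polynomial $f$ is nonzero and therefore has at least one exponent $t_0 \in [-\lambda_j, \lambda_j]$. The identity $g = \gamma z^{\alpha_j} f$ then forces $t_0 + \alpha_j$ to be an exponent of $g$, and so $t_0 + \alpha_j \in [-\lambda_j, \lambda_j]$. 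The triangle inequality $|\alpha_j| \le |t_0 + \alpha_j| + |t_0| \le 2\lambda_j$ then finishes the argument.

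The main obstacle is the slice-wise bookkeeping of the polynomial abstraction: one must verify carefully that $f$ and $g$ really are single-variable Laurent polynomials whose exponents inherit the coefficient bounds of \cref{Thm:PolyMat} in the distinguished variable $z_j$, and that the identity which holds only on the complex unit circle can be promoted to a genuine equality of Laurent polynomials before any support comparison is carried out.
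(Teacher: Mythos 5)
Your proposal is correct and follows the same main route as the paper: fix a coordinate $j$, slice $\pinterp{C_1}$ and $\pinterp{C_2}$ down to single-variable Laurent polynomials $f,g$ via $\omega_1,\omega_2$, use \cref{Thm:PolyAbstract} and $\alpha_j \in \mathbb{Z}$ to transport the phase relation to an identity $g = \gamma z^{\alpha_j} f$ on the unit circle, and promote it to an equality of Laurent polynomials by clearing denominators and invoking the infinitude of the circle. Where you diverge is the finishing step: the paper splits into four cases on the sign of $\alpha_j$ and on whether $\deg^{\pm}(f\circ\omega_2)$ vanishes, and derives $|\alpha_j|\le 2\lambda_j$ from degree arithmetic; you instead pick any exponent $t_0$ in the support of $f$, note that $t_0+\alpha_j$ must then be in the support of $g$, and close with the triangle inequality $|\alpha_j|\le |t_0+\alpha_j|+|t_0|\le 2\lambda_j$. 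Your version is shorter and avoids the case split, while delivering the same bound. You are also slightly more careful than the paper on a small point: you explicitly choose $(s,t)$ and $\widehat{\theta}$ so that $\interp{C_1}_{s,t}(\widehat{\theta})\ne 0$, guaranteeing $f\ne 0$ before talking about its support, whereas the paper silently uses the $(0,0)$ entry at $\widehat{\theta}=0$ without verifying nondegeneracy. The one thing you should make explicit if you write this out in full is the elementary step that restriction along $\omega_2$ can only shrink the $z_j$-exponent set, so the bound $[-\lambda_j,\lambda_j]$ from \cref{Thm:PolyMat} survives the slicing; you flag this as the ``main obstacle,'' and it is indeed the only bookkeeping step that needs care, but it is routine.
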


\newcommand{\todo}[1]{{\color{red} #1}}
\begin{proof}
    Let $j \in [k]$.
    It follows from \cref{Thm:AlphaBnd} that $\alpha_j \in \mathbb{Z}$.
    To prove that $|\alpha_j| \le 2\lambda_j$, first define the functions $F = \interp{C_1}_{0,0}$, $f = \left( \pinterp{C_1} \right)_{0,0}$, $G = \interp{C_2}_{0,0}$, and $g = \left( \pinterp{C_1} \right)_{0,0}$.
    Then define the Laurent polynomial $h( x ) = \gamma x^{\alpha_j} f( \omega_2( x ) )$ where $\gamma = e^{-i\beta/2}$.
    Pick any $z \in \mathbb{C}$ such that $|z| = 1$.
    Since $z$ is on the complex unit circle, then there exists some $\rho \in \mathbb{R}$ such that $z = \exp( - i\rho / 2 )$.
    Then by \cref{Thm:PolyAbstract}, $F( \omega_1( \rho ) ) = f( \omega_2( \exp( -i\rho / 2 ) ) ) = f( \omega_2( z ) )$ where $\widehat{\theta} = 0$.
    Then the following equation also holds.
    {\small\begin{equation*}
        h( z )
        =
        \gamma z^{\alpha_j} f( \omega_2( z ) )
        =
        \gamma e^{-i\alpha_j \rho / 2} f( \omega_2( z ) )
        =
        e^{-ip( \omega_1( \rho ) ) / 2} f( \omega_2( z ) )
        =
        e^{-ip( \omega_1( \rho ) ) / 2} F( \omega_1( \rho ) )
    \end{equation*}\par}\noindent%
    Then by \cref{Thm:PolyAbstract},  $G( \omega_2( x ) ) = g( \omega_2( \exp( -i\rho / 2 ) ) ) = g( \omega_2( z ) )$ where $\widehat{\theta} = 0$.
    This combines with the assumption that $\interp{C_2}( \theta ) = e^{ip(\theta)/2} \interp{C_1}( \theta )$ to obtain the following equation.
    {\small\begin{equation*}
        h( z )
        =
        e^{-ip( \omega_1( \rho ) ) } F( \omega_1( \rho ) )
        =
        G( \omega_1( \rho ) )
        =
        g( \omega_2( z ) )
    \end{equation*}\par}\noindent%
    Since $z$ was arbitrary, then $h$ and $g \circ \omega_2$ agree on the complex unit circle.
    Recall from \cref{Cor:PolyDiff} that $\lambda_j = \max \{ \sum_{\alpha \in A(C)} |\alpha_j| : C \in \{ C_1, C_2 \} \}$.
    Then by \cref{Thm:PolyMat}, $\deg_{z_j}^{\pm}( f ) \le \lambda_j$ and $\deg_{z_j}^{\pm}( g ) \le \lambda_j$.
    Since substituting variables for constants can only decrease the degree of a polynomial, then $\deg^{\pm}( f \circ \omega_2 ) \le \lambda_j$ and $\deg^{\pm}( g \circ \omega_2 ) \le \lambda_j$ as well.
    Then the polynomials $x^{\lambda_j} h( x )$ and $x^{\lambda_j} g( \omega_1( x ) )$ have strictly positive degrees and agree on the complex unit circle.
    In other words, every point on the complex unit circle is a root of $x^{\lambda_j}( h( x ) - g( \omega_1( x ) ) )$.
    Since the number of points on the complex unit circle is uncountable, then it follows trivially by~\cref{Thm:Nullstellensatz} that $x^{\lambda_j}( h( x ) - g( \gamma_1( x ) ) )$ is identically zero.
    Since $x^{\lambda_j}$ is not identically zero, then $h( x ) = g( \gamma_1( x ) )$.
    There are for cases to consider.
    \begin{itemize}
    \item Assume that $\alpha_j \ge 0$ and $\deg^{+}( f \circ \omega_2 ) = 0$.
          Then $\lambda_j \ge \deg^{+}( g \circ \omega_1 ) = \alpha_j - \deg^{-}(f \circ \omega_2 ) \ge \alpha_j - \lambda_j$.
          Then $2 \lambda_j \ge \alpha_j = |\alpha_j|$.
    \item Assume that $\alpha_j \ge 0$ and $\deg^{+}( f \circ \omega_2 ) > 0$.
          Then $\lambda_j \ge \deg^{+}( g \circ \omega_1 ) = \alpha_j + \deg^{+}(f \circ \omega_2 ) \ge \alpha_j$.
          Then $\lambda_j \ge \alpha_j = |\alpha_j|$.
    \item Assume that $\alpha_j < 0$ and $\deg^{-}( f \circ \omega_2 ) = 0$.
          Then $\lambda_j \ge \deg^{-}( g \circ \omega_1 ) = -\alpha_j - \deg^{+}(f \circ \omega_2 ) \ge -\alpha_j - \lambda_j$.
          Then $2 \lambda_j \ge -\alpha_j = |\alpha_j|$.
    \item Assume that $\alpha_j < 0$ and $\deg^{-}( f \circ \omega_2 ) > 0$.
          Then $\lambda_j \ge \deg^{-}( g \circ \omega_1 ) = -\alpha_j + \deg^{-}(f \circ \omega_2 ) \ge -\alpha_j$.
          Then $\lambda_j \ge -\alpha_j = |\alpha_j|$.
    \end{itemize}
     In each case, $|\alpha_j| \le 2 \lambda$.
     Since $j$ was arbitrary, then this completes the proof.
\end{proof}

\subsection{Isolating the Linear Phase Terms}

The goal of this section is to isolate the terms $e^{i\beta}$ and each $e^{i(\alpha_j/b_j)\pi}$.
This is relatively easy, since every matrix in $\mathcal{G}$ is injective.
It follows that $\interp{C_1}( \theta )$ is injective for any choice of $\theta$.
In particular, this means that $\interp{C_1}( 0 )$ will always have a non-zero component.

Let $\theta_0 = 0$ and $\theta_j = e_j (\pi/b_j)$ where $e_j$ is the $j$-th standard basis vector for $\mathbb{R}^k$.
Then there exists some $(s, t)$ such that $(\interp{C_1}( \theta_0 ))_{s,t} \ne 0$ and there exists some $(u, v)$ such that $\interp{C_1}( \theta_1 ))_{u,v} \ne 0$.
It then follows by direct computation that the following equations hold.
\begin{align*}
    e^{i\beta}
    &=
    \frac{(\interp{C_2}(\theta_0))_{s,t}}{(\interp{C_1}(\theta_0))_{s,t}}
    &
    e^{i(\alpha_j/b_j)\pi}
    &=
    \frac{(\interp{C_2}(\theta_j))_{u,v}}{e^{i\beta} (\interp{C_1}(\theta_j))_{u,v}}
\end{align*}
Since both $\theta_0$ and $\theta_j$ are rational multiples of $\pi$, then this can be computed exactly in the universal cyclotomic field.

\begin{theorem}
    \label{Thm:InjGate}
    If $\mathcal{G}$ consists of injective matrices and $C_1 \in \Circ( \mathcal{G}, \mathcal{H} )$, then $\interp{C_1}( \theta )$ is injective for each $\theta \in \mathbb{R}^k$.
\end{theorem}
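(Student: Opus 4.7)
The plan is to fix an arbitrary $\theta \in \mathbb{R}^k$ and take as the target predicate $P( C )$: ``$\interp{C}( \theta )$ is an injective linear map.'' I would then establish $P$ on $\Sigma( \mathcal{G}, \mathcal{H} )$ by \cref{Prop:GateInd} and lift it to $\Circ( \mathcal{G}, \mathcal{H} )$ by \cref{Prop:CircInd}. Everything else is standard linear algebra about how injectivity interacts with the three operations that build circuits: direct sum (for controls), Kronecker product (for parallel composition), and matrix multiplication (for sequential composition).

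For the gate induction, the first base case is immediate: if $G \in \mathcal{G}$ then $\interp{G}( \theta ) = G$ is injective by hypothesis. The second base case handles rotations: for any $M \in \mathcal{H}$ and $f \in \mathcal{F}$, the matrix $\interp{R_M( f )}( \theta ) = \cos( -f( \theta )/2 ) I + i \sin( -f( \theta )/2 ) M$ is the exponential $e^{-iM f( \theta )/2}$ of a Hermitian operator, hence unitary, hence injective. For the control induction step, assuming $\interp{G}( \theta )$ is injective and $G$ is unitary on $n$ qubits, $\interp{C( G )}( \theta ) = I_{2^n} \oplus \interp{G}( \theta )$ is block-diagonal with both blocks injective, and a direct sum of injective maps is injective. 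So $P$ holds on all of $\Sigma( \mathcal{G}, \mathcal{H} )$.

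For the circuit induction, the first base case is trivial since $\interp{\epsilon} = I_2$ is injective, and the second base case is exactly what the gate induction just gave us. Sequential induction is the easiest step: if $\interp{C_2}( \theta ) \interp{C_1}( \theta ) x = 0$, then $\interp{C_1}( \theta ) x \in \ker \interp{C_2}( \theta ) = \{ 0 \}$, so $\interp{C_1}( \theta ) x = 0$, and injectivity of $\interp{C_1}( \theta )$ forces $x = 0$. Parallel induction uses the fact that $A \otimes B$ is injective whenever $A$ and $B$ are: this follows, for example, from $\ker( A \otimes B ) = \ker A \otimes V_B + V_A \otimes \ker B$ applied to the nonzero domains, or by choosing bases and observing that the Kronecker product of two matrices with trivial kernels again has trivial kernel.

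The only non-routine point is the rotation case, and even there the ``obstacle'' is merely the observation that the hypothesis on $\mathcal{H}$ (Hermitian unitary axes) already forces each rotation matrix to be unitary, so injectivity is free; nothing needs to be assumed about $\mathcal{H}$ beyond what is built into the definition of the gate set. Consequently the whole argument is a mechanical structural induction, with both inductive principles invoked in the form they are stated.
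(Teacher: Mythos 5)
Your proposal is correct and follows essentially the same route as the paper: both proofs run \cref{Prop:GateInd} and then \cref{Prop:CircInd} with the identical case-by-case observations (hypothesis on $\mathcal{G}$, unitarity of rotations, injectivity of direct sums, Kronecker products, and matrix products). The only cosmetic difference is that you fix $\theta$ before the induction while the paper quantifies over $\theta$ inside each case, which is logically equivalent; one small wording slip is that $e^{-iMf(\theta)/2}$ is the exponential of an \emph{anti}-Hermitian operator (the exponential of a Hermitian operator is positive-definite, not unitary), but your conclusion is of course right.
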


\begin{proof}
    Let $\zpred( - )$ denote the predicate on $\ZCirc( \mathcal{G}, \mathcal{H} )$ such that $\zpred( C )$ if and only if $\interp{C}( \theta )$ is injective for all $\theta \in \mathcal{R}^k$.
    First, the claim is proven for singleton circuits using~\cref{Prop:GateInd}.
    \begin{itemize}
    \item \textbf{Base Case (1)}.
          Let $G \in \mathcal{G}$.
          Let $\theta \in \mathcal{R}^k$
          Then $\interp{G}( \theta ) = G$ with $G$ injective by assumption.
          Since $\theta$ was arbitrary, then $\zpred( G )$.
    \item \textbf{Base Case (2)}.
          Let $M \in \mathcal{H}$ and $p \in \mathcal{F}$.
          Let $\theta \in \mathbb{R}^k$.
          As explained in~\cref{Sect:Circuits}, $\interp{R_M( p )}( \theta ) = \cos( p ( \theta ) ) I  + i \sin( p( \theta ) ) M$ is unitary.
          Since unitary matrices are invertible, then they are injective.
          Then $\interp{G}( \theta )$ is injective.
          Since $\theta$ was arbitrary, then $\zpred( R_M( p ) )$.
    \item \textbf{Control Induction}.
          Let $G \in \Sigma( \mathcal{G}, \mathcal{H} )$.
          Assume that $G$ is a unitary gate on $n$ wires and that $\zpred( G )$ holds.
          Let $\theta \in \mathcal{R}^k$.
          Since $\zpred( G )$ holds, then $\interp{G}( \theta )$ is injective.
          Since $I_{2^n}$ is injective, and the direct sum of injective matrices must be injective, then $\interp{C( G )} = I_{2^n} \oplus \interp{G}$ is injective.
          Since $\theta$ was arbitrary, then $\zpred( C( p ) )$.
    \end{itemize}
    Then by \cref{Prop:GateInd}, $\zpred( G )$ for all $G \in \Sigma( \mathcal{G}, \mathcal{H} )$.
    Next, \cref{Prop:CircInd} is used to prove the claim for all circuits.
    \begin{itemize}
    \item \textbf{Base Case (1)}.
          Let $\theta \in \mathbb{R}^k$.
          Since the identity matrix is injective, then $\interp{\epsilon}( \theta ) = I_2$ is injective.
          Since $\theta$ was arbitrary, then $G( \epsilon )$.
    \item \textbf{Base Case (2)}.
          If $C \in \Sigma( \mathcal{G}, \mathcal{H} )$, then $\zpred( C )$ holds by the first sub-proof.
    \item \textbf{Parallel Induction}.
          Let $G \in \Sigma( \mathcal{G}, \mathcal{H} )$ and $H \in \Sigma( \mathcal{G}, \mathcal{H} )$.
          Assume that $\zpred( C_1 )$ and $\zpred( C_2 )$ holds.
          Let $\theta \in \mathcal{R}^k$.
          Since $\zpred( C_1 )$ holds, then $\interp{C_1}( \theta )$ is injective.
          Since $\zpred( C_2 )$ holds, then $\interp{C_2}( \theta )$ is injective.
          Since the tensor produce of injective matrices must be injective, then $\interp{C_1 // C_2}( \theta ) = \interp{C_1}( \theta ) \otimes \interp{C_2}( \theta )$ is injective.
          Since $\theta$ was arbitrary, then $\zpred( C_1 // C_2 )$ holds.
    \item \textbf{Sequential Induction}.
          Let $C_1 \in \Sigma( \mathcal{G}, \mathcal{H} )$ and $C_2 \in \Sigma( \mathcal{G}, \mathcal{H} )$ with $C_1$ and $C_2$ composable.
          Assume that $\zpred( C_1 )$ and $\zpred( C_2 )$ holds.
          Let $\theta \in \mathcal{R}^k$.
          Since $\zpred( C_1 )$ holds, then $\interp{C_1}( \theta )$ is injective.
          Since $\zpred( C_2 )$ holds, then $\interp{C_2}( \theta )$ is injective.
          Since the tensor produce of injective matrices must be injective, then $\interp{C_1 \circ C_2}( \theta ) = \interp{C_1}( \theta )  \interp{C_2}( \theta )$ is injective.
          Since $\theta$ was arbitrary, then $\zpred( C_1 \circ C_2 )$ holds.
    \end{itemize}
    Then by \cref{Prop:CircInd}, $\zpred( C )$ for all $C \in \Sigma( \mathcal{G}, \mathcal{H} )$.
\end{proof}

\subsection{Computing the Coefficients}

The goal of this section is to recover $\alpha_j$ from $z = e^{i(\alpha_j/b_j)\pi}$.
In general, global phase recovery can be hard, since the global phase need not be a root of unity.
Thanks to~\cref{Cor:AlphaBnd}, this is much easier for integral circuits.
If $d = 2 b_j$, then $z = e^{i(\alpha_j/b_j)\pi} = e^{i(\alpha_j/d)2\pi} = (\zeta_d)^{\alpha_j}$ where $\zeta_d$ is the primitive $d$-th root of unity.
This means that $z \in \mathbb{Q}[ \zeta_d ]$.
Moreover, since $d$ is even, then all roots of unity in $\mathbb{Q}[ \zeta_d ]$ are of degree at most $d$.
Since $\alpha_j / d \in [ -1/2, 1/2 ]$ with $\alpha_j \in \mathbb{Z}$ and $d = 4 \lambda$, then there must exist some $\ell \in \{ -2\lambda, -2\lambda + 1, \ldots, 2\lambda -1, 2\lambda \}$ such that $\alpha_j = \ell$.
To find such an $\ell$, it suffices to search all of the integers from $-2\lambda$ to $2\lambda$ until $( \zeta_d )^\ell = z$.
If no such integer exists, then $C_1$ and $C_2$ do not differ by a global phase.

\subsection{An Algorithmic Summary}

The previous analysis is summarized by the following algorithm, denoted $\FindPhase( C_1, C_2 )$.
\begin{enumerate}
\item Compute $M_1 = \interp{C_1}( \theta_0 )$ and $M_2 = \interp{C_2}( \theta_0 )$.
\item If there exists indices $( s, t )$ such that $( M_1 )_{s,t} \ne 0$ and $( M_2 )_{s,t} \ne 0$, then define the variable $z_0 = ( M_2 )_{s,t} / ( M_1 )_{s,t}$, else return the $( 1, 0 )$.
\item If $|z_0| \ne 1$, then return $( 1, 0 )$.
\item For each $j \in [k]$, compute $\lambda_j = \max\{ \sum_{a \in A( C )} |a_j| : C \in \{ C_1, C_2 \} \}$.
\item For each $j \in [k]$, compute $M_1^j = \interp{C_1}( \theta_j )$ and $M_2^j = \interp{C_2}( \theta_j )$, where $\theta_j = e_j \pi / ( \lambda_j + 1)$.
\item For each $j \in [k]$, if there exists indices $( u, v )$ such that $( M_1^j )_{u,v} \ne 0$ and $( M_2^j )_{u,v} \ne 0$, then define the variable $z_j = ( M_2^j )_{u,v} / ( z_0 ( M_1^j )_{u,v} )$, else return the $( 1, 0 )$.
\item For each $j \in [k]$, if there exists an $\ell \in \{ -2\lambda_j, -2\lambda_j + 1, \ldots, 2\lambda_j -1, 2\lambda_j \}$ such that $\zeta_{4\lambda_j + 4}^\ell = z_j$, then define the variable $\alpha_j = \ell$, else return the $( 1, 0 )$.
\item Return $( 1 / z_0, f )$ where $f( \theta ) = \alpha_1 \theta_1 + \cdots + \alpha_k \theta_k$.
\end{enumerate}
Note that this does not solve for $\beta$ explicitly.
In principle, $\beta$ could be any value.
However, $1 / z_0$ can be used in place of the $e^{i\beta}$ in all further analysis of the programs.

\begin{lemma}
    \label{Lem:FindPhase}
    If $( z, f ) = \FindPhase( C_1, C_2 )$, then there exists a $\beta \in \mathbb{R}$ such that $z = e^{i\beta}$.
\end{lemma}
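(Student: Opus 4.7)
The plan is to prove the lemma by a straightforward case analysis on which branch of $\FindPhase( C_1, C_2 )$ produced the output. Every possible execution path of the algorithm either returns a pair whose first component is the constant $1$, or returns the pair $( 1 / z_0, f )$ computed in the final step. Both cases will be shown to satisfy the claim.

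First I would enumerate the early-return branches in steps 2, 3, 6, and 7 of the algorithm. In each of these, the returned value is $( 1, 0 )$, so the first component is $1$. Since $1 = e^{i \cdot 0}$, the claim holds with $\beta = 0$.

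Next I would analyze the non-trivial branch in which the algorithm reaches step 8 and returns $( 1 / z_0, f )$. The key observation is that $z_0$ was constructed in step 2 as the ratio $( M_2 )_{s,t} / ( M_1 )_{s,t}$ of two nonzero complex numbers, so in particular $z_0$ is a well-defined nonzero complex number. The guard in step 3 ensures that the algorithm only proceeds past this point when $|z_0| = 1$, that is, when $z_0$ lies on the complex unit circle. By the standard polar representation of complex numbers, any $z_0$ on the complex unit circle can be written as $z_0 = e^{i\beta'}$ for some $\beta' \in \mathbb{R}$. Consequently, $z = 1 / z_0 = e^{-i\beta'}$, and the claim holds by taking $\beta = -\beta'$.

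There is no serious obstacle in this proof; the statement is essentially a well-definedness check on the algorithm's output. The only subtle point is to confirm that $z_0$ is indeed well-defined (i.e., the denominator $( M_1 )_{s,t}$ is nonzero), but this is built into the guard of step 2, which only assigns $z_0$ when a pair of indices with both entries nonzero is found. Combining the two cases above completes the proof.
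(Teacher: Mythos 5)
Your proof is correct and takes essentially the same approach as the paper: case analysis on the five return branches, noting that the early returns yield $z = 1$ and that the final return is reached only after the guard at step~3 has verified $|z_0| = 1$. In fact you are slightly more careful than the paper's own proof, which loosely says ``the value of $z$ is set to $z_0$'' when the algorithm actually returns $1/z_0$; your observation that $|1/z_0| = 1$ whenever $|z_0| = 1$ closes that small gap cleanly.
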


\begin{proof}
    Note that there exists a $\beta \in \mathbb{R}$ such that $z = e^{i\beta}$ if and only if $|z| = 1$.
    Then, it suffices to show that $|z| = 1$.
    Note that the algorithm $\FindPhase( -, - )$ will return in one of five possible cases (lines 2, 3, 6, 7, and 8).
    In the first four cases, the value of $z$ is set to $1$, which implies that $|z| = 1$.
    In the fifth case, the value of $z$ is set to $z_0$.
    To reach the fifth return case at line 8, the return statement at line 3 must be bypassed.
    To bypass the return statement at line 3, it must be the case that $|z_0| = 1$, which implies that $|z| = 1$.
    Therefore, $|z| = 1$ in each of the five possible return cases.
\end{proof}

\begin{theorem}
    \label{Thm:FindPhase}
    Assume $\mathcal{G}$ and $\mathcal{H}$ consist of matrices over the universal cyclotomic field, with all gates in $\mathcal{G}$ injective.
    If $C_1 \in \ZCirc( \mathcal{G}, \mathcal{H} )$ is equivalent to $C_2 \in \ZCirc( \mathcal{G}, \mathcal{H} )$ modulo affine rational linear global phase and $( z, f ) = \FindPhase( C_1, C_2 )$, then $\interp{C_1} = z \left( e^{-if( \theta ) / 2} \interp{C_2} \right)$.
\end{theorem}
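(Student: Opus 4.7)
The plan is to unfold the hypothesis of affine rational linear global phase equivalence and then trace the intermediate quantities $z_0$, $z_j$, and $\ell$ inside $\FindPhase$, matching each against the coefficients of the true global phase. By hypothesis, there exist $\alpha \in \mathbb{R}^k$ and $\beta \in \mathbb{R}$ with $\interp{C_2}(\theta) = e^{-ip(\theta)/2}\interp{C_1}(\theta)$, where $p(\theta) = \alpha_1\theta_1 + \cdots + \alpha_k\theta_k + \beta$. Two earlier results sharpen this: \cref{Thm:AlphaBnd} promotes $\alpha$ to $\mathbb{Z}^k$, and \cref{Cor:AlphaBnd} bounds each $|\alpha_j| \le 2\lambda_j$. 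These are precisely the facts the cutoff $8\lambda_j$ of the discrete-log step is designed to exploit.

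I would next verify that none of the trivial return branches (lines 2, 3, 6, 7) is taken. \cref{Thm:InjGate} guarantees that $\interp{C_1}(\theta)$ is injective, hence nonzero as a matrix, for every $\theta$. Since $\interp{C_2}(\theta)$ is $\interp{C_1}(\theta)$ scaled by the nonzero complex number $e^{-ip(\theta)/2}$, the matrices $M_1, M_2$ and $M_1^j, M_2^j$ share identical zero patterns, so common nonzero entries always exist. A direct calculation gives $z_0 = e^{-i\beta/2}$, which lies on the unit circle, so line 3 is bypassed. Substituting $\theta_j = e_j \pi/(2\lambda_j)$ and cancelling the common $e^{-i\beta/2}$ factor yields $z_j = e^{-i\alpha_j\pi/(4\lambda_j)}$, independently of the chosen index $(u,v)$.

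The key step is the discrete logarithm on line 7. Writing $\zeta_{8\lambda_j}^\ell = e^{i\pi\ell/(4\lambda_j)}$, one sees that $\ell = -\alpha_j$ satisfies $\zeta_{8\lambda_j}^\ell = z_j$. Because $|-\alpha_j| \le 2\lambda_j$, this $\ell$ lies in the search window $\{-2\lambda_j,\dots,2\lambda_j\}$, and because that window has diameter $4\lambda_j < 8\lambda_j$, it contains at most one representative of each residue class modulo $8\lambda_j$. Hence the loop finds exactly $\ell = -\alpha_j$, and the recovered $f$ satisfies $f(\theta) = -\sum_{j=1}^{k} \alpha_j \theta_j = \beta - p(\theta)$.

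To finish, I would assemble the returned global phase. The algorithm emits $z = 1/z_0 = e^{i\beta/2}$, so
\[
  z \cdot e^{-if(\theta)/2}
  \;=\; e^{i\beta/2} \cdot e^{-i(\beta - p(\theta))/2}
  \;=\; e^{ip(\theta)/2},
\]
and multiplying the hypothesis $\interp{C_2}(\theta) = e^{-ip(\theta)/2}\interp{C_1}(\theta)$ by $e^{ip(\theta)/2}$ yields $\interp{C_1} = z(e^{-if(\theta)/2}\interp{C_2})$. I expect the main subtle point to be the sign bookkeeping in the discrete-log step: the algorithm really recovers $-\alpha_j$ rather than $\alpha_j$, which is precisely what forces the final answer to invert $z_0$ and to negate the coefficients of $f$. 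Once this sign convention is pinned down, the rest is routine exponential algebra, with \cref{Thm:InjGate} and \cref{Cor:AlphaBnd} supplying the structural facts needed to guarantee that the nonzero-entry search and the discrete-log search both succeed.
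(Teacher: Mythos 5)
Your proposal is correct and follows essentially the same route as the paper's own proof: unfold the phase hypothesis, invoke \cref{Thm:InjGate} to guarantee nonzero entries, trace $z_0=e^{-i\beta/2}$ and $z_j=e^{-i\alpha_j\pi/(4\lambda_j)}$, use \cref{Cor:AlphaBnd} to justify that the discrete-log search window contains the (unique) solution $\ell=-\alpha_j$, and reassemble the phase. The only differences are cosmetic (you name the true coefficients $\alpha_j$ while the paper reserves that symbol for the algorithm's output and uses $x_j$ for the true ones, and you appeal to injectivity of $\interp{C_1}$ plus the nonzero-scalar relation rather than directly to injectivity of $\interp{C_2}$); the sign bookkeeping you flag as the "main subtle point" is exactly what the paper's proof pins down.
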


\begin{proof}
    Since $C_1$ is equivalent to $C_2$ modulo affine linear global phase, then there exists an affine rational linear function $p( \theta ) = x_1 \theta_1 + \cdots + x_k \theta_k + \beta$ such that $\interp{C_1}( \theta ) = e^{ip(\theta)/2} \interp{C_2}( \theta )$ for all $\theta \in \mathbb{R}^k$.
    Since all gates in $\mathcal{G}$ are injective, then by \cref{Thm:InjGate}, the matrix $\interp{C_2}( \theta )$ is also injective for all angles $\theta \in \mathbb{R}^k$.
    In particular, $M_2 = \interp{C_2}( \theta_0 )$ is injective and the following equation holds.
    {\small\begin{equation*}
        M_1
        =
        \interp{C_1}( \theta_0 )
        =
        e^{ip(\theta_0) / 2} \interp{C_2}( \theta_0 )
        =
        e^{i\beta / 2} \interp{C_2}( \theta_0 )
        =
        e^{i\beta / 2} M_2
    \end{equation*}\par}\noindent%
    Since $M_2$ is injective, then there exists indices $( s, t )$ such that $( M_2 )_{s,t} \ne 0 $.
    Since $e^{i\beta / 2} \ne 0$, then $( M_1 )_{s,t} = e^{i\beta/2} ( M_2 )_{s,t} \ne 0$.
    It follows that $z_0 = ( M_2 )_{s,t} / ( M_1 )_{s,t} = e^{-i\beta / 2}$.
    Next, let $j \in [k]$.
    Since $\interp{C_2}( \theta )$ is also injective for all angles $\theta \in \mathbb{R}^k$, then in particular, $M_2^j = \interp{C_2}( \theta_j )$ is injective.
    Moreover, since $p( \theta_j ) = x_j \pi / ( \lambda_j + 1 ) + \beta$, then the following equation holds.
    {\small\begin{equation*}
        M_1^j
        =
        \interp{C_1}( \theta_j )
        =
        e^{ip(\theta_j) / 2} \interp{C_2}( \theta_j )
        =
        \left( e^{i x_j \pi / ( 2\lambda_j + 2 )} \interp{C_2}( \theta_j ) \right) / z_0
        =
        \left( e^{i x_j \pi / ( 2\lambda_j + 2 )} M_2 \right) / z_0
    \end{equation*}\par}\noindent%
    Since $M_2^j$ is injective, then there exists some indices $( u, v )$ such that $( M_2^j )_{u,v} \ne 0$.
    Since $z_0 \ne 0$ and $e^{i x_j \pi / ( 2\lambda_j + 2 )} \ne 0$, then $( M_1^j )_{u,v} = z_0 e^{i x_j \pi / ( 2\lambda_j + 2 )} ( M_2^j )_{u,v} \ne 0$.
    It follows that $z_j = ( M_2^j )_{u,v} / ( z_0 M_1^j )_{u,v} = e^{-ix_j \pi / ( 2\lambda_j + 2 )}$.
    By~\cref{Cor:AlphaBnd}, $x_j \in \mathbb{Z}$ and $|x_j| \le 2\lambda_j$.
    It follows that $x_j \in \{ -2 \lambda_j, -2 \lambda_j + 1, \ldots, 2 \lambda_j - 1, 2 \lambda_j \}$.
    It must now be shown that there exists a unique value $\ell \in \{ -2 \lambda_j, -2 \lambda_j + 1, \ldots, 2 \lambda_j - 1, 2 \lambda_j \}$ such that $\zeta_{(4\lambda_j + 4)}^\ell = \left( e^{i\pi/(2\lambda_j + 2)} \right)^\ell = z_j$.
    If such an $\ell$ did exist, then it must be unique, because $\ell \mapsto \zeta_{(4\lambda_j + 4)}^\ell$ is bijective on $[ -2\lambda_j, 2\lambda_j ]$.
    It remains to be shown that $\ell$ exists.
    If $\ell = -x_j$, then $\zeta_{(4\lambda_j + 4)}^\ell = \left( e^{i\pi/(2\lambda_j + 2)} \right)^{-x_j} = z_j$.
    Consequently, $\ell$ exists and $\alpha_j = -x_j$.
    Since $j$ was arbitrary, then $\FindPhase( C_1, C_2 ) = ( 1 / z_0, f )$ where $f( \theta ) = \alpha_1 \theta_1 + \cdots + \alpha_k \theta_k$.
    Since $z e^{-if(\theta) / 2} = e^{-i(\alpha_1 \theta_1 + \cdots + \alpha_k \theta_k - \beta) / 2} = e^{ip(\theta)/2}$, then in conclusion $\interp{C_1} = z \left( e^{-if( \theta ) / 2} \interp{C_2} \right)$.
\end{proof}

\subsection{Proof of Theorem~\ref{Thm:GPhase}}

\gphase*
\begin{proof}
    By definition, $\interp{R_I( f )}( \theta ) = \cos( -f( \theta ) / 2 ) I + i\sin( -f( \theta ) / 2 ) I = e^{-if( \theta ) / 2} I$.
    There are two cases to consider.
    \begin{itemize}
    \item Assume that $C_1$ is equivalent to $C_2$ modulo affine rational linear global phase.
          Then by \cref{Thm:FindPhase}, $\interp{C_1} = z \left( e^{-if( \theta ) / 2 } \interp{C_2} \right) = \interp{z I} \interp{R_I( f )} \interp{C_2} = \interp{z I \circ R_I( f ) \circ C_2}$.
    \item Assume that $C_1$ is not equivalent to $C_2$ modulo affine rational linear global phase.
          Then by definition, $\interp{z I \circ R_I( f ) \circ C_2} = \interp{z I} \interp{R_I( f )} \interp{C_2} = z e^{-if( \theta ) / 2} \interp{C_2}$.
          By \cref{Lem:FindPhase}, there exists some $\beta \in \mathbb{R}$ such that $z = e^{i\beta}$.
          Then $\interp{z I \circ R_I( f ) \circ C_2} = e^{-i( f(\theta) - 2\beta ) / 2}$
          Since $C_1$ is not equivalent to $C_2$ modulo affine rational linear global phase, then in particular, $\interp{C_1} \ne e^{-i( f( \theta ) - 2\beta ) / 2} \interp{C_2} = \interp{z I \circ R_I( f ) \circ C_2}$.
    \end{itemize}
    This completes the proof.
\end{proof}

\section{Proof of Theorem~\ref{Thm:MinDenom}}

\newcommand{\numerator}{\operatorname{Numerator}}
The section establishes~\cref{Thm:MinDenom}.
The idea of the proof is fairly simple.
First, let $d = \lcm\{ \denom( s ) : s \in S \}$.
Then for each element $s \in S$, there will be a unique numerator $x_s \in \mathbb{Z}$ such that $x_s / d = s$.
Then the size of $S$ will be bounded by the number of unique numerators within the given interval.
To this end, define $\numerator( s, d )$ to be the unique integer $x_s$ such that $x_s / d = s$.

\begin{lemma}
    \label{Lem:MinDenom}
    If $k \in \mathbb{K}$, $S \subseteq [0,k) \cap \mathbb{Q}$, and $d = \lcm\{ \denom( s ) : s \in S \}$, then $|S| = |X|$ where $X = \{ \numerator( s, d ) : s \in S \}$.
\end{lemma}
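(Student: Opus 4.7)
The plan is to exhibit an explicit bijection between $S$ and $X$ via the map $\phi: S \to X$ defined by $\phi(s) = \numerator(s, d)$. Once $\phi$ is shown to be a bijection, the cardinalities must agree.

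First, I would verify that $\phi$ is well-defined, which requires checking that $\numerator(s,d)$ exists for every $s \in S$. Fix $s \in S$ and write $s = a / b$ with $b = \denom(s)$. Since $d = \lcm\{ \denom(r) : r \in S \}$, the integer $b$ divides $d$, so there exists some integer $q$ with $d = qb$. Setting $x_s = a q$ yields $x_s / d = (aq)/(qb) = a/b = s$, so $x_s \in \mathbb{Z}$ satisfies $x_s/d = s$. Moreover, such $x_s$ is unique: if $y \in \mathbb{Z}$ also satisfies $y/d = s$, then $y = sd = x_s$. Thus $\numerator(s,d)$ is well-defined, and $\phi$ is a well-defined function from $S$ into $\mathbb{Z}$ whose image is precisely $X$ by definition. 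Surjectivity of $\phi: S \to X$ is then immediate.

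The only nontrivial part is injectivity. Suppose $s_1, s_2 \in S$ satisfy $\phi(s_1) = \phi(s_2)$, i.e., $\numerator(s_1, d) = \numerator(s_2, d)$. Dividing both sides by the nonzero integer $d$ yields $s_1 = \numerator(s_1,d)/d = \numerator(s_2,d)/d = s_2$. Hence $\phi$ is injective, and therefore a bijection between $S$ and $X$, giving $|S| = |X|$.

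The proof is essentially bookkeeping, and the main (minor) obstacle is simply unpacking the definition of $\numerator(-,d)$ to confirm it is well-defined on all of $S$; this relies crucially on $d$ being the least common multiple, so that every denominator appearing in $S$ divides $d$. No deeper number-theoretic content is required at this step, since the sharper counting estimate (bounding $|X|$ by the size of $\{0, 1, \ldots, kd-1\}$ to obtain $d \ge \lceil b/k \rceil$) is reserved for the proof of \cref{Thm:MinDenom} itself.
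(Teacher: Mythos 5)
Your proof is correct and takes essentially the same route as the paper's: exhibit $s \mapsto \numerator(s,d)$ as a bijection onto $X$, with injectivity following by dividing back by $d$. The only difference is that you additionally check well-definedness of $\numerator(-,d)$ via divisibility of denominators into the lcm, a step the paper leaves implicit.
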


\begin{proof}
    Let $f: S \to \mathbb{Z}$ such that $f( s ) = \numerator( s, d )$.
    Then $X = f( S )$ and $|X| \le |S|$.
    Then it suffices to show that $f$ is injective.
    Let $s \in S$ and $t \in S$.
    Assume that $f( s ) = f( s )$.
    By definition of $f$, $s = f( s ) / d$ and $t = f( t )$.
    Then $s = t$.
    Since $s$ and $t$ were arbitrary, then $f$ is injective.
    Then $|X| \ge |S|$.
    In conclusion, $|S| = |X|$.
\end{proof}

\mindenom*
\begin{proof}
    Define $d = \lcm\{ \denom( s ) : s \in S \}$ and $X = \{ \numerator( s, d ) : s \in S \}$.
    Let $x \in X$.
    Assume for the intent of contradiction that $x \not \in \{ 0, 1, \ldots, dk - 1 \}$.
    Since $x$ is an integer, then there are two cases to consider.
    \begin{enumerate}
    \item Assume that $x$ is negative.
          Then there exists an $s \in S$ such that $s = x / d$.
          Since $d$ is positive, then $s$ is negative.
          However, $S \subseteq [0, 4)$, so $s$ is positive.
          By contradiction, $s$ is not negative.
    \item Assume $x \ge dk$.
          Then there exists an $s \in S$ such that $s = x / d$.
          Since $x \ge dk$, then $s \ge k$.
          However, $S \subseteq [0, 4)$, so $s < 4$.
          By contradiction, $s < dk$.
    \end{enumerate}
    Since $x$ was arbitrary, then $X \subseteq \{ 0, 1, \ldots, dk - 1 \}$.
    Then $|X| < dk$.
    Then by \cref{Lem:MinDenom}, $|S| = |X| < dk$.
    Then $b / k \le d$.
    Since $d$ is an integer, then $\ceil{b / k} \le d$.
\end{proof}

\end{document}